\documentclass{amsart}[11pt]
\usepackage{amsmath}
\usepackage{amssymb}
\usepackage{amsthm}
\usepackage{mathrsfs}
\usepackage[margin=1in]{geometry}
\usepackage{color}
\usepackage{graphicx}
\usepackage{amsfonts}
\usepackage{hyperref}
\usepackage{enumerate}
\usepackage{xcolor}

\newtheorem{theorem}{Theorem}[section]
\newtheorem{corollary}[theorem]{Corollary}
\newtheorem{lemma}[theorem]{Lemma}
\newtheorem{proposition}[theorem]{Proposition}

\theoremstyle{definition}
\newtheorem{definition}[theorem]{Definition}
\newtheorem{remark}[theorem]{Remark}
\newtheorem{assumption}[theorem]{Assumption}

\newtheorem{example}[theorem]{Example}

\newcommand{\ind}{1\hspace{-2.1mm}{1}}

\newcommand{\RR}{\mathbb{R}}
\newcommand{\PP}{\mathbb{P}}
\newcommand{\EE}{\mathbb{E}}

\newcommand{\Vv}{\mathcal{V}}
\newcommand{\Cc}{\mathcal{C}}

\newcommand{\XX}{\mathbb{X}}
\newcommand{\Ll}{\mathcal{L}}
\newcommand{\Oo}{\mathcal{O}}
\newcommand{\Yy}{\mathcal{Y}}
\newcommand{\Zz}{\mathcal{Z}}

\newcommand{\Di}{\mathrm{D}}
\newcommand{\D}{\mathrm{d}}

\newcommand{\E}{\mathrm{e}}

\newcommand{\QQ}{\mathbb{Q}}

\newcommand{\loc}{\mathrm{L}}

\newcommand{\ar}{\mathrm{a}}
\newcommand{\bb}{\mathrm{b}}
\newcommand{\xx}{\mathrm{x}}

\newcommand{\Ff}{\mathcal{F}}

\newcommand{\rrho}{\overline{\rho}}
\newcommand{\eps}{\varepsilon}
\newcommand{\be}{\begin{equation}}
\newcommand{\ee}{\end{equation}}

\newcommand{\T}{\mathsf{T}}

\begin{document}

\title{Pathwise moderate deviations for option pricing}
\author{Antoine Jacquier}
\address{Department of Mathematics, Imperial College London and Baruch College, CUNY}
\email{a.jacquier@imperial.ac.uk}
\author{Konstantinos Spiliopoulos}
\address{Department of Mathematics and Statistics, Boston University}
\email{kspiliop@math.bu.edu}
\thanks{K.S. was partially supported by NSF DMS 1550918.}

\date{\today}
\maketitle

\begin{abstract}
We provide a unifying treatment of pathwise moderate deviations for models commonly used in financial applications,
and for related integrated functionals.
Suitable scaling allows us to transfer these results into small-time, large-time and tail asymptotics
for diffusions, as well as for option prices and realised variances.
In passing, we highlight some  intuitive relationships between moderate deviations rate functions and their large deviations counterparts;
these turn out to be useful for numerical purposes, as large deviations rate functions are often difficult to compute.
\end{abstract}
\section{Introduction}

We develop a unifying framework for pathwise moderate deviations of (multiscale) It\^o diffusions,
with applications to small-time, large-time and tail asymptotics for the diffusions and related integrated functionals.
The original motivation is a pathwise extension of the moderate deviations proved in~\cite{FGP}
in the context of small-time option pricing only.
More specifically, we consider a generic two-dimensional stochastic volatility model $\XX = (X,Y)$ of the form
\begin{equation}\label{eq:MainIntro}
\begin{array}{rl}
\D X_t & = \displaystyle -\frac{1}{2}\sigma^2(Y_t) \D t + \sigma(Y_t)\D W_t,\\
\D Y_t & = f(Y_t) \D t + g(Y_t)\D Z_t,\\
\D\langle W, Z\rangle_t & = \rho \D t,
\end{array}
\end{equation}
with starting point $(X_0, Y_0) = (x_0, y_0)$.
For $0<\eps,\delta\leq 1$, the rescaled process~$\XX^\eps$ defined by
\begin{equation}\label{Eq:TimeSpaceTransformationIntro}
\XX^\eps_t := (X^\eps_t, Y^\eps_t) := \left(\delta X_{\frac{\eps}{\delta^{2}}t}, Y_{\frac{\eps}{\delta^{2}}t}\right),
\end{equation}
is a solution of the multiscale system
\begin{equation}\label{eq:MainIntroEps}
\begin{array}{rl}
\D X^{\eps}_t & = \displaystyle -\frac{1}{2}\frac{\eps}{\delta}\sigma^2(Y^{\eps}_t) \D t + \sqrt{\eps}\sigma(Y^{\eps}_t)\D W_t,\\
\D Y^{\eps}_t & = \displaystyle \frac{\eps}{\delta^{2}}f(Y^{\eps}_t) \D t + \frac{\sqrt{\eps}}{\delta}g(Y^{\eps}_t)\D Z_t,
\end{array}
\end{equation}
starting from $\XX^\eps_0 = (\delta x_0, y_0)$, again with $\D\langle W, Z\rangle_t = \rho \D t$.
One reasonably expects that if $\delta=1$, then small~$\eps$ asymptotics for~\eqref{eq:MainIntroEps}
correspond to small-time asymptotics for~\eqref{eq:MainIntro}.
Equivalently, if we allow both~$\eps$ and~$\delta$ to tend to zero such that $\lim\frac{\eps}{\delta}=\gamma\in(0,\infty)$,
then small $(\eps,\delta)$ asymptotics for~\eqref{eq:MainIntroEps} correspond to scaled large-time asymptotics for~\eqref{eq:MainIntro}.
Hence, one can describe small- and large-time asymptotics for~\eqref{eq:MainIntro} via small $\eps$ asymptotics for~\eqref{eq:MainIntroEps} by appropriately choosing $\delta$.
Such asymptotic results can then by translated to corresponding estimates for option pricing,
a quantity of interest in mathematical finance.
In addition, one more quantity of interest in financial applications is related to asymptotics of integrated functionals of the form $\int_{0}^{t}H(X^{\eps}_{s},Y^{\eps}_{s})\D s$ for appropriate functions~$H$.
In particular, $H(x,y)=y$ corresponds to the so-called realised variance.
The focus of this paper is to quantify such approximations via the lens of moderate deviations (MD),
and we shall analyse three situations.

First, considering $\overline{\XX}_{t}:=\lim_{\eps\downarrow 0} \XX^{\eps}_{t}$ (in some appropriate sense)
and a positive function $h(\cdot)$ such that $\lim_{\eps\downarrow 0}h(\eps)=\infty$ with $\lim_{\eps\downarrow 0}\sqrt{\eps} h(\eps)=0$, we are interested in the large deviations properties of
$$
\eta^{\eps}_{t} := \frac{\XX^{\eps}_{t}-\overline{\XX}_{t}}{\sqrt{\eps}h(\eps)},
$$
or equivalently moderate deviations properties for $\XX^{\eps}$,
and, as a corollary, for~$X^\eps$,
and their consequences on estimates for option prices on~$\E^{X}$.
Moderate deviations for~$X^{\eps}$ have been developed in the literature both in the $\delta=1$
case~\cite{BaierFreidlin, Freidlin1978}, and in the $\eps,\delta\downarrow 0$ case~\cite{BaierFreidlin, Freidlin1978,G03,MS};
we make here precise the connection of those results to small- and large-time moderate deviations results respectively for~\eqref{eq:MainIntro},
and provide exact consequences for option pricing.
The mathematical finance literature has seen an explosion of applications of large deviations
over the past decade~\cite{DFJV1, DFJV2, FFK12, FordeJac, FGGJT, JKRM, Robertson},
but moderate deviations have only recently been introduced by~\cite{FGP}.
However, \cite{FGP} only considers small-time behaviours, which are not pathwise
and are based on assuming that~$X$ admits a continuous density expansion at all times with a specific behaviour for small times. Our methodology here allows us to derive pathwise results without the need for such assumptions.

Second, we obtain moderate deviations principle for some integrated functionals,
more precisely large deviations for quantities of the form
$$
R^{\eps} := \frac{1}{\sqrt{\eps}h(\eps)}\int_{0}^{\cdot}\left(H(X^{\eps}_{s},Y^{\eps}_{s})-\overline{H}(X^{\eps}_{s})\right)\D s,
$$
for some function~$H$, where~$\overline{H}$ centers~$H$
around its mean behaviour with respect to the invariant distribution of the process~$Y$.
We also allow the coefficients of the SDE~\eqref{eq:MainIntro} to depend on both $(x,y)$ and allow certain growth with respect to $y$ (namely we do not impose global boundedness assumptions). This constitutes the main theoretical result of this paper, presented in Section~\ref{S:MDPIntegratedFunctionals}
and proven in Sections~\ref{S:ControlRep}-\ref{S:ProofIntegratedFunctMDP}.
For example, when $H(x,y)=y$, this corresponds to asymptotics of the realised variance and yields estimates for options thereon.
Related moderate deviations results have also been established in~\cite{GuillinLipster2005},
but with~$H$ uniformly bounded, which is insufficient for many applications.
Our proof is based on the weak convergence approach of~\cite{DE97} through stochastic control representations,
which then allow us to consider growth in the coefficients of both the underlying SDE models and on~$H$.

Finally, we show how to apply the pathwise moderate deviations to models used in quantitative finance,
and derive asymptotics for option prices.
In passing, we obtain interesting connections between the large deviations and the moderate deviations rate functions.
The latter in fact characterises the local curvature of the large deviations rate function around its minimum.
Even though this is to be intuitively expected, it shows that one can use moderate deviations to get an approximation to large deviations asymptotics,
and is useful in practice as the moderate deviations rate function is usually available in closed form as a quadratic,
whereas the large deviations rate function is often not explicit.
Lastly, we mention that even though the results in this paper are mostly presented for $\XX\in\RR^2$,
the actual theorems hold in any finite dimension, as in~\cite{BaierFreidlin,ChiariniFischer2014, Freidlin1978,MS}.
We restrict ourselves here to this setting, solely because of the financial motivation.

The rest of the paper is organised as follows.
In Section~\ref{S:SmallTimeMDP} we review the small-noise pathwise moderate deviation results from the literature,
and connect them to small-time moderate deviations asymptotics for financial models.
In Section~\ref{S:LargeTimeMDP} we review the small-noise moderate deviations
for slow-fast systems~\eqref{eq:MainIntroEps}, with $\lim\frac{\eps}{\delta}=\gamma\in(0,\infty)$,
and connect them to the corresponding large deviations estimates. 
In Section~\ref{S:MDPIntegratedFunctionals} we present our main new theoretical result, on moderate deviations for integrated functionals.
Section~\ref{S:FinancialApplications} contains the financial implications of our results on option price asymptotics,
Sections~\ref{S:ControlRep}-\ref{S:ProofIntegratedFunctMDP} gather the proof of the moderate deviations for integrated functionals,
while Appendices~\ref{S:Regularity} and~\ref{S:PrelResControlledProcesses}
state some results of independent interest, used throughout the paper.

\section{Pathwise moderate deviations for small-noise diffusions}\label{S:SmallTimeMDP}
We first consider pathwise moderate deviations for small-noise It\^o diffusions, following~\cite{BaierFreidlin, Freidlin1978,MS}.
By suitable rescaling, this allows us to unify and generalise several recent results
in the mathematical finance literature on small-time moderate deviations, in particular those of~\cite{FGP}. However, before proceeding let us recall the definition of a large deviations principle.
\begin{definition}\label{Def:LDP}
Let $\mathcal{P}$ be a Polish space and $\mathbb{P}$ be a probability measure on $(\mathcal{P},\mathcal{B}(\mathcal{P}))$, we say that a collection $(Z^{\eps})_{\eps\in (0,1)}$
of $\mathcal{P}$-valued random variables has a large deviations principle with rate function $S:\mathcal{P}\to [0,\infty]$ and speed $1/\eps$ if
\begin{enumerate}
\item For each $M\geq 0$, the set $\Phi(M) = \left\{ \xi\in \mathcal{P}: S(\xi)\leq M\right\}$
is a compact subset of $\mathcal{P}$.
\item For every open $G\subset \mathcal{P}$,
\begin{equation*}
\liminf_{\eps\downarrow 0}\eps\ln \mathbb{P}\left(Z^{\eps}\in G\right) \geq -\inf_{\xi\in G} S(\xi)
\end{equation*}
\item For every closed $F\subset \mathcal{P}$,
\begin{equation*}
\limsup_{\eps\downarrow 0}\eps\ln \mathbb{P}\left( Z^{\eps}\in F\right)\leq -\inf_{x\in F} S(\xi).
\end{equation*}
\end{enumerate}
\end{definition}

 A fact that we will heavily use in this paper is that Definition~\ref{Def:LDP} implies that for a regular set $\Gamma\subset \mathcal{B}(\mathcal{P})$, i.e.~$\Gamma$ is such that the infimum of~$S$
 over the closure~$\text{cl}(\Gamma)$ is the same as the infimum of~$S$ over the interior~$\Gamma^{o}$,  we have that
\[
\lim_{\eps\downarrow 0}\mathbb{P}\left(Z^{\eps}\in \Gamma\right)=-\inf_{\xi\in\text{cl}(\Gamma)}S(\xi).
\]

Consider now the $\RR^d$-valued SDE
\begin{equation}\label{Eq:GeneralSDEModel}
\D \XX_{t}^{\eps} = \bb_{\eps}(t,\XX_{t}^{\eps})\D t+\sqrt{\eps}\ar_{\eps}(t,\XX_{t}^{\eps})\D W_{t},
\qquad \XX_{0}^{\eps} = \xx_{0}\in\RR^d,
\end{equation}
where for every $\eps\in(0,1)$, $\bb_{\eps}:\RR_{+}\times\RR^{d}\to\RR^{d}$,
$\ar_{\eps}:\RR_{+}\times\RR^{d}\to \mathcal{M}^{d\times m}$
(the space of real-valued $d\times m$ matrices)
 and~$W$ is a standard multidimensional Brownian motion taking values in $\RR^{m}$ defined on an appropriately filtered probability space.
 We shall also fix some arbitrary time horizon $T>0$.
We will be working with diffusions of the form~\eqref{Eq:GeneralSDEModel} that have unique strong solutions,
and that satisfy the following assumption:
\begin{assumption}\label{A:MainGeneralAssumption}
\
\begin{enumerate}[(i)]
\item The coefficients $\bb_{\eps}$, $\ar_{\eps}$ as well as $\Di\bb_{\eps}$ converge as~$\eps$ tends to zero to some functions $\bb$, $\ar$ and $\Di\bb$ respectively, uniformly on bounded subsets of $\RR_{+}\times\RR^{d}$.
\item
The coefficients $\bb_{\eps},\bb\in \mathcal{C}^{0,1}$, $\ar_{\eps},\ar$ are locally Lipschitz in $x$ uniformly in $\eps\in(0,1)$ and $t\in[0,T]$.
\item There exists $C>0$ such that for all $t\in[0,T]$ and $x\in \RR^{d}$,
\[
\max\left\{\text{Tr}\left[\ar_{\eps} \ar_{\eps}^{\T}(t,x)\right], \langle x,\bb_{\eps}(t,x)\rangle\right\},\max\left\{\text{Tr}\left[\ar \ar^{\T}(t,x)\right], \langle x,\bb(t,x)\rangle\right\}
 \leq C \left(1+|x|^{2}\right),
\]
\end{enumerate}
\end{assumption}

Under Assumption~\ref{A:MainGeneralAssumption}, the SDE~\eqref{Eq:GeneralSDEModel}
has a unique non-explosive solution and $\sup_{t\in[0,T]}|\XX^{\eps}_{t}-\overline{\XX}_{t}|$
converges to zero in probability as~$\eps$ tends to zero,
where~$\overline{\XX}$ satisfies $\overline{\XX}_{t} = x_{0} + \int_{0}^{t}\bb(s,\overline{\XX}_{s})\D s$.

Let now~$h(\cdot)$ be a continuous function on~$(0,1]$, diverging to infinity at the origin,
with $\lim_{\eps\downarrow 0}\sqrt{\eps} h(\eps) = 0$.
Define the moderate deviations process~$\eta^\eps$ pathwise as
\begin{equation} \label{E:deviations}
\eta_t^\eps := \frac{\XX_t^\eps - \overline{\XX}_t}{\sqrt{\eps} h(\eps)},
\qquad\text{for all }t\geq 0.
\end{equation}

We are interested in the moderate deviations principle (MDP) for
$\{\XX^{\eps},\eps>0\}$ in $\Cc\left([0,T];\RR^{d}\right)$,
which amounts to establishing the large deviations principle (LDP)
for $\{\eta^{\eps},\eps>0\}$ in $\Cc\left([0,T];\RR^{d}\right)$.
The following theorem is in the spirit of~\cite{BaierFreidlin,ChiariniFischer2014, Freidlin1978,MS},
and follows by arguments similar to those in~\cite{ChiariniFischer2014, MS}.

\begin{theorem}\label{T:GeneralResult}
Under Assumption~\ref{A:MainGeneralAssumption}, the family $\{\eta^{\eps},\eps>0\}$
satisfies the LDP (or $\{\XX^{\eps},\eps>0\}$ satisfies the MDP) in~$\Cc([0,T];\RR^{d})$
with speed~$h^{2}(\eps)$ and rate function
$$
S(\xi) =
\inf\left\{
\frac{1}{2}\int_{0}^{T}|u_{s}|^{2}\D s:
u\in L^{2}\left([0,T];\RR^{d}\right),
\xi=\int_{0}^{\cdot}\left[\Di\bb(s,\overline{\XX}_{s}) \xi_{s}+\ar(s,\overline{\XX}_{s})u_{s}\right]\D s\right\}.
$$
\end{theorem}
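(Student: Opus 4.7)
The plan is to linearise the SDE for $\eta^\eps$ around the deterministic curve $\overline{\XX}$ and reduce the MDP to Schilder's theorem via the contraction principle, after controlling the nonlinear remainder on the super-exponential scale $h^2(\eps)$. Subtracting the ODE for $\overline{\XX}$ from~\eqref{Eq:GeneralSDEModel} and dividing by $\sqrt{\eps} h(\eps)$ gives
\begin{equation*}
\eta^\eps_t = \int_0^t \frac{\bb_\eps(s, \XX^\eps_s) - \bb(s, \overline{\XX}_s)}{\sqrt{\eps} h(\eps)} \D s + \frac{1}{h(\eps)} \int_0^t \ar_\eps(s, \XX^\eps_s) \D W_s.
\end{equation*}
Splitting the drift as $[\bb_\eps - \bb](s, \XX^\eps_s) + [\bb(s, \XX^\eps_s) - \bb(s, \overline{\XX}_s)]$ and Taylor-expanding the second increment as
\begin{equation*}
\bb(s, \XX^\eps_s) - \bb(s, \overline{\XX}_s) = \sqrt{\eps} h(\eps) A^\eps_s \eta^\eps_s, \quad A^\eps_s := \int_0^1 \Di\bb\left(s, \overline{\XX}_s + \theta \sqrt{\eps} h(\eps) \eta^\eps_s\right) \D \theta,
\end{equation*}
isolates the expected linear skeleton $A^\eps_s \to \Di\bb(s, \overline{\XX}_s)$, together with a vanishing coefficient mismatch courtesy of Assumption~\ref{A:MainGeneralAssumption}(i).

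This motivates comparing $\eta^\eps$ to the linear Gaussian process $\tilde\eta^\eps$ solving
\begin{equation*}
\tilde\eta^\eps_t = \int_0^t \Di\bb(s, \overline{\XX}_s) \tilde\eta^\eps_s \D s + \frac{1}{h(\eps)} \int_0^t \ar(s, \overline{\XX}_s) \D W_s.
\end{equation*}
The scaled Brownian motion $h^{-1}(\eps) W$ satisfies Schilder's LDP in $\Cc([0,T]; \RR^m)$ at speed $h^2(\eps)$ with Cameron-Martin rate function $\phi \mapsto \frac{1}{2} \int_0^T |\dot\phi_s|^2 \D s$ for absolutely continuous $\phi$, and the affine map sending $\phi$ to the unique solution $\xi^\phi$ of $\xi^\phi_t = \int_0^t \Di\bb(s, \overline{\XX}_s) \xi^\phi_s \D s + \int_0^t \ar(s, \overline{\XX}_s) \D \phi_s$ is continuous from $\Cc([0,T]; \RR^m)$ to $\Cc([0,T]; \RR^d)$ on the Cameron-Martin subspace. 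The contraction principle therefore delivers the LDP for $\tilde\eta^\eps$ at speed $h^2(\eps)$ with precisely the rate function $S$ stated in the theorem (after the relabelling $u_s = \dot\phi_s$).

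It then remains to show that $\eta^\eps$ and $\tilde\eta^\eps$ are exponentially equivalent at speed $h^2(\eps)$, i.e.~$\limsup_{\eps \downarrow 0} h^{-2}(\eps) \ln \PP\bigl(\sup_{t \leq T} |\eta^\eps_t - \tilde\eta^\eps_t| > \delta\bigr) = -\infty$ for every $\delta > 0$. The difference satisfies an inhomogeneous linear SDE forced by three small terms: the Taylor remainder $[A^\eps_s - \Di\bb(s, \overline{\XX}_s)] \eta^\eps_s$, of formal order $\sqrt{\eps} h(\eps) |\eta^\eps_s|^2$; the drift discrepancy $(\bb_\eps - \bb)/(\sqrt{\eps} h(\eps))$, which vanishes uniformly on bounded sets by Assumption~\ref{A:MainGeneralAssumption}(i); and the analogous mismatch between $\ar_\eps(s, \XX^\eps_s)$ and $\ar(s, \overline{\XX}_s)$ inside the stochastic integral. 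The main obstacle lies in closing the Gronwall loop on the quadratic remainder: this requires uniform super-exponential tail bounds of the form $\PP\bigl(\sup_{t \leq T} |\eta^\eps_t| \geq R\bigr) \leq C \E^{-c R^2 h^2(\eps)}$ valid for $R$ large enough, which I would obtain through a localising stopping-time argument combined with exponential martingale (Bernstein-type) inequalities, exploiting the linear growth bound in Assumption~\ref{A:MainGeneralAssumption}(iii). This is precisely the technical machinery developed in~\cite{ChiariniFischer2014, MS}, and once in hand it converts the qualitative convergence $A^\eps_s \to \Di\bb(s, \overline{\XX}_s)$ into an exponential-equivalence statement and closes the proof.
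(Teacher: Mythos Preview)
The paper does not give its own proof of this theorem; it simply records that the result ``follows by arguments similar to those in~\cite{ChiariniFischer2014, MS}''. Your outline is the classical Freidlin--Wentzell / \cite{ChiariniFischer2014} route (linearise around $\overline{\XX}$, identify the Gaussian skeleton via Schilder plus contraction, then prove exponential equivalence), whereas \cite{MS} obtains the same conclusion through the weak convergence / Laplace principle machinery of~\cite{DE97}. Both strategies are standard and lead to the stated rate function.

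There is, however, a genuine gap in your argument. You assert that the drift discrepancy $(\bb_\eps - \bb)/(\sqrt{\eps}\,h(\eps))$ ``vanishes uniformly on bounded sets by Assumption~\ref{A:MainGeneralAssumption}(i)''. That assumption only gives $\bb_\eps \to \bb$ uniformly on bounded sets, with no rate; dividing by $\sqrt{\eps}\,h(\eps)\to 0$ destroys the conclusion. In fact a rate of this type is \emph{necessary} for the stated rate function: take $d=m=1$, $\ar_\eps\equiv 1$ and $\bb_\eps\equiv c(\eps)$ a constant with $c(\eps)\to 0$ but $c(\eps)/(\sqrt{\eps}\,h(\eps))\to c_0\neq 0$. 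Then $\eta^\eps_t = \bigl(c(\eps)/(\sqrt{\eps}\,h(\eps))\bigr)t + h(\eps)^{-1}W_t$, whose LDP at speed $h^2(\eps)$ has rate $\frac{1}{2}\int_0^T|\dot\xi_s - c_0|^2\,\D s$, not $\frac{1}{2}\int_0^T|\dot\xi_s|^2\,\D s$. So either Assumption~\ref{A:MainGeneralAssumption} is being read with an implicit rate condition (which is automatic in every application in the paper, where $\bb_\eps = O(\eps)$ and hence $(\bb_\eps-\bb)/(\sqrt{\eps}\,h(\eps))=O(\sqrt{\eps}/h(\eps))$), or this term requires a separate argument. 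Your exponential-equivalence step cannot close without addressing it.

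A smaller point: the contraction principle needs continuity of $\phi\mapsto\xi^\phi$ on all of $\Cc([0,T];\RR^m)$, not merely on the Cameron--Martin subspace. For a linear equation with deterministic coefficients $\Di\bb(s,\overline{\XX}_s)$ and $\ar(s,\overline{\XX}_s)$ this is routine (variation of constants plus integration by parts, or directly invoking the LDP for Gaussian measures), but the sentence as written does not quite justify the step.
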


Notice that if $\ar\ar^{\T}(t,x)$ is uniformly non degenerate on the path of $\{\overline{\XX}_s, s\in[0,T]\}$, then we can write
\begin{equation*}
  S(\xi) = \left\{
\begin{array}{ll}
 \displaystyle\frac{1}{2} \int_0^T \left(\dot{\xi}_s - \Di\bb(s,\overline{\XX}_{s}) \xi_{s}  \right)^{\T} \left(\ar\ar^{\T}(s,\overline{\XX}_s)\right)^{-1} \left(\dot{\xi}_s - \Di\bb(s,\overline{\XX}_{s}) \xi_{s}\right) \D s,
  & \text{if }\xi \in \mathcal{AC}([0, T], \RR^d), \xi_{0}=0,\\
 +\infty, & \text{otherwise}.
 \end{array}
 \right.
\end{equation*}
where $\mathcal{AC}([0, T], \RR^d)$ denotes the class of absolutely continuous functions from $[0, T]$ to $\RR^d$.

\subsection{Small-time moderate deviations}\label{SS:SmallTimeMDPFinance}
We now show how to use Theorem~\ref{T:GeneralResult} in order to prove small-time moderate deviations for a class of two-dimensional systems~$\XX$ satisfying the differential equations
\begin{equation}\label{eq:Main}
\begin{array}{rl}
\D X_t & = \displaystyle -\frac{1}{2}\sigma^2(X_t, Y_t) \D t + \sigma(X_t, Y_t)\D W_t,\\
\D Y_t & = f(t, X_t, Y_t) \D t + g(X_t, Y_t)\D Z_t,\\
\D\langle W, Z\rangle_t & = \rho \D t,
\end{array}
\end{equation}
with starting point $\XX_0 = (x_0, y_0)$.
Since we are interested in small-time asymptotics, we rescale~\eqref{eq:Main}
according to~\eqref{Eq:TimeSpaceTransformationIntro} with $\delta=1$,
namely we consider
$\XX^\eps_t := (X_t^{\eps}, Y_t^{\eps}) := (X_{\eps t}, Y_{\eps t})$.
This yields the system
\begin{align*}
\D X_t^\eps & = \displaystyle -\frac{\eps}{2} \sigma^{2}(X_t^\eps, Y_t^\eps) \D t
 + \sqrt{\eps} \sigma(X_t^\eps, Y_t^\eps)\D W_t,\\
\D Y_t^\eps & = \eps f(\eps t,X_t^\eps, Y_t^\eps) \D t + \sqrt{\eps} g(X_t^\eps, Y_t^\eps)\D Z_t,
\end{align*}
with starting point $\XX_0^\eps = (x_0, y_0)$.
The limiting process $\overline{\XX} := \lim_{\eps\downarrow 0}\XX^\eps$ is constant, equal to~$(x_0,y_0)$
 almost surely.
Define now the process $\eta^\eps$ as in~\eqref{E:deviations},
where $\sqrt{\eps}h(\eps)$ tends to zero and $h(\eps)$ tends to infinity as~$\eps$ tends to zero.
In the notations of~\eqref{Eq:GeneralSDEModel}, we have
\begin{equation}\label{Eq:SpecificCoeff}
\bb_{\eps}(t,x,y)=\eps
\begin{pmatrix}
-\frac{1}{2}\sigma^{2}(x,y)\\
 f(\eps t,x,y)
\end{pmatrix}
\qquad\text{and}\qquad
\ar_{\eps}(t,x,y)=
\begin{pmatrix}
\sigma(x,y) & 0\\
 \rho g(x,y) & \rrho g(x,y)
\end{pmatrix},
\end{equation}
where for convenience, we denote $\rrho:=\sqrt{1-\rho^{2}}$.
Then, using Theorem~\ref{T:GeneralResult}, we have the following result:
\begin{proposition}\label{P:MDP_FinancialModelsTwoComp}
If $\bb_{\eps}$ and $\ar_{\eps}$ in~\eqref{Eq:SpecificCoeff} satisfy Assumption~\ref{A:MainGeneralAssumption}, $\sigma(\cdot), g(\cdot)$
are non-zero on~$(x_{0},y_{0})$,
then $(\XX^{\eps})_{\eps>0}$ satisfies the pathwise MDP with speed~$h^{2}(\eps)$ and rate function
\begin{equation*}
S(\phi,\psi) =
\left\{
\begin{array}{ll}
\displaystyle
\frac{1}{2(1-\rho^{2})} \int_0^T \left(\left|\frac{\dot{\phi}_s}{\sigma(x_{0},y_{0})}\right|^{2}
 - \frac{2\rho\dot{\phi}_{s}\dot{\psi}_{s}}{\sigma(x_{0},y_{0}) g(x_{0},y_{0})}
+ \left|\frac{\dot{\psi}_{s}}{g(x_{0},y_{0})}\right|^{2}\right) \D s,
 & \text{if }\phi,\psi \in \mathcal{AC}([0, T],\RR), \\
 & \text{and }(\phi_{0},\psi_{0})=(0,0),\\
 +\infty, & \text{otherwise}.
\end{array}
\right.
\end{equation*}
\end{proposition}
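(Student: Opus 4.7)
The plan is to deduce Proposition~\ref{P:MDP_FinancialModelsTwoComp} as a direct consequence of Theorem~\ref{T:GeneralResult} applied to the specific rescaled system. First I would identify the limits appearing in Theorem~\ref{T:GeneralResult}. Since $\bb_{\eps}(t,x,y) = \eps\left(-\tfrac{1}{2}\sigma^{2}(x,y),\, f(\eps t,x,y)\right)^{\T}$, the explicit $\eps$ factor (together with local boundedness of $\sigma^{2}$ and of $f$ on bounded sets) forces $\bb_{\eps}\to 0$ and $\Di\bb_{\eps}\to 0$ uniformly on bounded subsets, so that $\bb \equiv 0$ and $\Di \bb\equiv 0$. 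The diffusion coefficient $\ar_{\eps}$ does not depend on $\eps$ at all, so $\ar=\ar_{\eps}$ throughout. The limiting ODE $\overline{\XX}_t = \xx_{0} + \int_{0}^{t} \bb(s,\overline{\XX}_s)\D s$ therefore reduces to $\overline{\XX}_t \equiv (x_{0},y_{0})$ for all $t\in[0,T]$.

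Assumption~\ref{A:MainGeneralAssumption} is granted by hypothesis, so Theorem~\ref{T:GeneralResult} yields the MDP for $(\XX^{\eps})$ at speed $h^{2}(\eps)$ with rate function
\[
S(\xi) = \inf\left\{\tfrac{1}{2}\int_{0}^{T}|u_{s}|^{2}\D s:\; u\in L^{2}([0,T];\RR^{2}),\; \xi_{\cdot} = \int_{0}^{\cdot}\ar(x_{0},y_{0})\,u_{s}\D s\right\}.
\]
Since $\sigma(x_{0},y_{0})\neq 0$ and $g(x_{0},y_{0})\neq 0$ (and implicitly $|\rho|<1$, as witnessed by the $(1-\rho^{2})^{-1}$ in the claimed formula), the constant matrix
\[
\ar(x_{0},y_{0}) = \begin{pmatrix}\sigma(x_{0},y_{0}) & 0\\ \rho g(x_{0},y_{0}) & \rrho g(x_{0},y_{0})\end{pmatrix}
\]
is invertible. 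Writing $\xi=(\phi,\psi)$, any absolutely continuous $\xi$ with $\xi_{0}=0$ is attained by a unique $u\in L^{2}$, obtained by solving the triangular system $\dot\phi_{s}=\sigma(x_{0},y_{0})\,u^{(1)}_{s}$ and $\dot\psi_{s}=\rho g(x_{0},y_{0})\,u^{(1)}_{s}+\rrho g(x_{0},y_{0})\,u^{(2)}_{s}$, and the infimum is actually a minimum attained at that unique $u$; for non-absolutely-continuous $\xi$ (or $\xi_{0}\neq 0$), the infimum over an empty set is $+\infty$.

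It then remains to evaluate $\tfrac{1}{2}\int_{0}^{T}|u_{s}|^{2}\D s$ at the minimiser. Substituting $u^{(1)}_{s}=\dot\phi_{s}/\sigma(x_{0},y_{0})$ and $u^{(2)}_{s}=\rrho^{-1}\bigl(\dot\psi_{s}/g(x_{0},y_{0}) - \rho\dot\phi_{s}/\sigma(x_{0},y_{0})\bigr)$ and expanding the square produces precisely the quadratic form
\[
\tfrac{1}{2(1-\rho^{2})}\int_{0}^{T}\left(\tfrac{\dot\phi_{s}^{2}}{\sigma^{2}(x_{0},y_{0})}-\tfrac{2\rho\dot\phi_{s}\dot\psi_{s}}{\sigma(x_{0},y_{0})g(x_{0},y_{0})}+\tfrac{\dot\psi_{s}^{2}}{g^{2}(x_{0},y_{0})}\right)\D s,
\]
matching the statement. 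Equivalently one can read this off from the closed-form expression for $S$ after Theorem~\ref{T:GeneralResult}, computing $(\ar\ar^{\T})^{-1}(x_{0},y_{0})$ directly.

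There is no real obstacle here; the only small point to check carefully is that the $\eps$-prefactor in $\bb_{\eps}$ is genuine enough to make $\bb_{\eps}$ and $\Di\bb_{\eps}$ vanish uniformly on bounded sets (which uses local boundedness of $\sigma^{2}, f$ and their $x$-derivatives, already embedded in Assumption~\ref{A:MainGeneralAssumption}), and that non-degeneracy of $\ar(x_{0},y_{0})$ holds so that the variational formula collapses to the explicit quadratic. Both are immediate from the hypotheses.
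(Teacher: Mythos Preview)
Your proposal is correct and follows exactly the same approach as the paper: apply Theorem~\ref{T:GeneralResult} and use that $\ar\ar^{\T}(x_{0},y_{0})$ is invertible under the non-vanishing assumptions on $\sigma$ and $g$. The paper's proof is in fact a one-liner to this effect; your write-up simply fills in the details (identifying $\bb\equiv 0$, $\Di\bb\equiv 0$, $\overline{\XX}\equiv(x_{0},y_{0})$, and carrying out the explicit inversion of $\ar(x_{0},y_{0})$) that the paper leaves implicit.
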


\begin{proof}
The matrix $\ar\ar^{\top}(x_{0},y_{0})$ is invertible, since  by assumption $\sigma(\cdot), g(\cdot)$ are non-zero on~$(x_{0},y_{0})$,
so that Theorem~\ref{T:GeneralResult} applies.
\end{proof}

We are primarily interested in the MDP for the $X^{\eps}$ component,
which we obtain via contraction principle:
\begin{corollary}\label{C:MDP_FinancialModelsOneComp}
Under the assumptions of Proposition~\ref{P:MDP_FinancialModelsTwoComp},
$\{X^{\eps},\eps>0\}$ satisfies the pathwise MDP with speed~$h^{2}(\eps)$ and rate function
\begin{equation*}
I(\phi) =
\left\{\begin{array}{ll}
\displaystyle \frac{1}{2\sigma(x_{0},y_{0})^2} \int_0^T |\dot{\phi}_s|^{2}\D s,
 & \text{if }\phi \in \mathcal{AC}([0, T],\RR) \text{ and } \phi_{0}=0,\\
+\infty, & \text{otherwise}.
\end{array}
\right.
\end{equation*}
\end{corollary}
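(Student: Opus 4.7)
The plan is to obtain the MDP for $\{X^{\eps}\}$ as the pushforward of the pathwise MDP for $\{\XX^{\eps}\}$ established in Proposition~\ref{P:MDP_FinancialModelsTwoComp}, via the contraction principle applied to the projection onto the first coordinate.

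First, I would define the map $\pi : \Cc([0,T],\RR^{2}) \to \Cc([0,T],\RR)$ by $\pi(\phi,\psi) = \phi$. This is continuous (in fact $1$-Lipschitz) with respect to the sup norm, and since $X^{\eps} = \pi(\XX^{\eps})$, the contraction principle yields that $\{X^{\eps}\}$ satisfies a pathwise LDP (equivalently MDP) at speed $h^{2}(\eps)$ with rate function
\[
I(\phi) = \inf\left\{ S(\phi,\psi) : \psi \in \Cc([0,T],\RR)\right\},
\]
where $S$ is the rate function given in Proposition~\ref{P:MDP_FinancialModelsTwoComp}.

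Second, I would evaluate this infimum explicitly. If $\phi \notin \mathcal{AC}([0,T],\RR)$ or $\phi_{0}\neq 0$, then $S(\phi,\psi)=+\infty$ for every $\psi$, so $I(\phi)=+\infty$. Otherwise, writing $\sigma:=\sigma(x_{0},y_{0})$ and $g:=g(x_{0},y_{0})$, the infimum of $S(\phi,\psi)$ over $\psi \in \mathcal{AC}([0,T],\RR)$ with $\psi_{0}=0$ reduces to a pointwise quadratic minimisation in the variable $\dot{\psi}_{s}$, since the Lagrangian in the integrand depends on $\dot{\psi}_{s}$ only. Solving the pointwise optimality condition gives $\dot{\psi}_{s} = \frac{\rho g}{\sigma}\dot{\phi}_{s}$, and a short substitution collapses the quadratic form to
\[
\frac{1}{2(1-\rho^{2})}\left( \frac{1}{\sigma^{2}} - \frac{2\rho^{2}}{\sigma^{2}} + \frac{\rho^{2}}{\sigma^{2}}\right)|\dot{\phi}_{s}|^{2} = \frac{|\dot{\phi}_{s}|^{2}}{2\sigma^{2}},
\]
so that $I(\phi)=\frac{1}{2\sigma(x_{0},y_{0})^{2}}\int_{0}^{T}|\dot{\phi}_{s}|^{2}\D s$, matching the claim.

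There is no real obstacle here: the only thing to check carefully is that the pointwise minimiser $\dot{\psi}_{s} = \frac{\rho g}{\sigma}\dot{\phi}_{s}$ indeed produces an admissible $\psi \in \mathcal{AC}([0,T],\RR)$ with $\psi_{0}=0$, which is immediate from $\dot{\phi}\in L^{2}$ via $\psi_{t}=\frac{\rho g}{\sigma}\phi_{t}$. The closedness of the rate function, lower semicontinuity, and compactness of level sets are inherited automatically from Proposition~\ref{P:MDP_FinancialModelsTwoComp} through the contraction principle, so no further analytic work is required.
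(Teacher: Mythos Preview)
Your proof is correct and follows the same overall strategy as the paper: apply the contraction principle to the projection onto the first component, then minimise $S(\phi,\psi)$ over $\psi$. The only difference is in how the minimisation is carried out: the paper writes the Euler--Lagrange equation $\ddot{\psi}_t = \rho\frac{g}{\sigma}\ddot{\phi}_t$, integrates to $\dot{\psi}_t = \rho\frac{g}{\sigma}\dot{\phi}_t + C$, and then argues that $C=0$ is optimal; your pointwise quadratic minimisation in $\dot{\psi}_s$ is more direct, since the Lagrangian does not depend on $\psi_s$ itself, and lands immediately on the same optimiser without the detour through a constant of integration.
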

\begin{proof}
By Proposition~\ref{P:MDP_FinancialModelsTwoComp} and the contraction principle~\cite[Section 4.2.1]{DZ98},
the MDP rate function for the first component of $(X^{\eps}_{t},Y^{\eps}_{t})$ is
\[
I(\phi)=\inf\{S(\phi,\psi): \psi\in \mathcal{C}([0,T],\RR),\psi_{0}=0\},
\]
which can be solved as a variational problem.
For a fixed absolutely continuous trajectory $\phi$, the Euler-Lagrange equation for the stationary path~$\psi$ is
\[
\ddot{\psi}_{t}=\rho\frac{g(x_{0},y_{0})}{\sigma(x_{0},y_{0})}\ddot{\phi}_{t},
\]
or $\dot{\psi}_{t}=\rho\frac{g(x_{0},y_{0})}{\sigma(x_{0},y_{0})}\dot{\phi}_{t}+C$ for some constant~$C$. Plugging it in the expression for~$S(\cdot)$ in Proposition~\ref{P:MDP_FinancialModelsTwoComp},
and minimising over~$C$ we obtain that the minimum is at $C=0$, which concludes the proof.
\end{proof}

\begin{remark}\label{rem:Projection1}
Setting $T=1$, and understanding~$\eps$ as time, Corollary~\ref{C:MDP_FinancialModelsOneComp}
immediately provides us with small-time asymptotics of probabilities.
For $x_1>0$, the minimisation problem trivially yields $\phi_t = x_1 t$ as optimal path on $[0,1]$,
so that
$$
\lim_{\eps\downarrow 0}\frac{1}{h(\eps)^2}\log\PP\left(\frac{X_\eps}{\sqrt{\eps}h(\eps)}\geq x_1\right) = -\frac{x_1^2}{2\sigma(x_{0},y_{0})^2}.
$$

\end{remark}

\subsection{Examples}\label{SS:SmallTimeExamples1}

\subsubsection{Local-stochastic volatility model}
Consider a local stochastic volatility model of the form~\eqref{eq:Main}, with
$$
\sigma(t, X_t, Y_t) = \sigma_{\loc}(t, X_t)\xi(Y_t),
$$
and assume that the local volatility component $\sigma_{\loc}(\eps t, \cdot)$ converges uniformly to~$\overline{\sigma}_{\loc}(\cdot)$ as~$\eps$ tends to zero.
This is consistent with diffusion-type models, but not for jump models~\cite{FGY}.
Applying the time rescaling transformation $t \mapsto \eps t$ yields
\begin{align*}
\D X_t^\eps & = -\frac{\eps}{2}\sigma_{\loc}^2(\eps t, X_t^\eps)\xi(Y_t^\eps)^2 \D t
 + \sqrt{\eps} \sigma_{\loc}(\eps t, X_t^\eps)\xi(Y_t^\eps)\D W_t,\\
\D Y_t^\eps & = \eps f(\eps t, X_t^\eps, Y_t^\eps) \D t + \sqrt{\eps} g(X_t^\eps, Y_t^\eps)\D Z_t,
\end{align*}
with starting point $(X_0^\eps, Y_0^\eps) = (x_0, y_0)$.
Clearly, $\XX^\eps = (X^\eps, Y^\eps)$ converges to the constant process
equal to $\xx_{0} = (x_{0}, y_{0})$
almost surely as~$\eps$ tends to zero.
Choosing $h(\eps) \equiv \eps^{-\beta}$, with $\beta \in (0, 1/2)$, we then obtain that
under Assumption~\ref{A:MainGeneralAssumption}
and the assumptions of Proposition~\ref{P:MDP_FinancialModelsTwoComp},
in particular assuming that $\overline{\sigma}_{\loc}(x_0)\xi(x_0)\ne 0$,
$\eta^\eps := \eps^{\beta-1/2}(\XX^{\eps}-\xx_{0})$ satisfies a LDP as~$\eps$ tends to zero by
Theorem~\ref{T:GeneralResult}, and hence~$X^{\eps}$ satisfies a MDP
from Corollary~\ref{C:MDP_FinancialModelsOneComp} with rate function
\begin{equation*}
I(\phi) =
\left\{
\begin{array}{ll}
\displaystyle \frac{1}{2}\frac{1}{\overline{\sigma}_{\loc}(x_0)^2\xi(y_0)^2}\int_{0}^{T}\dot{\phi}_s^2\D s, & \text{if }\phi \in \mathcal{AC}([0, T],\RR)\text{ and }\phi_0 = 0,\\
+\infty, & \text{otherwise}.
\end{array}
\right.
\end{equation*}

\subsubsection{Stein-Stein}
Consider the Stein-Stein~\cite{Stein} stochastic volatility model:
\begin{equation}\label{eq:SteinStein}
\begin{array}{rll}
\D X_t & = \displaystyle -\frac{1}{2}Y_t^2\D t + Y_t\D W_t, \quad & X_0 = x_{0},\\
\D Y_t & = (a+b Y_t)\D t+c \D Z_t, \quad & Y_0=y_{0},\\
\D\left\langle W,Z\right\rangle_t & = \rho \D t.
\end{array}
\end{equation}
With the time-scaling $t\mapsto \eps t$,
the system~\eqref{eq:SteinStein} reads
\begin{equation*}
\begin{array}{rll}
\D X_t^{\eps} & = \displaystyle -\frac{\eps}{2}(Y_t^{\eps})^2\D t
 + \sqrt{\eps}Y_t^{\eps}\D W_t, \quad & X_0^{\eps}=x_{0},\\
\D Y_t^{\eps} & = (a+b Y_t^{\eps})\eps\D t
 + \sqrt{\eps}c\D Z_t, \quad & Y_0^{\eps} = y_{0},\\
\D\left\langle W,Z\right\rangle_t & = \rho \D t.
\end{array}
\end{equation*}
Clearly, $\XX^\eps = (X^\eps, Y^\eps)$ converges to the constant process
equal to $\xx_{0} = (x_{0}, y_{0})$
almost surely as~$\eps$ tends to zero.
Choosing $h(\eps) \equiv \eps^{-\beta}$, with $\beta \in (0, 1/2)$, we then obtain that
$\eta^\eps := \eps^{\beta-1/2}(\XX^{\eps}-\xx_{0})$ satisfies a LDP as~$\eps$ tends to zero by
Theorem~\ref{T:GeneralResult}, and hence~$X^{\eps}$ satisfies a MDP
from Corollary~\ref{C:MDP_FinancialModelsOneComp} with rate function
\begin{equation*}
I(\phi) =
\left\{
\begin{array}{ll}
\displaystyle \frac{1}{2y_0^2}\int_{0}^{T}\dot{\phi}_s^2\D s, & \text{if }\phi \in \mathcal{AC}([0, T],\RR)
\text{ and }\phi_0 = 0,\\
+\infty, & \text{otherwise}.
\end{array}
\right.
\end{equation*}

\subsubsection{Heston}
Consider the Heston stochastic volatility model:
\begin{equation}\label{eq:Heston}
\begin{array}{rll}
\D X_t & = \displaystyle -\frac{1}{2}Y_t\D t+ \sqrt{Y_t}\D W_t, \quad & X_0=x_{0},\\
\D Y_t & = \kappa(\theta - Y_t)\D t + \xi\sqrt{Y_t} \D Z_t, \quad & Y_0 = y_{0},\\
\D\left\langle W,Z\right\rangle_t & = \rho \D t,
\end{array}
\end{equation}
Applying the time rescaling $t\mapsto \eps t$, the system~\eqref{eq:Heston} turns into
\begin{equation*}
\begin{array}{rll}
\D X_t^{\eps} & = \displaystyle -\frac{\eps}{2}Y_t^{\eps}\D t
 + \sqrt{\eps}\sqrt{Y_t^{\eps}}\D W_t, \quad & X_0^{\eps}=x_{0},\\
\D Y_t^{\eps} & = \kappa(\theta - Y_t^{\eps})\eps\D t + \sqrt{\eps}\xi\sqrt{Y_t^{\eps}} \D Z_t,
\quad & Y_0^{\eps} = y_{0},\\
\D\left\langle W,Z\right\rangle_t & = \rho \D t.
\end{array}
\end{equation*}
Clearly, $\XX^\eps = (X^\eps, Y^\eps)$ converges to the constant process $\xx_{0} = (x_{0}, y_{0})$
almost surely as~$\eps$ tends to zero.
Choosing again $h(\eps) \equiv \eps^{-\beta}$, with $\beta \in (0, 1/2)$, we then obtain that
$\eta^\eps := \eps^{\beta-1/2}(\XX^{\eps} - \xx_{0})$ satisfies a LDP as~$\eps$ tends to zero,
and that~$X^{\eps}$ satisfies a MDP
from Corollary~\ref{C:MDP_FinancialModelsOneComp} with rate function
\begin{equation*}
I(\phi) =
\left\{
\begin{array}{ll}
\displaystyle \frac{1}{2y_0}\int_{0}^{T}\dot{\phi}_s^2\D s, & \text{if }\phi \in \mathcal{AC}([0, T],\RR)
\text{ and }\phi_0 = 0,\\
+\infty, & \text{otherwise}.
\end{array}
\right.
\end{equation*}
Together with Remark~\ref{rem:Projection1},
this corresponds to the Moderate Deviations regime for the Heston model in~\cite[Theorem 6.2]{FGP}, with $\hat{\beta} = 1/2-\beta$.

\section{Large-time moderate deviations}\label{S:LargeTimeMDP}
The former section considered small-noise perturbations of stochastic It\^o diffusions,
which lent himself perfectly to small-time analysis of the solution of the system by scaling.
We investigate here proper multiscaled diffusions, in the form of~\eqref{eq:MainIntroEps},
and show how their behaviour yields large-time asymptotics of the solution.
Consider a general perturbed two-dimensional diffusion $\XX = (X,Y)$ of the form
\begin{equation}\label{eq:Main2a}
\begin{array}{rl}
\D X_t & = \displaystyle -\frac{1}{2}\sigma^2(Y_t) \D t + \sigma(Y_t)\D W_t,\\
\D Y_t & = f(Y_t) \D t + g(Y_t)\D Z_t,\\
\D\langle W, Z\rangle_t & = \rho \D t,
\end{array}
\end{equation}
with starting point $\XX_0 = (x_0, y_0)$. For $0<\eps,\delta<1$,
the general rescaling~\eqref{Eq:TimeSpaceTransformationIntro} yields
\begin{equation}\label{eq:Main2c}
\begin{array}{rl}
\D X^{\eps}_t & = \displaystyle -\frac{1}{2}\frac{\eps}{\delta}\sigma^2(Y^{\eps}_t) \D t + \sqrt{\eps}\sigma(Y^{\eps}_t)\D W_t,\\
\D Y^{\eps}_t & = \displaystyle \frac{\eps}{\delta^{2}}f(Y^{\eps}_t) \D t + \frac{\sqrt{\eps}}{\delta}g(Y^{\eps}_t)\D Z_t,\\
\D\langle W, Z\rangle_t & = \rho \D t,
\end{array}
\end{equation}
If $\delta=\eps$, then $(X^{\eps}_{t},Y^{\eps}_{t})=(\eps X_{t/\eps},Y_{t/\eps})$,
but the more general space-time transformation~\eqref{Eq:TimeSpaceTransformationIntro}
allows us to consider also the case $\frac{\eps}{\delta}\rightarrow\gamma\in[0,\infty)$.
For $0<\eps,\delta\ll 1$ this is in the setup of~\cite{G03} and more relevant,
if we allow the possibility of $\frac{\eps}{\delta}\rightarrow\gamma\in(0,\infty)$
and correlation $\rho\neq 0$, the setup of~\cite{MS}.
To this end let us make the following standing assumption:

\begin{assumption} \label{C:ergodic}\
\begin{enumerate}[(i)]
\item{The function  $\sigma(\cdot) \in \mathcal{C}^{\alpha}(\RR)$ for some $\alpha>0$ and there exist constants $0<K<\infty$, $0\leq q_{\sigma}<1$ such that
\begin{equation*}
\lvert \sigma( y) \rvert \le K (1 + \lvert y \rvert^{q_{\sigma}}).
\end{equation*}}
\item{The function $f(\cdot)$ is locally bounded and we can write $f(y)=-\kappa y+\tau(y)$ where $\tau(\cdot)$ is a globally Lipschitz function with Lipschitz constant $L_{\tau}<\kappa$. In addition, $\lim\limits_{|y|\uparrow\infty}\frac{f(y)y}{|y|^{2}} = -\kappa$.}
\item{The function $g$ is either uniformly continuous and bounded from above and away from zero or alternatively it takes the form $g(y)=\xi y^{q_{g}}$ for $q_{g} \in[1/2,1)$ with~$\xi$ a non-zero constant.
In the latter case, we also assume that $f(y)=\kappa(\theta-y)$.}
\item{The constants $q_{\sigma}$ and $q_{g}$ satisfy $q_{\sigma}+q_{g}\leq 1$.}
\end{enumerate}
\end{assumption}

\begin{assumption}\label{C:StrongSolution}
The SDE~\eqref{eq:Main2a} has a unique strong solution.
\end{assumption}

Let $\Ll_{Y}$ denote the infinitesimal generator of the $Y$ process  before time/space rescalings, i.e.
\begin{align}
  \Ll_{Y} h &=  f h^{\prime} + \frac{1}{2} g^{2} h^{\prime\prime},\label{Eq;Operator}
\end{align}
and let $\mu(\cdot)$ be the invariant measure corresponding to~$\Ll_{Y}$,
which under Assumption~\ref{C:ergodic} exists and is unique. With
\[
\lim_{\eps \downarrow 0} \frac{\eps}{\delta} =: \gamma\in(0,\infty),
\]
let us set
$$
\lambda(y) :=-\frac{1}{2}\gamma\sigma^{2}(y)
\qquad\text{and}\qquad
\overline{\lambda} := \int_{\RR}\lambda(y)\mu(\D y).
$$
Then, it is a classical result~\cite{PV01} that~$X^\eps$ converges to~$\overline{X}$ in probability
as~$\eps$ tends to zero, where
\begin{equation*} \label{E:averagedSDE}
  \overline{X}_t := x_{0}+ \overline{\lambda}t.
\end{equation*}

Lastly, we introduce the function $\Phi$, solving the Poisson equation
\begin{equation} \label{E:Phi}
  \Ll_{Y} \Phi( y) = - \frac{1}{\gamma}(\lambda(y) - \overline{\lambda} )=\frac{1}{2}\left(\sigma^{2}(y)-\overline{\sigma}^{2}\right),
  \qquad \text{with} \qquad
  \int_\mathcal{Y} \Phi(y) \mu(\D y) = 0,
\end{equation}
where $\overline{\sigma}^{2}:= \int_{\RR}\sigma^2(y)\mu(\D y)$.
In the case of bounded  and non-degenerate $g$, Assumption~\ref{C:ergodic} implies that Theorems 1 and 2 in~\cite{PV01} hold,
providing existence and appropriate smoothness of~$\Phi$ and also polynomial growth (in~$|y|$).
In the case where~$g$ is only H\"{o}lder continuous with exponent $q_{g}\in [1/2,1)$ and can become zero only at zero, the same conclusions hold by Lemma~\ref{T:regularityCIR}.
In order to state Theorem~\ref{T:LongTimeMDP}, we need to know the relative rates at which $\delta$, $\eps$, and $1 / h(\eps)$ vanish. In particular,
\begin{equation}\label{Eq:LimitingConstants}
  \zeta := \lim_{\eps \downarrow 0} \frac{\eps / \delta - \gamma}{\sqrt{\eps} h(\eps) }
\end{equation}
specifies the relative rate at which $\eps/\delta$ converges and $h(\eps)$  goes to infinity.
In order for a MDP to hold, we require that $\zeta$ be finite.
In the special case $\frac{\eps}{\delta}=\gamma$, then $\zeta=0$.
The following holds:
\begin{theorem}[Theorem 2.1 of~\cite{MS}] \label{T:LongTimeMDP}
Under Assumptions~\ref{C:ergodic} and~\ref{C:StrongSolution},
the sequence $\{X^{\eps},\eps>0\}$ from~\eqref{eq:Main2c} satisfies the MDP
with speed~$h^{2}(\eps)$ and rate function
\begin{equation*}
I(\phi) =
\inf\left\{
\frac{1}{2}\int_{0}^{T}|u_{s}|^{2}\D s:
u\in L^{2}([0,T];\RR),
\phi=\int_{0}^{\cdot}\left(\alpha+q^{1/2}u_{s}\right)\D s\right\},
\end{equation*}
where
$$
  \alpha=-\frac{\zeta}{2} \int_\RR \sigma^{2}(y) \mu (\D y)
  \qquad\text{and}\qquad
  q
  =\int_\RR \left[ \sigma^{2}(y) +  \left|\Phi^{\prime}(y) g(y)\right|^{2} + 2\rho\sigma(y) g(y)\Phi^{\prime}(y) \right] \mu(\D y),
$$
where the finite constant $\zeta$ is defined in~\eqref{Eq:LimitingConstants},
and $\Phi$ is the solution to the Poisson equation~\eqref{E:Phi}.
\end{theorem}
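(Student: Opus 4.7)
The plan is to combine the weak convergence approach of Dupuis-Ellis with a corrector argument based on the Poisson equation~\eqref{E:Phi}, the same toolbox that underlies the proof of the main theorem of Section~\ref{S:MDPIntegratedFunctionals}. Writing $\eta^\eps_t := (X^\eps_t - \overline{X}_t)/(\sqrt\eps h(\eps))$, the dynamics~\eqref{eq:Main2c} together with $\overline{\lambda} = -\tfrac{\gamma}{2}\overline{\sigma}^{2}$ give, after adding and subtracting $\overline{\sigma}^{2}$ inside the drift and substituting $\tfrac{1}{2}(\sigma^{2}(y)-\overline{\sigma}^{2}) = \Ll_Y\Phi(y)$, the decomposition
\begin{equation*}
\eta^\eps_t
= \frac{\gamma-\eps/\delta}{2\sqrt\eps h(\eps)}\,\overline{\sigma}^{2}\,t
\;-\;\frac{1}{\sqrt\eps h(\eps)}\cdot\frac{\eps}{\delta}\int_0^t\Ll_Y\Phi(Y^\eps_s)\,\D s
\;+\;\frac{1}{h(\eps)}\int_0^t\sigma(Y^\eps_s)\,\D W_s.
\end{equation*}
The first term converges to $\alpha t$ by the very definition of $\zeta$ in~\eqref{Eq:LimitingConstants}. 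For the middle term, It\^o's formula applied to $\Phi(Y^\eps_t)$ along~\eqref{eq:Main2c} yields $\tfrac{\eps}{\delta^{2}}\int_0^t \Ll_Y\Phi(Y^\eps_s)\,\D s = \Phi(Y^\eps_t)-\Phi(y_0)-\tfrac{\sqrt\eps}{\delta}\int_0^t\Phi'(Y^\eps_s)g(Y^\eps_s)\,\D Z_s$; multiplying by $\delta$ and dividing by $\sqrt\eps h(\eps)$ turns the middle term into a stochastic integral against $Z$, plus a boundary term of order $\delta/(\sqrt\eps h(\eps)) = O(\sqrt\eps/h(\eps))$ which is negligible under the polynomial growth of $\Phi$ provided by~\cite{PV01} and Lemma~\ref{T:regularityCIR}.

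Collecting the pieces gives $\eta^\eps_t = \alpha t + M^\eps_t + o(1)$, where
\[
M^\eps_t := \frac{1}{h(\eps)}\int_0^t\sigma(Y^\eps_s)\,\D W_s + \frac{1}{h(\eps)}\int_0^t\Phi'(Y^\eps_s)g(Y^\eps_s)\,\D Z_s
\]
is a continuous martingale with quadratic variation
\[
\langle M^\eps\rangle_t = \frac{1}{h(\eps)^{2}}\int_0^t\left[\sigma^{2}(Y^\eps_s)+\left|\Phi'(Y^\eps_s)g(Y^\eps_s)\right|^{2}+2\rho\,\sigma(Y^\eps_s)\Phi'(Y^\eps_s)g(Y^\eps_s)\right]\D s.
\]
Ergodic averaging of $Y^\eps$ against its invariant measure $\mu$ sends the bracketed time integral to $qt$, so $\langle M^\eps\rangle_t \sim qt/h(\eps)^{2}$. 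At this stage I would invoke the Budhiraja-Dupuis-Maroulas variational representation at MDP speed $h(\eps)^{2}$: for each admissible control $u^\eps$ of bounded $L^{2}$-cost, Girsanov produces controlled versions $\widetilde\eta^\eps$ whose drift in the limit becomes $\alpha + q^{1/2}u_s$. Tightness of the controls paired with weak convergence of the controlled pair $(u^\eps,\widetilde\eta^\eps)$ then yields both Laplace bounds and identifies the rate function as stated.

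The hard part is making the averaging and tightness step rigorous in the controlled system when the coefficients are unbounded. Because $\sigma$ has polynomial growth of order $q_\sigma$, the corrector $\Phi$ inherits polynomial growth via~\cite{PV01}/Lemma~\ref{T:regularityCIR}, and $g$ may be of CIR type with exponent $q_g \in [1/2,1)$, the composite $\Phi' g$ is controlled uniformly only through the combined restriction $q_\sigma + q_g \leq 1$ of Assumption~\ref{C:ergodic}. Establishing uniform moment bounds for the controlled fast process, and through them for the drift correction, the boundary term and the quadratic variation integrand, is where the bulk of the analytic work concentrates. The degenerate CIR regime is the most delicate, since the fast SDE is only H\"older in $y$; here Assumption~\ref{C:StrongSolution} together with the dedicated regularity statement of Lemma~\ref{T:regularityCIR} is what allows one to close the estimates and complete the weak convergence argument.
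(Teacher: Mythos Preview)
Your sketch is correct and follows precisely the route the paper takes. The paper does not give an independent proof of Theorem~\ref{T:LongTimeMDP}; it is stated as Theorem~2.1 of~\cite{MS}, and the subsequent remark explains that the exact same methodology of~\cite{MS} (weak convergence/Laplace principle via the variational representation, corrector based on the Poisson equation~\eqref{E:Phi}, ergodic averaging, and tightness of the controlled pair) carries over under the present, slightly relaxed growth assumptions, with the simplifications coming from the one-dimensional setting and the fact that the coefficients do not depend on~$x$. Your decomposition of~$\eta^\eps$, the It\^o step replacing $\tfrac{\eps}{\delta}\int_0^t\Ll_Y\Phi(Y^\eps_s)\,\D s$ by a stochastic integral plus a boundary term of order $\delta/(\sqrt\eps h(\eps))\sim \sqrt\eps/h(\eps)$, the identification of~$q$ through the quadratic variation, and your discussion of where the analytic difficulty lies (uniform moment bounds for the controlled fast process under Assumption~\ref{C:ergodic}(iv), handled via~\cite{PV01} and Lemma~\ref{T:regularityCIR}) are exactly the ingredients of the argument in~\cite{MS} that the paper invokes.
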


A  few comments on how Theorem \ref{T:LongTimeMDP} compares to Theorem 2.1 of \cite{MS} are in line.
\begin{remark}
Theorem \ref{T:LongTimeMDP} corresponds to the setting of Regime 2 in \cite{MS} and in the special case where the drift of the fast motion, $Y^{\eps}$ grows at most linearly in $y$, i.e. $r=1$ in the notation of \cite{MS}. In addition, we would like to note that Theorem 2.1 of \cite{MS} is proven under slightly stronger conditions on the growth of the coefficients than the ones made in the current Assumption~\ref{C:ergodic}. There are two reasons for this. First, in \cite{MS} the general  multidimensional case was considered. In contrast, in this paper we consider the one-dimensional case, which then means that we can get more precise information on the growth of the solution to Poisson equations like $(\ref{E:Phi})$, see Lemma \ref{T:regularityCIR}. In turn, these more precise growth rates lead to more relaxed conditions. Second, in the setting of (\ref{eq:Main2a}) and in contrast to the general setting of \cite{MS}, the coefficients are only functions of $y$ and not of $x$.  This then implies that the solution to the PDE $(\ref{E:Phi})$ is also only a function of $y$. Hence, a number of terms that need to be taken into account in \cite{MS}, see Appendix C there, are basically zero here (since $\Phi$ does not depend on $x$).  These two circumstances then imply that we can relax the growth conditions on the coefficients of the model. Then, the exact same methodology as in \cite{MS} yields the proof of Theorem \ref{T:LongTimeMDP}. The full argument is not repeated here due to its similarity to the argument in \cite{MS}.
\end{remark}

\begin{example}[Heston model]\label{ex:HestonLarge}
For the Heston model~\eqref{eq:Heston}, the rescalings~\eqref{Eq:TimeSpaceTransformationIntro}
yield the system
\begin{equation}\label{eq:HestonEpsLargeTime}
\begin{array}{rll}
\D X_t^{\eps} & = \displaystyle -\frac{\eps}{2\delta}Y_t^{\eps}\D t
 + \sqrt{\eps}\sqrt{Y_t^{\eps}}\D W_t, \quad & X_0^{\eps}=\delta x_{0},\\
\D Y_t^{\eps} & = \displaystyle \kappa(\theta - Y_t^{\eps})\frac{\eps}{\delta^{2}}\D t
 + \frac{\sqrt{\eps}}{\delta}\xi\sqrt{Y_t^{\eps}} \D Z_t,
\quad & Y_0^{\eps} = y_{0},\\
\D\left\langle W,Z\right\rangle_t & = \rho \D t.
\end{array}
\end{equation}
In this case, the invariant measure~$\mu$ has the Gamma density~\cite[Section 33.4]{FouqueEtall2011}
\begin{equation}\label{eq:GammaDensity}
  m(y) = \frac{(2\kappa/\xi^2)^{2\kappa \theta/\xi^2}}{\Gamma(2\kappa \theta/\xi^2)} y^{2\kappa \theta/\xi^2 - 1} \E^{-2\kappa y/ \xi^2},
  \qquad \text{for }y>0.
\end{equation}
Furthermore, it is easy to check that Assumption~\ref{C:ergodic} is satisfied,
and so is Assumption~\ref{C:StrongSolution} by~\cite[Chapter~5, Theorem~2.9]{KarShreve},
so that Theorem~\ref{T:LongTimeMDP} holds with
$$
\alpha = -\frac{\zeta\theta}{2}
\qquad\text{and}\qquad
q =\int_\RR y \left( 1 +  \left|\xi\Phi^{\prime}(y) \right|^{2} + 2\rho\xi\Phi^{\prime}(y) \right) \mu(\D y).
$$

The Poisson equation~\eqref{E:Phi} can be solved explicitly as $\Phi^{\prime}(y)=-\frac{1}{2\kappa}$,
which yields
\begin{equation}\label{eq:PoissonqHeston}
q=\theta\left(1+\frac{\xi^{2}}{4\kappa^{2}}-\frac{\rho\xi}{\kappa}\right)>0.
\end{equation}
\end{example}
\begin{remark}[Connections to large deviations]\label{rem:MDPLDPCurv}
For the Heston model in Example~\ref{ex:HestonLarge}, consider the regime $\eps=\delta$,
so that $\gamma=1$ and $\zeta=\alpha=0$.
Taking $h(\eps) = \eps^{-\beta}$ with $\beta \in (0,1/2)$ implies an LDP for
the sequence of processes $(\eta^\eps = \eps^{\beta-1/2}X_\cdot^\eps)_{\eps>0}$,
or, in terms of the original process, an LDP for
$\eps^{\beta+1/2}X_{\cdot/\eps}$.
Setting $t=1$ and mapping~$\eps\mapsto t^{-1}$ yields an LDP for
$X_{t} / t^{\beta+1/2}$ as~$t$ tends to infinity.
The MDP from Theorem~\ref{T:LongTimeMDP} therefore implies that, for any $x> 0$:
$$
\lim_{t\uparrow\infty}t^{-2\beta}\log\PP\left(\frac{X_t}{t^{\beta+1/2}}\geq x\right)
 = -\inf\left\{I(\phi): \phi_{0}=0,\phi_{1}\geq x\right\}.
$$
The Euler-Lagrange equation for this minimisation problem is simply $-q\ddot{\phi}_t=0$, which yields,
with the boundary conditions, the optimal path $\phi_t = x t $.
Therefore, we obtain
$$
\lim_{t\uparrow\infty}t^{-2\beta}\log\PP\left(\frac{X_t}{t^{\beta+1/2}}\geq x\right)
 = -\frac{x^2}{2q},
$$
with~$q$ given in~\eqref{eq:PoissonqHeston}.
Now, the large-time large deviations regime was proved in~\cite{FordeJac, JKRM}, with
$$
\lim_{t\uparrow\infty}t^{-1}\log\PP\left(t^{-1}X_t\geq x\right)
 = -\Lambda^*(x),
$$
and rate function~$\Lambda^{*}$ available in closed form as
$$
\Lambda^*(x):=u^*(x)x-\Lambda(u^*(x)),\quad\text{for all }x\in\RR,
$$
with
\begin{align*}
\Lambda(u) & := \frac{\kappa\theta}{\xi^2}\Big(\kappa-\rho\xi u-d(u)\Big)
\qquad\text{and}\qquad
d(u) := \sqrt{(\kappa-\rho\xi u)^2 + \xi^2 u\left(1-u^2\right)},\quad\text{for all }u\in\left(u_-,u_+\right),\\
u^*(x) & := \frac{\xi-2\kappa\rho+(\kappa\theta\rho+x\xi)\widehat{\zeta}\left(x^2\xi^2+2x\kappa\theta\rho\xi+\kappa^2\theta^2\right)^{-1/2}}{2\xi(1-\rho^2)},
\quad
u_{\pm}:=\frac{\xi-2 \kappa \rho \pm \widehat{\zeta}}{2\xi(1-\rho^2)},
\end{align*}
and $\widehat{\zeta} := \sqrt{4\kappa^2+\xi^2-4\kappa\rho\xi}$.
Here again, the moderate rate function turns out to be the second-order Taylor expansion
of the large deviations rate function at its minimum.
More precisely, it is easy to show that~$\Lambda^*$ attains its minimum at~$-\theta/2$,
which corresponds to the limiting behaviour (at $t=1$) of $X_1^{\eps}$ as~$\eps$ tends to zero.
Indeed, due to ergodicity, we have that $\D \overline{X}_t = -\frac{1}{2}\theta\D t$.
We can also show that
$\partial_{xx}^2\Lambda^*(-\theta/2) = q^{-1}$,
where the constant~$q$ is given by the moderate deviations regime~\eqref{eq:PoissonqHeston}.
Hence, the moderate deviations rate function  characterizes the local curvature of the large deviations rate function around its minimum.

\begin{figure}[ht]
\includegraphics[scale=0.4]{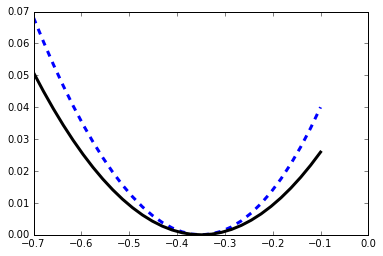}
\caption{Comparison of the MDP rate function (black, solid) and the LDP rate function (blue, dash) in the Heston model.
The MDP rate function has been shifted so that the minima match.
The parameters are
$\kappa, \theta, \sigma, \rho = 2.7609, 0.7, -0.6, 0.8$.
}
\end{figure}
\end{remark}
\section{Limiting behaviour for integrated functionals}\label{S:MDPIntegratedFunctionals}
We now consider an extension of the previous results to functionals of diffusion processes,
and prove moderate deviations thereof.
This problem has already been approached in~\cite{GuillinLipster2005}, but, using weak convergence techniques,
we are able to relax some of the their assumptions, making the results amenable to applications in mathematical finance, see also Remark \ref{R:LispterResultRemark}. We consider the model
\begin{equation}\label{eq:Main3c}
\begin{array}{rl}
\D X^{\eps}_t & = \displaystyle -\frac{1}{2}\frac{\eps}{\delta}\sigma^2(X^{\eps}_t,Y^{\eps}_t) \D t + \sqrt{\eps}\sigma(X^{\eps}_t,Y^{\eps}_t)\D W_t,\\
\D Y^{\eps}_t & = \displaystyle \frac{\eps}{\delta^{2}}f(X^{\eps}_t,Y^{\eps}_t) \D t + \frac{\sqrt{\eps}}{\delta}g(X^{\eps}_t,Y^{\eps}_t)\D Z_t,\\
\D\langle W, Z\rangle_t & = \rho \D t,
\end{array}
\end{equation}
which corresponds to the original system~\eqref{eq:MainIntro} using the general rescaling~\eqref{Eq:TimeSpaceTransformationIntro}
(as in the large-time framework above), whenever the coefficients do not depend on~$X^\eps$.
For given sets $A,B$, for $i,j\in\mathbb{N}$ and $\alpha>0$ we denote by $\mathcal{C}_b^{i, j + \alpha}(A \times B)$, the space of functions with $i$ bounded derivatives in $x$ and $j$ derivatives in $y$, with all partial derivatives being $\alpha$-H\"{o}lder continuous with respect to $y$, uniformly in $x$.
For any $x\in\RR$, let $\Ll^{Y}_{x}$ denote the infinitesimal generator of the $Y$ process  when $\epsilon=\delta=1$ (i.e. before the space/time rescalings):
\begin{align}
  \Ll^{Y}_{x} h(y) &=  f(x,y) h^{\prime}(y) + \frac{1}{2} g^{2}(x,y)h^{\prime\prime}(y),\label{Eq;Operator1}
\end{align}
and let $\mu_{x}(\D y)$ be the invariant measure corresponding to the infinitesimal generator $\Ll^{Y}_{x}$, which is guaranteed to exist by Assumption~\ref{C:ergodic1} below. With
\begin{equation}\label{eq:LimitEpsOverDelta}
\lim_{\eps \downarrow 0} \frac{\eps}{\delta}=\gamma\in(0,\infty),
\end{equation}
let us set
$$
\lambda(x,y) :=-\frac{1}{2}\gamma\sigma^{2}(x,y)
\qquad\text{and}\qquad
\overline{\lambda}(x) := \int_{\RR}\lambda(x,y)\mu_{x}(\D y).
$$

It is a classical result~\cite{PV01} that $X^\eps$ converges in probability to~$\overline{X}$ as~$\eps$ tends to zero, where
\begin{equation} \label{E:averagedSDE1}
  \overline{X}_t := x_{0}+ \int_{0}^{t}\overline{\lambda}(\overline{X}_{s})\D s.
\end{equation}

For an appropriate function $H(x,y)$, let $\overline{H}(x):=\int_{\RR} H(x,y)\mu(\D y)$;
 we are interested in the LDP for
\begin{equation}\label{Eq:IntergatedProcess}
R^{\eps} := \frac{1}{\sqrt{\eps}h(\eps)}\int_{0}^{\cdot}\left(H(X^{\eps}_{s},Y^{\eps}_{s})-\overline{H}(X^{\eps}_{s})\right)\D s.
\end{equation}
The following assumption is a counterpart to Assumption~\ref{C:ergodic} in the more general
configuration with full dependence on~$x$ and~$y$ in the coefficients.

\begin{assumption} \label{C:ergodic1}\
\begin{enumerate}[(i)]
\item{There is $\alpha>0$ such that 
either $f,g\in \mathcal{C}^{1,\alpha}(\RR)$ and $H \in \mathcal{C}^{2,1+\alpha}(\RR)$,
or $f,g\in \mathcal{C}^{1,2+\alpha}(\RR)$ and $H \in \mathcal{C}^{2,\alpha}(\RR)$.}
\item{ There exist constants $0<K<\infty$ and $0\leq q_{\sigma}<1$ such that
\begin{equation*}
    \lvert \sigma(x, y) \rvert \le K (1 + \lvert y \rvert^{q_{\sigma}}).
\end{equation*}
In addition, $\sigma$ is Lipshitz continuous in $x$ locally uniformly with respect to $y$.}
\item{We can write $f(x,y)=-\kappa y+\tau(x,y)$ with $\kappa>0$, $\tau(x,y)$ is a globally Lipschitz function in $y$, uniformly with respect to $x$, with Lipschitz constant $L_{\tau}<\kappa$,
and $\sup_{x\in\RR}f(x,y)y\leq -\kappa|y|^{2} $ for $|y|$ large enough.}
\item{The function $g$ is either uniformly continuous and bounded from above and away from zero or alternatively it takes the form $g(y)=\xi y^{q_{g}}$ for $q_{g} \in[1/2,1)$ and $\xi$ a non-zero constant. In the latter case we also assume that $f(x,y)=\kappa(\theta-y)$.
}
\item{There exist constants $0<K<\infty$ and $q_{H} \geq 0$ such that
\[
\lvert H(x, y) \rvert +\lvert \partial_{x}H(x, y) \rvert + \lvert \partial_{x}^{2}H(x, y) \rvert \le K (1 + \lvert y \rvert^{q_{H}}).
\]
}
\end{enumerate}
\end{assumption}

\begin{assumption}\label{C:StrongSolution1}
The SDE~\eqref{eq:Main3c} has a unique strong solution.
\end{assumption}

Let us further consider $u(x,\cdot)$ to be the solution to the Poisson equation
\begin{equation}\label{Eq:uPoisson}
\Ll^{Y}_{x}u(x,y)=H(x,y)-\overline{H}(x), \quad \int_{\mathcal{Y}}u(x,y)\mu_{x}(dy)=0.
\end{equation}

By Theorem~\ref{T:regularity} and Lemma~\ref{T:regularityCIR}, $u$ is a well-defined classical solution that can grow at most polynomially in~$|y|$.
Next, due to compactness issues, that we will see in the proof of tightness in Section~\ref{SS:Tightness}, we need to restrict the values of the constants $q_{\sigma},q_{g}, q_{H}$ that dictate the growth in~$|y|$ of the coefficients~$\sigma$, $g$ and~$H$ respectively.
We collect these constraints in the following assumption:
\begin{assumption}\label{C:GrowthAssumptions}\
\begin{enumerate}[(i)]
\item{If the operator $\Ll^{Y}_{x}$ given by~\eqref{Eq;Operator1} depends generally on $(x,y)$, we assume that
\[
\max\{q_{\sigma}+q_{H}, q_{g}+q_{H}\}<1.
\]}
\item{If  the operator in~\eqref{Eq;Operator1} takes the special form
\begin{align}\label{Eq:CIR}
  \Ll^{Y} h(y) &=  \kappa(\theta-y) h^{\prime}(y) + \frac{1}{2} \xi^{2} y^{2q_{g}} h^{\prime\prime}(y),
\end{align}
with $q_{g}\in[1/2,1)$, and $H(x,y)=H(y)$ is only a function of $y$, then we assume that
$$
 q_{\sigma}<1
 \qquad\text{and}\qquad
 q_{g}+q_{H}<2,
$$
where the constants $q_{g}, q_{H}$ and the function $h(\cdot)$ are such that
$\displaystyle \lim_{\eps\downarrow 0}\sqrt{\eps}h(\eps)^{\frac{q_{g}+q_{H}-1}{1-q_{g}}}=0$.
}
\end{enumerate}
\end{assumption}
Notice that for Assumption~\ref{C:GrowthAssumptions}(ii),
if $q_{H}=1$, then assuming $h(\eps) = \eps^{-\beta}$ with $\beta \in (0,1/2)$
yields $\sqrt{\eps}h(\eps)^{q_g/(1-q_g)} = \eps^{1/2-\beta q_g / (1-q_g)}$,
so that Assumption~\ref{C:GrowthAssumptions} is satisfied if and only if
$q_g < 1/(2\beta+1) \in (1/2,1)$.
In order to state the main result, we shall need one additional assumption,
merely for technical reasons.
Given the solution~$u(\cdot,\cdot)$ to the Poisson equation~\eqref{Eq:uPoisson}
and the constant~$\gamma$ defined in~\eqref{eq:LimitEpsOverDelta}, introduce the following:	
\begin{align}\label{Eq:BarQDef}
 \overline{Q}(x) :=\int_{\RR}Q(x,y)\mu(\D y)
 \qquad\text{and}\qquad
 Q(x,y) := \frac{1}{\gamma^{2}}|u_{y}(x,y)g(y)|^{2}.
\end{align}

\begin{assumption}\label{C:Positiveness}
The Lebesgue measure of the set $\left\{s\in[0,T]: \overline{Q}(\overline{X}_{s})=0\right\}$ is equal to zero.
\end{assumption}

At this point, we mention that the reason we single out the operator~\eqref{Eq:CIR}
is because in this case we can have more detailed information on the behaviour of the solutions to the corresponding Poisson equation~\eqref{Eq:uPoisson}.
This is discussed in detail in Lemma~\ref{T:regularityCIR}.
To be specific, in the special case of~\eqref{Eq:CIR} with $H(x,y)\equiv H(y)$,
the solution to the PDE~\eqref{Eq:uPoisson} grows like $|y|^{q_{H}}$ whereas the derivative grows like $|y|^{q_{H}-1}$ for $q_{H}\geq 1$. On the other hand, in the general case, one can only guarantee that both solution and derivatives to the solution of PDE~\eqref{Eq:uPoisson}
grow like $|y|^{q_{H}}$.
Then, we have the following result, proved in Sections~\ref{S:ControlRep}-\ref{S:ProofIntegratedFunctMDP}:
\begin{theorem} \label{T:IntegratedFunctionsl}
Let Assumptions~\ref{C:ergodic1},~\ref{C:StrongSolution1}, \ref{C:GrowthAssumptions} and~\ref{C:Positiveness} be satisfied.
Then, the sequence $\{R^{\eps},\eps>0\}$ from~\eqref{Eq:IntergatedProcess} satisfies the LDP (equivalently MDP for $\int_{0}^{\cdot}\left(H(X^{\eps}_{s},Y^{\eps}_{s})-\overline{H}(X^{\eps}_{s})\right)\D s$), with speed $h^{2}(\eps)$ and  rate function
\begin{equation*}
I(\phi) =
\inf\left\{
\frac{1}{2}\int_{0}^{T}|v_{s}|^{2}\D s:
v\in L^{2}([0,T];\RR),
\phi_{\cdot} =  \int_{0}^{\cdot} \overline{Q}^{1/2}(\overline{X}_s) v_{s}\D s\right\}.
\end{equation*}
\end{theorem}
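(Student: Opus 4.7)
The plan is to follow the weak convergence / Laplace principle methodology of~\cite{DE97}. Since $\Cc([0,T];\RR)$ is Polish and $I$ can be shown to have compact level sets via Assumption~\ref{C:Positiveness}, it suffices to verify the Laplace principle: for every bounded continuous $F:\Cc([0,T];\RR)\to\RR$,
\[
\lim_{\eps\downarrow 0}-\frac{1}{h^{2}(\eps)}\log\EE\bigl[\E^{-h^{2}(\eps) F(R^{\eps})}\bigr] = \inf_{\phi\in\Cc([0,T];\RR)}\{F(\phi)+I(\phi)\}.
\]
By the Bou\'e--Dupuis variational formula applied to the Brownian pair $(W,Z)$, the left-hand side equals
\[
\inf_{(u^{\eps},v^{\eps})}\EE\Bigl[\tfrac{1}{2}\int_{0}^{T}(|u^{\eps}_{s}|^{2}+|v^{\eps}_{s}|^{2})\D s + F(\widetilde{R}^{\eps})\Bigr],
\]
where the infimum runs over progressively measurable square-integrable controls and $\widetilde{R}^{\eps}$ is built from the controlled pair $(\widetilde{X}^{\eps},\widetilde{Y}^{\eps})$ obtained by replacing $\D W_{t}$ and $\D Z_{t}$ with $\D W_{t}+h(\eps)u^{\eps}_{t}\D t$ and $\D Z_{t}+h(\eps)v^{\eps}_{t}\D t$ in~\eqref{eq:Main3c}. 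The strategy is then: (i) prove tightness of $\{(\widetilde{X}^{\eps},\widetilde{R}^{\eps},v^{\eps},u^{\eps},P^{\eps})\}$, where $P^{\eps}$ is the occupation measure of $\widetilde{Y}^{\eps}$ on $[0,T]\times\RR$; (ii) identify any subsequential limit as $\phi=\int_{0}^{\cdot}\overline{Q}^{1/2}(\overline{X}_{s})v_{s}\D s$ after an averaging of $v^{\eps}$ against the invariant measure; (iii) conclude via standard lower-semicontinuity and near-optimal-control arguments.

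The analytic heart of the argument is a Poisson equation corrector step. I apply It\^o's formula to $u(\widetilde{X}^{\eps}_{t},\widetilde{Y}^{\eps}_{t})$, where $u$ solves~\eqref{Eq:uPoisson}. Since $\Ll^{Y}_{x}u=H-\overline{H}$ and the dominant part of the generator of $\widetilde{Y}^{\eps}$ is $\tfrac{\eps}{\delta^{2}}\Ll^{Y}_{\widetilde{X}^{\eps}}$, solving for $\int_{0}^{\cdot}(H-\overline{H})(\widetilde{X}^{\eps}_{s},\widetilde{Y}^{\eps}_{s})\D s$ and dividing by $\sqrt{\eps}h(\eps)$ rewrites $\widetilde{R}^{\eps}$ as a sum of boundary terms $\tfrac{\delta^{2}}{\eps\sqrt{\eps}h(\eps)}[u(\widetilde{X}^{\eps}_{t},\widetilde{Y}^{\eps}_{t})-u(x_{0},y_{0})]$, cross and drift terms that vanish by a direct size-count using $\eps/\delta\to\gamma$ and $\sqrt{\eps}h(\eps)\to 0$, a small martingale $-\tfrac{\delta}{\eps h(\eps)}\int_{0}^{\cdot}\partial_{y}u\,g\,\D Z$, and the crucial control-driven term $-\tfrac{\delta}{\eps}\int_{0}^{\cdot}\partial_{y}u(\widetilde{X}^{\eps}_{s},\widetilde{Y}^{\eps}_{s})g(\widetilde{Y}^{\eps}_{s})v^{\eps}_{s}\D s$. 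As $\delta/\eps\to 1/\gamma$, in the limit this last term becomes $-\tfrac{1}{\gamma}\int_{0}^{\cdot}\!\!\int_{\RR}\partial_{y}u(\overline{X}_{s},y)g(y)v_{s}(y)\,\mu_{\overline{X}_{s}}(\D y)\D s$ after averaging, and minimising $\tfrac{1}{2}\int_{0}^{T}\!\!\int v^{2}\,\mu\,\D s$ over controls producing a fixed $\phi$ collapses by Cauchy--Schwarz to $\overline{Q}^{1/2}(\overline{X}_{s})$, matching the claimed rate function~$I$.

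The main obstacle is tightness together with uniform integrability of the polynomially growing quantities $\sigma(\widetilde{Y}^{\eps})$, $g(\widetilde{Y}^{\eps})$ and $\partial_{y}u(\widetilde{X}^{\eps},\widetilde{Y}^{\eps})$. The mean-reverting structure in Assumption~\ref{C:ergodic1}(iii)--(iv) delivers, via a Lyapunov estimate on $|y|^{p}$, the bound $\sup_{\eps}\EE[\sup_{t\in[0,T]}|\widetilde{Y}^{\eps}_{t}|^{p}]<\infty$ provided the controls $v^{\eps}$ are uniformly bounded in $L^{2}$, which is available in the variational representation by restricting to a fixed cost level. The delicate point is that the drift perturbation from $v^{\eps}$ enters $\widetilde{Y}^{\eps}$ at rate $\tfrac{\sqrt{\eps}}{\delta}h(\eps)g\sim h(\eps)g$, which diverges; its compatibility with the contraction from $-\kappa y$ in the Lyapunov estimate is exactly what enforces the growth constraints $\max(q_{\sigma}+q_{H},q_{g}+q_{H})<1$ in the general case, and the sharper condition $\lim_{\eps\downarrow 0}\sqrt{\eps}h(\eps)^{(q_{g}+q_{H}-1)/(1-q_{g})}=0$ in the CIR case, which constitute Assumption~\ref{C:GrowthAssumptions}. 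The polynomial bounds on $u$ and $\partial_{y}u$ needed to close the estimate are supplied by Theorem~\ref{T:regularity} and Lemma~\ref{T:regularityCIR}. Once tightness and uniform integrability are in place, identification of the limit through the Poisson expansion and the ergodic convergence $P^{\eps}\Rightarrow\mu_{\overline{X}_{s}}\otimes\D s$ is routine, and the upper and lower Laplace bounds follow from lower-semicontinuity of the cost functional and an explicit construction of near-optimal controls that mimic the minimiser in~$I$.
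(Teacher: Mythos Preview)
Your overall strategy is the same as the paper's: the Bou\'e--Dupuis representation, the Poisson corrector expansion of $R^{\eps}$ via~\eqref{Eq:uPoisson}, occupation measures to capture the averaging of the fast component, and then the Laplace lower and upper bounds. So the architecture is right.

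There is, however, a genuine misstatement in your tightness step. You claim that the mean-reverting structure yields
\[
\sup_{\eps}\EE\Bigl[\sup_{t\in[0,T]}|\widetilde{Y}^{\eps}_{t}|^{p}\Bigr]<\infty
\]
once the controls are uniformly $L^{2}$-bounded. This is false: the control enters the $\widetilde{Y}^{\eps}$ drift at rate $\tfrac{\sqrt{\eps}}{\delta}h(\eps)g\sim \gamma^{-1}h(\eps)g$, and a Lyapunov computation only gives (Lemma~\ref{L:Ygrowth})
\[
\EE\Bigl[\sup_{t\in[0,T]}|\widetilde{Y}^{\eps}_{t}|^{2}\Bigr]\le K\Bigl(1+(\sqrt{\eps}/\delta)^{2(1-\nu)}+h(\eps)^{2/(1-q_{g})}\Bigr),
\]
which diverges as $\eps\downarrow 0$. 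What \emph{is} uniformly bounded is the time-integrated second moment $\sup_{\eps}\EE\int_{0}^{T}|\widetilde{Y}^{\eps}_{s}|^{2}\D s$ (Lemma~\ref{L:YIntegralgrowth}), and this is what the paper actually uses to control the Riemann and stochastic integral terms via Lemma~\ref{L:productBound}. The boundary term $\tfrac{\delta^{2}}{\eps^{3/2}h(\eps)}[u(\widetilde{X}^{\eps}_{t},\widetilde{Y}^{\eps}_{t})-u(x_{0},y_{0})]$ does require the sup-in-time bound, but one does not need it to be uniform: the prefactor $\tfrac{\delta^{2}}{\eps^{3/2}h(\eps)}$ decays fast enough to kill $h(\eps)^{q_{H}/(1-q_{g})}$, and \emph{this} balancing act is precisely what Assumption~\ref{C:GrowthAssumptions} encodes (see~\eqref{Eq:FirstTermConv}). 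So the growth constraints do not restore a uniform Lyapunov bound as you suggest; they ensure the divergent moment is outpaced by the vanishing prefactor term-by-term.

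A secondary point: the paper's occupation measure $P^{\eps,\Delta}$ lives on $\Zz\times\Zz\times\Yy\times[0,T]$, i.e.\ it carries the controls as well as the fast variable, and the limit is characterised through the viable-pair formalism (Definition~\ref{D:ViablePair}, Theorem~\ref{T:control}). Your simpler $P^{\eps}$ on $[0,T]\times\RR$ alone, with the controls handled separately, makes the limit identification and the Cauchy--Schwarz step less transparent, since one must still couple the limiting control to the limiting invariant measure $\mu_{\overline{X}_{s}}$; the viable-pair machinery is what the paper uses to do this cleanly.
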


\begin{remark}\label{R:LispterResultRemark}
We point out that a result similar to Theorem~\ref{T:IntegratedFunctionsl}
has also been established in~\cite{GuillinLipster2005} using different methods.
However, the results of~\cite{GuillinLipster2005} are not sufficient for our purposes,
as they assumed that both functions~$g$ and~$H$ are bounded
($H$ is even assumed to be bounded by $K(1+|y|^{-(2+\eta)})$ for some $\eta>0$).
In our work we allow the coefficients to grow according to Assumptions~\ref{C:ergodic1}
and~\ref{C:GrowthAssumptions}.
In turn this allows us to consider a considerably wider class of models. We see some examples below.
\end{remark}

\section{Financial applications: asymptotic behaviour of option prices}\label{S:FinancialApplications}\
We can use Corollary~\ref{C:MDP_FinancialModelsOneComp} for the short-time asymptotics and Theorem~\ref{T:LongTimeMDP} for the long-time asymptotics to get statements for Call option prices analogous to Theorem 6.2 and 6.3 of~\cite{FGP}. Theorem \ref{T:IntegratedFunctionsl} is used to obtain estimates on asymptotics for realised variance.

\subsection{Tail estimates}
We show here how the general result in Theorem~\ref{T:GeneralResult},
as well as its lighter versions in Proposition~\ref{P:MDP_FinancialModelsTwoComp}
and Corollary~\ref{C:MDP_FinancialModelsOneComp} yield, for some suitable scaling, tail estimates
for probabilities.
We shall consider the following forms for the coefficients of the SDE~\eqref{eq:Main}:
\begin{assumption}\label{assu:TailScaling}
There exist $(a, b, c_g, c_\sigma) \in\RR^4$,
together with $\nu_{\sigma} \in (0,1]$ and $\nu_{g} \in [0,1-\nu_{\sigma}]$ such that
$$
f(t, x, y) = a+by,
\qquad
g(x, y) = c_g y^{\nu_g},
\qquad
\sigma(x, y) = c_\sigma y^{\nu_\sigma},
\qquad
\text{for all }t, x, y.
$$
\end{assumption}
These assumptions may look restrictive, but are in fact sufficient for many financial applications,
as detailed further below.
Note that we excluded the trivial case $\nu_\sigma=0$ as the SDE for~$X$ is then independent of~$Y$,
and can be dealt with directly by hand.

\begin{proposition}\label{prop:Tail}
Consider the model~\eqref{eq:Main} under Assumptions~\ref{A:MainGeneralAssumption}
and~\ref{assu:TailScaling}.
Then, with $\zeta:= \frac{1-\nu_{g}+\nu_{\sigma}}{2(1-\nu_{g})}$,
the sequence $(\eps^{\zeta} X)$ satisfies a moderate deviations principle
on $\Cc([0,T],\RR)$ with speed~$h(\eps)^2$ and rate function
 \begin{equation*}
I(\phi) =
\left\{\begin{array}{ll}
\displaystyle \frac{1}{2\sigma(y_{0})^2} \int_0^T |\dot{\phi}_s|^{2}\D s,
 & \text{if }\phi \in \mathcal{AC}([0, T],\RR) \text{ and } \phi_{0}=0,\\
+\infty, & \text{otherwise}.
\end{array}
\right.
\end{equation*}
\end{proposition}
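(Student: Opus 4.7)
The strategy is to exploit the power-law structure of Assumption~\ref{assu:TailScaling} to turn the SDE~\eqref{eq:Main} into a small-noise system fitting Theorem~\ref{T:GeneralResult}. With $\alpha := 1/(2(1-\nu_{g}))$, introduce the joint space rescaling
$$
X^{\eps}_{t} := \eps^{\zeta} X_{t}, \qquad Y^{\eps}_{t} := \eps^{\alpha} Y_{t}.
$$
Substituting the explicit forms of $\sigma$, $g$ and $f$ from Assumption~\ref{assu:TailScaling}, the pair $\XX^{\eps}=(X^{\eps},Y^{\eps})$ satisfies
\begin{align*}
\D X^{\eps}_{t} &= -\tfrac{c_{\sigma}^{2}}{2}\eps^{1-\zeta}(Y^{\eps}_{t})^{2\nu_{\sigma}}\,\D t + c_{\sigma}\sqrt{\eps}\,(Y^{\eps}_{t})^{\nu_{\sigma}}\,\D W_{t},\\
\D Y^{\eps}_{t} &= (a\eps^{\alpha} + bY^{\eps}_{t})\,\D t + c_{g}\sqrt{\eps}\,(Y^{\eps}_{t})^{\nu_{g}}\,\D Z_{t},
\end{align*}
with $(X^{\eps}_{0},Y^{\eps}_{0})=(\eps^{\zeta}x_{0},\eps^{\alpha}y_{0})$. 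The key observation is that both noise coefficients acquire the $\sqrt{\eps}$ prefactor demanded by~\eqref{Eq:GeneralSDEModel}: this is precisely equivalent to the two algebraic identities $\zeta-\alpha\nu_{\sigma}=1/2$ and $\alpha(1-\nu_{g})=1/2$, which in turn pin down the stated formula for $\zeta$.

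\textbf{Steps.} First I would verify Assumption~\ref{A:MainGeneralAssumption} for the rescaled coefficients $\bb_\eps,\ar_\eps$: uniform convergence on bounded sets together with local Lipschitz continuity follow directly from the polynomial form of the coefficients, while the quadratic growth bound relies on the constraint $\nu_{\sigma}+\nu_{g}\leq 1$ built into Assumption~\ref{assu:TailScaling}. Then I would apply Theorem~\ref{T:GeneralResult} to obtain a pathwise MDP for $\XX^{\eps}$ with speed $h^{2}(\eps)$, and finally project onto the first component by the contraction principle, exactly as in the proof of Corollary~\ref{C:MDP_FinancialModelsOneComp}: the Euler--Lagrange equation for the auxiliary $Y$-trajectory is again linear, so the minimiser can be computed in closed form and reproduces the stated rate function $I(\phi)=\frac{1}{2\sigma(y_{0})^{2}}\int_{0}^{T}|\dot{\phi}_{s}|^{2}\,\D s$.

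\textbf{Main obstacle.} The subtlety is that $(X^{\eps}_{0},Y^{\eps}_{0})\to(0,0)$, so a crude application of Theorem~\ref{T:GeneralResult} yields a limiting $\ar(\overline{\XX})$ involving $\sigma(0)=0$ and $g(0)=0$, hence a degenerate variational problem, whereas the target rate function features the non-trivial $\sigma(y_{0})=c_{\sigma}y_{0}^{\nu_{\sigma}}$. The resolution is to track the leading-order deterministic profile $Y^{\eps}_{t}\sim\eps^{\alpha}y_{0}\E^{bt}$ of the $Y$-component so that $(Y^{\eps}_{t})^{\nu_{\sigma}}$ extracts a factor $\eps^{\alpha\nu_{\sigma}}=\eps^{\zeta-1/2}$; this combines with the $\sqrt{\eps}$ prefactor in the noise to give the correct effective scale $\eps^{\zeta}\sigma(y_{0})$ of fluctuations of $X^{\eps}$, precisely matching the scaling of the statement and depositing $\sigma(y_{0})^{2}$ into the denominator of $I$. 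Making this bookkeeping rigorous is the technical heart of the argument, and will most naturally be done either via the controlled representation of Section~\ref{S:ControlRep} applied to $(X^{\eps},Y^{\eps})$ or via a Gr\"onwall-type linearisation of $Y^{\eps}$ around its deterministic profile $\eps^{\alpha}y_{0}\E^{bt}$, together with standard uniform moment bounds under Assumption~\ref{assu:TailScaling}.
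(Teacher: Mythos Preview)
Your space-rescaling is exactly the approach the paper takes: the paper sets $X^{\delta}=\delta X$, $Y^{\delta}=\delta^{\alpha}Y$, solves for $\alpha=(1-\nu_g+\nu_\sigma)^{-1}$ by matching the two diffusion exponents, and then relabels $\sqrt{\eps}:=\delta^{\gamma}$ with $\gamma=1-\alpha\nu_\sigma$, so that $\zeta=1/(2\gamma)$. Your choice $\alpha:=1/(2(1-\nu_g))$ is precisely the paper's $\alpha/(2\gamma)$, and your two algebraic identities $\zeta-\alpha\nu_\sigma=1/2$, $\alpha(1-\nu_g)=1/2$ are equivalent to the paper's single matching condition. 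So at the level of strategy there is no difference.

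Where you diverge is in care: the paper's proof simply asserts ``Theorem~\ref{T:GeneralResult} then applies'' and stops, without deriving the rate function or discussing the rescaled initial condition. You are right that this is a genuine gap. The rescaled starting point $(\eps^{\zeta}x_0,\eps^{\alpha}y_0)$ tends to $(0,0)$, the limiting ODE $\dot{\overline{Y}}=b\overline{Y}$, $\overline{Y}_0=0$ gives $\overline{Y}\equiv 0$, and hence $\ar(s,\overline{\XX}_s)$ has first row identically zero; a straight application of Theorem~\ref{T:GeneralResult} as stated yields a degenerate rate function, not the claimed $\frac{1}{2\sigma(y_0)^2}\int_0^T|\dot\phi_s|^2\,\D s$. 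The paper does not address this.

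Your proposed fix, however, does not close the gap either. Tracking the deterministic profile $Y^{\eps}_t\sim \eps^{\alpha}y_0\E^{bt}$ gives an effective diffusion for $X^{\eps}$ of order $c_\sigma\sqrt{\eps}\,(\eps^{\alpha}y_0\E^{bt})^{\nu_\sigma}=\eps^{\zeta}\sigma(y_0)\E^{b\nu_\sigma t}$, i.e.\ a \emph{time-dependent} volatility $\sigma(y_0)\E^{b\nu_\sigma t}$, which would produce the rate function $\frac{1}{2}\int_0^T \sigma(y_0)^{-2}\E^{-2b\nu_\sigma s}|\dot\phi_s|^2\,\D s$ rather than the constant-coefficient one stated in the proposition. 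So either the stated rate function is only correct in the borderline case $b=0$ (or $\nu_\sigma=0$, excluded by assumption), or a different argument is needed; in any case your linearisation around $\eps^{\alpha}y_0\E^{bt}$ does not reproduce the claimed $I(\phi)$.
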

\begin{proof}
Let $\alpha>0$ to be chosen.
Under Assumption~\ref{assu:TailScaling}, the model~\eqref{eq:Main} reads
\begin{equation*}
\begin{array}{rl}
\D X_t & = \displaystyle -\frac{c_\sigma^2}{2} Y_t^{2\nu_\sigma} \D t + c_\sigma Y_t^{\nu_\sigma}\D W_t,\\
\D Y_t & = (a+bY_t) \D t + c_g Y_t^{\nu_g}\D Z_t,\\
\D\langle W, Z\rangle_t & = \rho \D t,
\end{array}
\end{equation*}
with starting point $\XX_0 = (x_0, y_0)$.
The rescaling
$\XX^\delta_t = (X^\delta_t, Y^\delta_t) := (\delta X_t, \delta^{\alpha} Y_t)$ for all $t\geq 0$ yields the system
\begin{equation*}
\begin{array}{rl}
\D X_t^\delta & = \displaystyle -\frac{\delta^{1-2\alpha\nu_{\sigma}}c_\sigma^2}{2}(Y_t^\delta)^{2\nu_{\sigma}} \D t
 + c_\sigma\delta^{1-\alpha\nu_{\sigma}}(Y_t^{\delta})^{\nu_{\sigma}}\D W_t,\\
\D Y_t^\delta & = \left(a\delta^\alpha + b Y_t^{\delta}\right) \D t
 + c_g\delta^{\alpha(1-\nu_{g})}(Y_t^{\delta})^{\nu_{g}}\D Z_t,\\
\D\langle W, Z\rangle_t & = \rho \D t,
\end{array}
\end{equation*}
with starting point $\XX_0^\delta = (X_0^\delta, Y_0^\delta) = (\delta x_0, \delta^{\alpha}y_0)$.
Equating the~$\delta$ powers in the diffusion coefficients yields $1-\alpha\nu_{\sigma} = \alpha(1-\nu_{g})$, or
$\alpha = (1-\nu_{g}+\nu_{\sigma})^{-1}>0$,
so that
\begin{equation*}
\begin{array}{rl}
\D X_t^\delta & = \displaystyle -\frac{\delta^{1-2\alpha\nu_{\sigma}}c_\sigma^2}{2}(Y_t^\delta)^{2\nu_{\sigma}} \D t
 + c_\sigma\delta^{\gamma}(Y_t^{\delta})^{\nu_{\sigma}}\D W_t,\\
\D Y_t^\delta & = \left(a\delta^\alpha + b Y_t^{\delta}\right) \D t
 + c_g\delta^{\gamma}(Y_t^{\delta})^{\nu_{g}}\D Z_t,\\
\D\langle W, Z\rangle_t & = \rho \D t,
\end{array}
\end{equation*}
with $\gamma:=1-\alpha\nu_{\sigma}$.
Setting $\sqrt{\eps} := \delta^\gamma$ finally gives the system
\begin{equation*}
\begin{array}{rl}
\D X_t^\eps & = \displaystyle -\frac{\eps^{(1-2\alpha\nu_{\sigma})/(2\gamma)}c_\sigma^2}{2}(Y_t^\eps)^{2\nu_{\sigma}} \D t
 + c_\sigma\sqrt{\eps}(Y_t^{\eps})^{\nu_{\sigma}}\D W_t,\\
\D Y_t^\eps & = \left(a\eps^{\frac{1}{2(1-\nu_{g})}} + b Y_t^{\eps}\right) \D t
 + c_g\sqrt{\eps}(Y_t^{\eps})^{\nu_{g}}\D Z_t,\\
\D\langle W, Z\rangle_t & = \rho \D t,
\end{array}
\end{equation*}
with starting point $\XX_0^\eps = (\eps^{1/(2\gamma)} x_0, \eps^{\alpha/(2\gamma)}y_0)$.
Theorem~\ref{T:GeneralResult} then applies whenever~$\gamma>0$ and $1-2\nu_{\sigma}\alpha\geq 0$,
which is equivalent to $\nu_{g}\leq 1-\nu_{\sigma}$,
and the proposition follows by setting $\zeta:=1/(2\gamma)$.
\end{proof}

\begin{example}\
\begin{itemize}
\item In the Heston case~\eqref{eq:Heston},
Assumption~\ref{assu:TailScaling} holds with $a=\kappa\theta$, $b=-\kappa$, $c_\sigma=1$, $c_g=\xi$,
and $\nu_{\sigma}=1/2=\nu_{g}$,
so that Proposition~\ref{prop:Tail} applies with $\zeta=1$;
\item In the Stein-Stein case~\eqref{eq:SteinStein},
Assumption~\ref{assu:TailScaling} holds with $a=a$, $b=b$, $c_\sigma=1$, $c_g=c$, $\nu_{\sigma}=1$,
$\nu_{g}=0$, so that
Proposition~\ref{prop:Tail} applies with $\zeta=1$.
\end{itemize}
In both cases, for any $x>x_0$, we can write
$$
\lim_{\eps\downarrow 0}\frac{1}{h^2(\eps)}\log\PP\left(\frac{X_{t}^{\eps}}{\sqrt{\eps}h(\eps)} \geq x\right)
 = \lim_{\eps\downarrow 0}\frac{1}{h^2(\eps)}\log\PP\left(X_{t} \geq \frac{xh(\eps)}{\sqrt{\eps}}\right)
 = -\frac{x^2}{2\sigma(y_0)^2 t}.
$$
Again, the moderate deviations regime provides a closed-form representation
for the rate function, in contrast with the otherwise more abstract rate function
arising from large deviations, as developed in~\cite{DFJV1, DFJV2, GuliStein}.
\end{example}

\subsection{Small-time behaviour}\label{SS:SmallTimeExamples}
The following theorem is a reformulation of~\cite[Theorem 6.3]{FGP} and~\cite[Corollary 2.1 and Theorem 2.2]{FFK12},
based on the asymptotic behaviour (assuming $x_0=0$ for simplicity)
\begin{equation}\label{eq:FGPProba}
\PP(X_t\geq k_t) \sim \exp\left\{-\frac{k_t^2}{2 \sigma^{2}(0,y_{0}) t}\right\},
\end{equation}
as~$t$ tends to zero,
from Proposition~\ref{P:MDP_FinancialModelsTwoComp} and Corollary~\ref{C:MDP_FinancialModelsOneComp},
with $k_t := k \sqrt{t}h(t)>0$.
This behaviour in fact follows directly from our main result above;
in the model~\eqref{eq:Main}, the pathwise moderate deviations principle for
the first component (Corollary~\ref{C:MDP_FinancialModelsOneComp})
generalises (being pathwise and with weaker assumptions) the results from~\cite{FGP},
and directly yields~\eqref{eq:FGPProba}.
More precisely, from the proposed scaling $(X_t^{\eps}, Y_t^{\eps}):=(X_{\eps t}, Y_{\eps t})$,
Corollary~\ref{C:MDP_FinancialModelsOneComp} and Remark~\ref{rem:Projection1} imply,
as~$t$ tends to zero,
for $k>x_0=0$,
$$
\lim_{\eps\downarrow 0}\frac{1}{h(\eps)^2}\log\PP\left(\frac{X_\eps}{\sqrt{\eps}h(\eps)}\geq k\right) = -\frac{k^2}{2\sigma^{2}(0,y_{0})},
$$
which in turn implies~\eqref{eq:FGPProba},
identifying~$t$ to~$\eps$ and setting $k_t := kh(t)\sqrt{t}$.
In order to state the main result here, we need the following assumption:
\begin{assumption}\label{assu:ExplosionTime}
For any $p\geq 1$ there exists $t^{*}_{p}>0$ such that
$\EE\left(\E^{pX_t}\right)$ is finite for all $t\in [0,t^*_{p}]$.
\end{assumption}
\begin{remark}
The moment assumption in the theorem is classical in the mathematical finance literature
related to large and moderate deviations for option pricing.
It can be found in~\cite[Theorem 6.3]{FGP}
and in the more general framework~\cite[Assumption (A2)]{FrizPigato}.
Essentially, this assumption ensures that some integrability of the stock price~$\E^{X_t}$
is guaranteed (otherwise
Call prices may not be defined), which is not automatically granted by large and moderate deviations results.
We refer the reader to~\cite[Lemma 4.7]{FrizPigato} for some easy-to-check condition.
\end{remark}
\begin{theorem}
Under Assumption~\ref{assu:ExplosionTime} and the assumptions of Proposition~\ref{P:MDP_FinancialModelsTwoComp}, the Call price satisfies
$$
\lim_{t\downarrow 0}\frac{1}{h^2(t)}\log\EE\left(\E^{X_t} - \E^{k_t}\right)_+
 = -\frac{k^2}{2 \sigma^{2}(0,y_{0})},
 \qquad\text{for all }k>0.
$$
\end{theorem}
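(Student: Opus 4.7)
The plan is to obtain matching upper and lower bounds by a standard sandwich argument that reduces the Call payoff to the tail probability already controlled by Corollary~\ref{C:MDP_FinancialModelsOneComp} (combined with Remark~\ref{rem:Projection1}), the only novelty being the use of Assumption~\ref{assu:ExplosionTime} to tame the exponential in the payoff. Throughout I will use the key fact that $k_t = k\sqrt{t}h(t)\to 0$ (so $\E^{k_t}\to 1$) while $h^2(t)\to\infty$ with $\log k_t = o(h^2(t))$, which is automatic in the moderate regime $h(t)=t^{-\beta}$, $\beta\in(0,1/2)$.

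For the lower bound, I would fix $\eta>0$ and bound
\[
\EE\left(\E^{X_t}-\E^{k_t}\right)_+
\geq \E^{k_t}\left(\E^{\eta k_t}-1\right)\,\PP\left(X_t\geq (1+\eta)k_t\right).
\]
Taking logarithms and dividing by $h^2(t)$, the prefactors contribute $k_t/h^2(t)=k\sqrt{t}/h(t)\to 0$ and $\log(\E^{\eta k_t}-1)/h^2(t)\to 0$ (since $\log k_t=o(h^2(t))$), while Corollary~\ref{C:MDP_FinancialModelsOneComp} together with Remark~\ref{rem:Projection1} identifies $t$ with $\eps$ and yields
\[
\liminf_{t\downarrow 0}\frac{1}{h^2(t)}\log\PP\left(X_t\geq (1+\eta)k_t\right)
 \geq -\frac{(1+\eta)^2 k^2}{2\sigma^2(0,y_0)}.
\]
Sending $\eta\downarrow 0$ delivers the desired lower bound.

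For the upper bound, I would apply H\"older's inequality with conjugate exponents $p,q>1$:
\[
\EE\left(\E^{X_t}-\E^{k_t}\right)_+
 \leq \EE\left[\E^{X_t}\ind_{\{X_t\geq k_t\}}\right]
 \leq \EE\left[\E^{pX_t}\right]^{1/p}\PP\left(X_t\geq k_t\right)^{1/q}.
\]
Assumption~\ref{assu:ExplosionTime} ensures that for $t$ small enough (smaller than $t^*_p$), the moment $\EE[\E^{pX_t}]$ is finite and bounded, so $\frac{1}{h^2(t)}\log\EE[\E^{pX_t}]\to 0$. Combining with the tail asymptotics from Corollary~\ref{C:MDP_FinancialModelsOneComp},
\[
\limsup_{t\downarrow 0}\frac{1}{h^2(t)}\log\EE\left(\E^{X_t}-\E^{k_t}\right)_+
 \leq -\frac{1}{q}\cdot\frac{k^2}{2\sigma^2(0,y_0)},
\]
and letting $q\downarrow 1$ (equivalently $p\uparrow\infty$, which is permissible since $t^*_p$ may depend on $p$ but we only care about small $t$) closes the matching bound.

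The routine verifications are the negligibility of the deterministic prefactors $\E^{k_t}$ and $\E^{\eta k_t}-1$ on the logarithmic scale $h^2(t)$, which reduces to the elementary observation $\log(\sqrt{t}h(t))=o(h^2(t))$ in the moderate regime. The only genuinely delicate point is the upper bound, where one must prevent the exponential payoff from amplifying atypically large values of $X_t$ beyond what the MDP controls; this is precisely where Assumption~\ref{assu:ExplosionTime} is indispensable, and sharper no-moment-explosion estimates (as in~\cite{FrizPigato}) would be needed in models where this assumption fails.
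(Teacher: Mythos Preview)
Your approach is essentially the same as the paper's: a sandwich argument with the lower bound coming from restricting to $\{X_t\geq(1+\eta)k_t\}$ and the upper bound from H\"older's inequality with conjugate exponents $p,q$, followed by $q\downarrow 1$. The prefactor analysis and the use of Corollary~\ref{C:MDP_FinancialModelsOneComp}/Remark~\ref{rem:Projection1} are identical in spirit.

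There is one point where the paper is more careful than your proposal. You assert that Assumption~\ref{assu:ExplosionTime} makes $\EE[\E^{pX_t}]$ ``finite and bounded'' for small~$t$, hence $\frac{1}{h^2(t)}\log\EE[\E^{pX_t}]\to 0$. But the assumption only gives pointwise finiteness on $[0,t^*_p]$; it does not immediately yield boundedness as $t\downarrow 0$. The paper fills this gap by using Doob's maximal inequality to show that the running supremum $\overline{X}_{t^*_p}:=\sup_{u\leq t^*_p}X_u$ has finite $p$-th exponential moment, which then serves as a dominating random variable so that dominated convergence gives $\EE[\E^{pX_t}]\to\E^{pX_0}$ along continuous paths. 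This is a short but genuine step that your write-up glosses over.
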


\begin{proof}
We follow here the proof of~\cite[Theorem 6.3]{FGP} with minor modifications.
In the small-maturity case, the limiting process $\lim_{\eps\downarrow 0}X^\eps$
is equal to $x_0=0$, so that Corollary~\ref{C:MDP_FinancialModelsOneComp}
and Remark~\ref{rem:Projection1} imply that,
as~$\eps$ tends to zero, and identifying~$t$ to~$\eps$,
$$
\lim_{t\downarrow 0}\frac{1}{h^2(t)}\log\PP\left(\frac{X_t}{\sqrt{t}h(t)}\geq k\right)
 = -\frac{k^2}{2\sigma(0,y_0)^2}.
$$
For any $\delta>0$, there exists then $t^*>0$ such that for all $t \in (0, t^*]$,
letting $k_t := k\sqrt{t}h(t)$,
$$
 \exp\left\{-\frac{k_t^2}{2\sigma(0,y_0)^2 t}\left(1-c\delta\right)\right\}
   \leq \PP\left(X_t\geq k_t\right) \leq
   \exp\left\{-\frac{k_t^2}{2\sigma(0,y_0)^2 t}\left(1+c\delta\right)\right\},
$$
where the constant $c = 2\sigma(0,y_0)^2$ is of no importance.
Letting $\tilde{k}_t:=k_t(1+c\delta)$, the Call price reads
$$
\EE\left(\E^{X_t} - \E^{k_t}\right)_+
\geq \EE\left[\left(\E^{X_t} - \E^{k_t}\right)_+\ind_{X_t\geq k_t}\right]
\geq \EE\left[\left(\E^{X_t} - \E^{k_t}\right)_+\ind_{X_t\geq \tilde{k}_t}\right]
\geq \left(\E^{\tilde{k}_t} - \E^{k_t}\right)_+\PP\left(X_t\geq \tilde{k}_t\right).
$$
Regarding the first term, we can write
$$
\left(\E^{\tilde{k}_t} - \E^{k_t}\right)_+ = c\delta k_t + \Oo\left(k_t^2\right).
$$
The second term is dealt with from the moderate deviations principle
with~$k_t$ replaced by~$\tilde{k}_t$:
$$
\lim_{t\downarrow 0}\frac{1}{h^2(t)}\log\PP\left(X_t\geq \tilde{k}_t\right)
 \geq -\frac{k^2(1+c\delta)^2}{2\sigma(0,y_0)^2},
$$
and therefore
$$
\liminf_{t\downarrow 0}\frac{1}{h^2(t)}\log\EE\left(\E^{X_t} - \E^{k_t}\right)_+
\geq -\frac{k^2(1+c\delta)^2}{2\sigma(0,y_0)^2}.
$$
Letting~$\delta$ tends to zero yields the desired lower bound.
Regarding the upper bound, fix $p\geq1$ and note that, by Assumption~\ref{assu:ExplosionTime},
$\EE(\E^{pX_t})$ is finite for all $t\in [0,t^*_{p}]$.
Doob's inequality~\cite[Theorem 3.8]{KarShreve}, implies
$$
\PP\left(\overline{X}_{t^*_{p}} \geq k\right) \leq \E^{-pk}\EE\left[\exp\left(X_{t^*_{p}}^{p}\right)\right],
$$
where $\overline{X}_t := \sup_{0\leq u\leq t}X_u$, so that
$\overline{X}_{t^*_{p}}$ also has finite $p$-th moment, and
$$
\EE\left[\exp\left(X_{t^*_{p}}^{p}\right)\right]
 = p\int_{0}^{\infty}\E^{(p-1)k}\PP\left(\overline{X}_{t^*_{p}} \geq k\right)\D k
 \text{ is finite}.
$$
By dominated convergence and since the process~$X$ has continuous paths, then
$\EE(\E^{pX_t})$ converges to~$\E^{pX_0}$ as~$t$ tends to zero.
Finally, for any $p, q \in [1,\infty)$ such that $p^{-1} + q^{-1} = 1$, H\"older's inequality yields
$$
\EE\left(\E^{X_t} - \E^{k_t}\right)_+
  = \EE\left[\left(\E^{X_t} - \E^{k_t}\right)_+\ind_{\{X_t\geq k_t\}}\right]
 \leq \EE\left[\left\{\left(\E^{X_t} - \E^{k_t}\right)_+\right\}^p\right]^{\frac{1}{p}}
 \PP(X_t\geq k_t)^{\frac{1}{q}}
  \leq \EE\left(\E^{pX_t}\right)^{\frac{1}{p}}
 \PP(X_t\geq k_t)^{\frac{1}{q}}.
$$
Since the first term converges to $\EE\left(\E^{X_0}\right)$, we then obtain
$$
\limsup_{t\downarrow 0} \frac{1}{h^2(t)}\log\EE\left(\E^{X_t} - \E^{k_t}\right)_+
\leq \frac{1}{q}\limsup_{t\downarrow 0} \frac{1}{h^2(t)}\log\PP(X_t\geq k_t)
 = -\frac{k^2}{2 \sigma^{2}(0,y_{0})}.
$$
Letting~$p$ tend to infinity, and consequently~$q$ to one, yields the lower bound in the theorem.
\end{proof}

\subsection{Large-time behaviour}\label{SS:LargeTimeExamplesFinancial}
Following the methodology developed in~\cite{FordeJac, JKRM}, we can translate the moderate deviations results into large-time asymptotic behaviours of option prices.
In order to state the results,
let~$J$ denote the (convex) contraction of the rate function~$I$ given in Theorem~\ref{T:LongTimeMDP} onto the last point, i.e. $J(x):=\inf\{I(\phi):\phi(0)=0,\phi(1)=x\}$.
Introduce further the Share measure
$\QQ(A):=\EE^{\PP}\left(\E^{X_t}\ind_{\{A\}}\right)$ for any $A\in\Ff_t$.
Before proving the main result of this section, let us state and prove a useful lemma.
\begin{lemma}\label{lem:MDPShare}
Under~$\QQ$, the process~$X$ defined in~\eqref{eq:Main2a} satisfies
\begin{equation}\label{eq:SDEUnderQ}
\begin{array}{rll}
\D X_t & = \displaystyle \frac{1}{2}\sigma^2(Y_t) \D t + \sigma(Y_t)\D W^{\QQ}_t, & X_0=x_0,\\
\D Y_t & = f^{\QQ}(Y_t) \D t + g(Y_t)\D Z^{\QQ}_t, & Y_0=y_0,\\
\D\langle W^{\QQ}, Z^{\QQ}\rangle_t & = \rho \D t,
\end{array}
\end{equation}
where $f^{\QQ}(\cdot) := f(\cdot) + \rho g(\cdot)\sigma(\cdot)$.
Let Assumptions~\ref{C:ergodic} and~\ref{C:StrongSolution} be satisfied for~\eqref{eq:SDEUnderQ}.
With the rescaling~\eqref{Eq:TimeSpaceTransformationIntro},
the sequence~$(X^{\eps})_{\eps>0}$ satisfies a MDP under~$\QQ$
with speed~$h^{2}(\eps)$ and action functional~$I^{\QQ}$
defined as in Theorem~\ref{T:LongTimeMDP} with
$\overline{\lambda}^{\QQ} = -\overline{\lambda}$,
$\lambda^{\QQ}(\cdot) = -\lambda(\cdot)$
and where the invariant measure $\mu^{\QQ}$ is that of~$Y$ in~\eqref{eq:SDEUnderQ}.
\end{lemma}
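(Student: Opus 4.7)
The plan is to prove the lemma in two stages: first, derive the $\QQ$-dynamics of $(X,Y)$ via Girsanov's theorem; second, apply Theorem~\ref{T:LongTimeMDP} directly to the resulting $\QQ$-system.

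\textbf{Step 1 (Girsanov).} First I would apply It\^o's formula to $\E^{X_t}$ together with the SDE for $X$ in~\eqref{eq:Main2a}. The drift $-\tfrac{1}{2}\sigma^2(Y_t)\D t$ is cancelled by the quadratic variation term, yielding $\D\E^{X_t}=\E^{X_t}\sigma(Y_t)\D W_t$. Thus $M_t:=\E^{X_t-x_0}$ is a positive $\PP$-local martingale and, under the standing integrability hypothesis (assured by the sub-linear growth of $\sigma$ in Assumption~\ref{C:ergodic} together with standard moment bounds on $Y$ under $\PP$, or by a Novikov-type condition), is a true $\PP$-martingale. Setting $\D\QQ/\D\PP|_{\Ff_t}=M_t/\EE^{\PP}[M_t]$ and invoking Girsanov, the processes
\begin{equation*}
W^{\QQ}_t:=W_t-\int_0^t\sigma(Y_s)\D s,
\qquad
Z^{\QQ}_t:=Z_t-\rho\int_0^t\sigma(Y_s)\D s
\end{equation*}
are standard Brownian motions under~$\QQ$ with $\D\langle W^{\QQ},Z^{\QQ}\rangle_t=\rho\D t$. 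Substituting back into~\eqref{eq:Main2a} flips the sign of the drift of $X$ and augments the drift of $Y$ by $\rho\,g(Y_t)\sigma(Y_t)$, producing exactly the system~\eqref{eq:SDEUnderQ} with $f^{\QQ}(y)=f(y)+\rho g(y)\sigma(y)$.

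\textbf{Step 2 (Moderate deviations under $\QQ$).} The system~\eqref{eq:SDEUnderQ} fits the template of~\eqref{eq:Main2a} up to the sign of the drift of $X$, which is immaterial for the MDP analysis since the rescaling~\eqref{Eq:TimeSpaceTransformationIntro} acts identically. By hypothesis, Assumptions~\ref{C:ergodic} and~\ref{C:StrongSolution} hold for~\eqref{eq:SDEUnderQ}, so the $\QQ$-dynamics of $Y$ admit a unique invariant measure $\mu^{\QQ}$, and the generator $\Ll^{\QQ}_Y h=f^{\QQ}h'+\tfrac{1}{2}g^2h''$ yields a well-posed Poisson equation~\eqref{E:Phi} with solution $\Phi^{\QQ}$. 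Applying Theorem~\ref{T:LongTimeMDP} verbatim under $\QQ$, the rescaled process $(X^{\eps})_{\eps>0}$ satisfies an MDP with speed $h^2(\eps)$ and rate function $I^{\QQ}$ of the stated variational form, where the role of $\lambda(y)=-\tfrac{1}{2}\gamma\sigma^2(y)$ is now played by $\lambda^{\QQ}(y)=+\tfrac{1}{2}\gamma\sigma^2(y)=-\lambda(y)$, the centred average is $\overline{\lambda}^{\QQ}=\int\lambda^{\QQ}(y)\mu^{\QQ}(\D y)$, and the constants $\alpha^{\QQ},q^{\QQ}$ appearing in the rate function are computed with respect to $\mu^{\QQ}$ and $\Phi^{\QQ}$.

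\textbf{Main obstacle.} The non-trivial point is verifying the true-martingale property of $M_t=\E^{X_t-x_0}$, which underwrites the existence of the Share measure and the validity of Girsanov. This is where some integrability on $\int_0^T\sigma^2(Y_s)\D s$ is needed; in full generality (e.g.\ for Heston with $\sigma(y)=\sqrt{y}$) this is standard but not automatic, and the cleanest route is either a Novikov-type criterion on finite horizons or an explicit check exploiting the affine structure of typical examples. Once this is in place, the rest is a direct application of the previously proved MDP to the transformed system, and therefore requires no new arguments.
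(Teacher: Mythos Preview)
Your proposal is correct and follows essentially the same route as the paper: apply Girsanov with density $\E^{X_t-x_0}$ to obtain the $\QQ$-dynamics~\eqref{eq:SDEUnderQ}, then invoke Theorem~\ref{T:LongTimeMDP} directly for the transformed system. The only cosmetic difference is that the paper first decomposes $W=\rho Z+\rrho Z^{\top}$ and applies Girsanov to the pair $(Z,Z^{\top})$, whereas you shift $(W,Z)$ directly; both yield the same $\QQ$-Brownian motions and the same drift correction $f^{\QQ}=f+\rho g\sigma$. Your explicit flagging of the true-martingale property of $\E^{X_t-x_0}$ is a point the paper passes over silently.
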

\begin{proof}
From~\eqref{eq:Main2a}, we can write
$$
\QQ(A) = \EE^{\PP}\left[\exp\left\{
-\frac{1}{2}\int_{0}^{t}\sigma(Y_u)^2 \D u +\rho \int_{0}^{t}\sigma(Y_u)\D Z_u
 + \rrho\int_{0}^{t}\sigma(Y_u)\D Z_u^{\top}
\right\}\ind_{\{A\}}\right],
$$
where we decomposed the Brownian motion~$W$ into $W = \rho Z + \bar{\rho} Z^{\top}$.
Girsanov's Theorem implies that the two processes~$Z^{\QQ}$ and~$\overline{Z}^{\QQ}$ defined as
$$
Z^{\QQ}_t := Z_t -\rho \int_{0}^{t}\sigma(Y_u)\D u
\qquad\text{and}\qquad
\overline{Z}^{\QQ}_t := Z_t^{\top} -\rrho \int_{0}^{t}\sigma(Y_u)\D u
$$
are two independent standard Brownian motions under~$\QQ$, and we can rewrite~\eqref{eq:Main2a}
as~\eqref{eq:SDEUnderQ} with $f^{\QQ}(\cdot) := f(\cdot) + \rho g(\cdot)\sigma(\cdot)$
and $W^\QQ := \rho Z^{\QQ} + \rrho \overline{Z}^{\QQ}$.
The lemma then follows directly from Theorem~\ref{T:LongTimeMDP} under the assumptions on the coefficients.
\end{proof}

Note the flipped sign in the drift of~$X$ in~\eqref{eq:SDEUnderQ}.
We can thus define $J^{\QQ}$ as the contraction (onto the last point) of the rate function~$I^{\QQ}$
for~$X^\eps$ under~$\QQ$.
The minimising functional~$\phi$ in Theorem~~\ref{T:LongTimeMDP}
is linear, of the form $\phi_t = \alpha t $, hence
\begin{equation}\label{eq:simpleRate}
J(x) = \frac{x^{2} T}{2q}
\qquad\text{and}\qquad
J^{\QQ}(x) = \frac{x^{2} T}{2q^{\QQ}},
\qquad \text{for any }x \in\RR.
\end{equation}

\begin{proposition}\label{prop:LargeTOption}
Let $\beta \in (0,1/2)$.
As~$t$ tends to infinity, we observe the following asymptotic behaviours:
\begin{equation*}
\begin{array}{rll}
\displaystyle \frac{1}{t^{1/2+\beta}}\log\EE^{\PP}\left(\E^{x t^{\beta +1/2}} - \E^{X_t}\right)_+
& \approx \left\{
\begin{array}{ll}
\displaystyle x  - t^{\beta-1/2}J(x), \\
\displaystyle x,
\end{array}
\right.
 & \left.
\begin{array}{ll}
\text{if }x<0,\\
\text{if }x\geq 0,
\end{array}
\right.
\\
\displaystyle \lim_{t\uparrow\infty}\frac{1}{t^{2\beta}}\log\EE^{\PP}\left(\E^{X_t} - \E^{x t^{\beta +1/2}}\right)_+
 &  = \left\{
 \begin{array}{ll}
-J^{\QQ}(x), \\
0,
\end{array}
\right.
 & \left.
\begin{array}{ll}
\text{if }x>0,\\
\text{if }x\leq 0.
\end{array}
\right.
\end{array}
\end{equation*}
\end{proposition}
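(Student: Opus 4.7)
The plan is to combine the MDP under~$\PP$ from Theorem~\ref{T:LongTimeMDP} with the MDP under the Share measure~$\QQ$ derived in Lemma~\ref{lem:MDPShare}, following the strategy developed in~\cite{FordeJac,JKRM} for the pure large-deviations regime. Assuming without loss of generality that $x_0=0$, so that $\EE^{\PP}(\E^{X_t})=1$ and $\QQ(A):=\EE^{\PP}(\E^{X_t}\ind_A)$ is a probability measure, the option prices admit the Share-measure decompositions
\begin{align*}
\EE^{\PP}\bigl(\E^{X_t}-\E^{k_t}\bigr)_+ & = \QQ\bigl(X_t \geq k_t\bigr) - \E^{k_t}\,\PP\bigl(X_t \geq k_t\bigr),\\
\EE^{\PP}\bigl(\E^{k_t}-\E^{X_t}\bigr)_+ & = \E^{k_t}\,\PP\bigl(X_t \leq k_t\bigr) - \QQ\bigl(X_t \leq k_t\bigr),
\end{align*}
where $k_t:=xt^{\beta+1/2}$, related by the put-call parity $P_t-C_t=\E^{k_t}-1$.

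Next, I would invoke the MDPs at speed $t^{2\beta}$, combined with the closed-form contraction~\eqref{eq:simpleRate}, to produce
\[
\log\PP(X_t\geq k_t)\sim -t^{2\beta}J(x), \qquad \log\QQ(X_t\geq k_t)\sim -t^{2\beta}J^{\QQ}(x),
\]
for $x>0$, with the analogous $\leq$-statements for $x<0$. The case analysis then proceeds as follows. For the Call at $x>0$, the $\QQ$-probability is the dominant term, contributing the rate $J^{\QQ}(x)$; the cross term $\E^{k_t}\,\PP(X_t\geq k_t)$ must be shown negligible on the scale $t^{2\beta}$, which relies on the relation between $J$ and $J^{\QQ}$ inherited from the Girsanov shift of Lemma~\ref{lem:MDPShare}. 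For the Call at $x\leq 0$ the option is in-the-money, bounded below by $(1-\E^{k_t})_+$, so $t^{-2\beta}\log C_t \to 0$. For the Put at $x<0$, put-call parity isolates the leading strike contribution $\E^{k_t}$, yielding the term $x$ at order $t^{1/2+\beta}$, while the correction $-t^{\beta-1/2}J(x)$ emerges as the moderate refinement of the residual out-of-the-money component; the case $x\geq 0$ is identical in structure but with the Put in-the-money.

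The principal difficulty is the balancing of the competing contributions in the Share-measure decomposition, which sit at distinct polynomial scales, namely $t^{1/2+\beta}$ for the strike factor against $t^{2\beta}$ for the tail exponents; these differ precisely because $\beta<1/2$. Turning the logarithmic MDP estimates into pointwise asymptotics inside subtractions requires uniform integrability, for which moment bounds on $\E^{pX_t}$ (in the spirit of Assumption~\ref{assu:ExplosionTime}) are needed. Once these are in place, matching upper and lower bounds follow from the open-/closed-set formulation in Definition~\ref{Def:LDP} together with the regular-set criterion stated immediately below it, thereby pinning down the exact rates $J$ and $J^{\QQ}$ in the proposition.
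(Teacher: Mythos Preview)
Your exact Share-measure identities are correct, but the route through them is both harder than the paper's and contains a genuine gap. The paper does \emph{not} use the decomposition
\[
\EE^{\PP}\bigl(\E^{X_t}-\E^{k_t}\bigr)_+ = \QQ(X_t\geq k_t) - \E^{k_t}\PP(X_t\geq k_t).
\]
Instead it uses elementary pointwise sandwich inequalities: for any $\eps>0$,
\[
\E^{X_t}\bigl(1-\E^{-\eps}\bigr)\ind_{\{X_t/t^{\beta+1/2}>x+\eps\}}
\;\leq\; \bigl(\E^{X_t}-\E^{k_t}\bigr)_+
\;\leq\; \E^{X_t}\ind_{\{X_t/t^{\beta+1/2}>x\}},
\]
takes $\PP$-expectations so that both sides become $\QQ$-probabilities, and reads off the rate from Lemma~\ref{lem:MDPShare}. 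The Put is handled symmetrically with the factor $\E^{k_t}$ pulled out and $\PP$-probabilities. No subtraction ever appears, no moment assumption of the type of Assumption~\ref{assu:ExplosionTime} is invoked, and no comparison between $J$ and $J^{\QQ}$ is needed.

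The gap in your approach is exactly the ``principal difficulty'' you flag but do not resolve. From a difference $A_t-B_t$ you cannot extract $t^{-2\beta}\log(A_t-B_t)\to -J^{\QQ}(x)$ merely from $t^{-2\beta}\log A_t\to -J^{\QQ}(x)$; you must show $B_t/A_t\to 0$. But on your proposed inputs $\log B_t\approx xt^{\beta+1/2}-t^{2\beta}J(x)$ while $\log A_t\approx -t^{2\beta}J^{\QQ}(x)$, so formally $\log(B_t/A_t)\to+\infty$ for $x>0$, contradicting $B_t\leq A_t$. The resolution is that the $\PP$- and $\QQ$-MDPs center $X_t$ at different points (differing by order $t$, cf.\ the drift sign flip in Lemma~\ref{lem:MDPShare}), so $\PP(X_t\geq k_t)$ is \emph{not} governed by the MDP rate $J$ at all---it is an LDP-scale event under $\PP$. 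Your appeal to ``the relation between $J$ and $J^{\QQ}$ inherited from the Girsanov shift'' does not address this; making the decomposition work would require large-deviations input under~$\PP$, not moderate, and integrability controls you do not have. The sandwich argument in the paper bypasses all of this.
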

\begin{proof}
We follow the methodology developed in~\cite[Theorem 13]{JKRM} with a few changes.
For all $t\geq 1$ and $\eps>0$, the inequalities
$$
\E^{x t^{\beta +1/2}} \left(1-\E^{-\eps}\right)\ind_{\{X_t/t^{\beta +1/2} < x-\eps\}}
\leq \left(\E^{x t^{\beta +1/2}} - \E^{X_t}\right)_+
\leq \E^{x t^{\beta +1/2}} \ind_{\{X_t/t^{\beta +1/2} < x\}}
$$
hold.
Taking expectations, logarithms and dividing by $t^{2\beta}$, we obtain
$$
\frac{x}{t^{\beta-1/2}} + \frac{\log\left(1-\E^{-\eps}\right)}{t^{2\beta}}
+\frac{\log\PP\left(\frac{X_t}{t^{\beta +1/2}} < x-\eps\right)}{t^{2\beta}}
\leq \frac{1}{t^{2\beta}}\log\EE\left(\E^{x t^{\beta +1/2}} - \E^{X_t}\right)_+
\leq \frac{x}{t^{\beta-1/2}} + \frac{\log\PP\left(\frac{X_t}{t^{\beta +1/2}} < x\right)}{t^{2\beta}}.
$$
From the moderate deviations principle in Theorem~\ref{T:LongTimeMDP} and using~\eqref{eq:simpleRate},
we can write
\begin{equation*}
\lim_{t\uparrow\infty}t^{-2\beta}\log\PP\left(\frac{X_t}{t^{\beta +1/2}} < x\right) = -\inf_{z<x}J(z) =
\left\{
\begin{array}{rl}
-J(x), & \text{if }x<0,\\
0, & \text{if }x \geq 0,
\end{array}
\right.
\end{equation*}
so that, as~$t$ tends to infinity, the asymptotic behaviour for Put option prices in the proposition holds.

For Call option prices on~$\exp(X)$, we can use the inequalities, for $t\geq 1$ and $\eps>0$,
$$
\E^{X_t} \left(1-\E^{-\eps}\right)\ind_{\{X_t/t^{\beta +1/2} > x+\eps\}}
\leq \left(\E^{X_t} - \E^{x t^{\beta +1/2}}\right)_+
\leq \E^{X_t} \ind_{\{X_t/t^{\beta +1/2} > x\}}.
$$
Taking expectations under~$\PP$, this becomes
$$
\left(1-\E^{-\eps}\right) \EE^{\PP}\left(\E^{X_t} \ind_{\{X_t/t^{\beta +1/2} > x+\eps\}}\right)
\leq \EE^{\PP}\left(\E^{X_t} - \E^{x t^{\beta +1/2}}\right)_+
\leq \EE^{\PP}\left(\E^{X_t} \ind_{\{X_t/t^{\beta +1/2} > x\}}\right),
$$
and, using the Share measure~$\QQ$, this translates into
$$
\left(1-\E^{-\eps}\right) \QQ\left(\frac{X_t}{t^{\beta +1/2}} > x+\eps\right)
\leq \EE^{\PP}\left(\E^{X_t} - \E^{x t^{\beta +1/2}}\right)_+
\leq \QQ\left(\frac{X_t}{t^{\beta +1/2}} > x\right).
$$
Using Lemma~\ref{lem:MDPShare} and~\eqref{eq:simpleRate},
the proposition then follows from the computation
\begin{equation*}
\lim_{t\uparrow\infty}t^{-2\beta}\QQ\left(\frac{X_t}{t^{\beta +1/2}} > x\right)
 = -\inf_{z>x}J^{\QQ}(z) =
\left\{
\begin{array}{rl}
-J^{\QQ}(x), & \text{if }x>0,\\
0, & \text{if }x \leq 0.
\end{array}
\right.
\end{equation*}
\end{proof}

\subsection{Asymptotics of the realised variance}\label{SS:AsymptoticsRealizedVariance}
We now show how Theorem~\ref{T:IntegratedFunctionsl}
applies to the realised variance in the Heston model~\eqref{eq:Heston}.
The rescaling~\eqref{Eq:TimeSpaceTransformationIntro} yields
the same perturbed system ~\eqref{eq:HestonEpsLargeTime} as in the large-time case.
Likewise, the invariant measure $\mu(\cdot)$ has Gamma density given in~\eqref{eq:GammaDensity}.
With $H(x,y)\equiv y$, we have $q_{\sigma}=q_{g}=1/2$ and $q_{H}=1$,
so that Assumption~\ref{C:GrowthAssumptions} holds and Theorem~\ref{T:IntegratedFunctionsl}
gives the pathwise large deviations for the process
\begin{equation}\label{eq:HestonReps}
R^{\eps}=\frac{1}{\sqrt{\eps}h(\eps)}\left(\int_{0}^{\cdot}Y^{\eps}_{s}\D s-\theta\right).
\end{equation}
In this case, the Poisson equation~\eqref{Eq:uPoisson} satisfies $u_{y}(y)=-\frac{1}{\kappa}$, and hence
$\overline{Q}(x)\equiv \frac{\xi^{2}\theta}{\kappa^{2}}$,
yielding the rate function
\begin{equation}\label{eq:HestonIntRate}
  I(\phi) = \left\{
\begin{array}{ll}
\displaystyle  \frac{1}{2} \frac{\gamma^{2}\kappa^{2}}{\xi^{2}\theta}\int_0^T \left|\dot{\phi}_s\right|^{2} \D s,
  & \text{if }\phi \in \mathcal{AC}([0, T],\RR) \text{ and } \phi_{0}=0,\\
 +\infty, & \text{otherwise}.
 \end{array}
 \right.
\end{equation}
Take for simplicity $\gamma=1$, i.e. $\eps = \delta$, then we can write, for any $t\geq 0$,
$$
R_t^{\eps}
  = \frac{1}{\sqrt{\eps}h(\eps)}\left(\int_{0}^{t}Y^{\eps}_{s}\D s-\theta\right)
 = \frac{1}{\sqrt{\eps}h(\eps)}\int_{0}^{t}Y^{\eps}_{s}\D s - \theta_\eps
   = \frac{\sqrt{\eps}}{h(\eps)}\int_{0}^{t/\eps}Y_{u}\D s - \theta_\eps
   = \frac{\sqrt{\eps}}{h(\eps)}\Vv_{t/\eps} - \theta_\eps
$$
where we denote $\theta_\eps := \frac{\theta}{\sqrt{\eps}h(\eps)}$,
and $\Vv_{t} :=\int_{0}^{t}Y_u \D u$ represents the realised variance over the period $[0,t]$.
The large deviations for the sequence~$R^\eps$ therefore implies that, taking $t=1$,
and applying the contraction principle, for any fixed $x>0$,
$$
\lim_{\eps\downarrow 0} \frac{1}{h^2(\eps)}\log\PP\left(\frac{\sqrt{\eps}}{h(\eps)}\Vv_{1/\eps} - \theta_\eps \geq x \right) = -\inf\Big\{I(\phi): \phi_1\geq x\Big\},
$$
or, taking for example $h(\eps) = \eps^{-\beta}$ for $\beta\in (0,1/2)$,
and renaming $\eps\mapsto t^{-1}$,
$$
\lim_{t\uparrow \infty} t^{-2\beta}
\log\PP\left(\Vv_{t} \geq x t^{\beta+1/2} +\theta t \right) = -\inf\Big\{I(\phi): \phi_1\geq x\Big\}.
$$
Here again, this minimum can be computed in closed form from~\eqref{eq:HestonIntRate}.
The corresponding Euler-Lagrange equation yields the linear optimal path $\phi_t = x t$, for any $x> 0$, so that
$$
J_{\Vv}(x) := \inf\Big\{I(\phi): \phi_1\geq x\Big\} =
  \frac{\kappa^{2}x^2}{2\xi^{2}\theta} = \frac{x^2}{2\overline{Q}}.
$$
Note that, since the drift and diffusion of~$X$ in~\eqref{eq:Heston} do not depend on~$X$ itself,
then~$\overline{Q}$ is constant here.
In the Heston case~\eqref{eq:Heston}, we can compare this large-time MDP
to the corresponding large-time LDP for the realised variance.
Indeed, from~\cite[Proposition~6.3.4.1]{JCY}, the moment generating function for the realised variance reads,
for all~$u$ such that the right-hand side is finite,
$$
\Lambda(u,t):=\log\EE\left(\E^{u\Vv_{t}}\right) =
\frac{2\kappa\theta}{\xi^2}\log\left(\frac{2\gamma(u)\exp\left\{(\kappa+\gamma(u))\frac{t}{2}\right\}}{\gamma(u)\left(\E^{\gamma(u)t}+1\right)+\kappa\left(\E^{\gamma(u)t}-1\right)}\right)
+\frac{2u y_0\left(\E^{\gamma(u)t}-1\right)}{\gamma(u)\left(\E^{\gamma(u)t}+1\right)+\kappa\left(\E^{\gamma(u)t}-1\right)},
$$
where $\gamma(u):=\sqrt{\kappa^2-2\xi^2 u}$.
Straightforward computations yield that
$$
\Lambda_{\infty}(u) := \lim_{t\uparrow\infty}\frac{\Lambda(u,t)}{t}
 = \frac{\kappa\theta}{\xi^2}\left(\kappa-\gamma(u)\right),
\quad\text{for all }u<\frac{\kappa^2}{2\xi^2}.
$$
We can therefore use a partial version of the G\"artner-Ellis theorem~\cite[Section 2.3]{DZ98} to show that the sequence
$(t^{-1}\Vv_{t})_{t>0}$ satisfies a large deviations principle as~$t$ tends to infinity, with rate function~$\Lambda^*$ defined as the Fenchel-Legendre transform of~$\Lambda_{\infty}$.
More precisely, for any $x>0$,
$$
\Lambda^*(x) = \sup_{u<\frac{\kappa^2}{2\sigma^2}}\Big\{ux - \Lambda_{\infty}(u)\Big\}
 = \frac{\kappa^2(x-\theta)^2}{2\xi^2 x}
 \qquad\text{and}\qquad
\partial_{xx}\Lambda^*(x)\vert_{x=\theta} = \frac{\kappa^2}{\xi^2\theta} = \overline{Q}^{-1}.
$$

Again, we observe that the moderate deviations rate function characterises the local curvature of the large deviations rate function around its minimum.
We can translate this behaviour into asymptotics of options on the realised variance,
both in the large deviations case and in the moderate deviations one:
\begin{proposition}
Let $\beta \in (0,1/2)$. As~$t$ tends to infinity, we have, for all $x>0$,
$$
\lim_{t\uparrow\infty}\frac{1}{t}\log\EE\left(\E^{x t} - \E^{\Vv_t}\right)_+ = x - \Lambda^*(x)
\qquad\text{and}\qquad
\lim_{t\uparrow\infty}\left[\frac{1}{t^{2\beta}}\log\EE^{\PP}\left(\E^{x t^{\beta +1/2}} - \E^{\Vv_t-\theta t}\right)_+-\frac{x}{t^{\beta-1/2}}\right] = -J_{\Vv}(x).
 $$
\end{proposition}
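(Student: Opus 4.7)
The plan is to mimic the proof of Proposition~\ref{prop:LargeTOption}, translating the relevant deviations principle on the centered realised variance into an asymptotic for the put option, using the standard bracketing of the put payoff via indicator functions.

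For the LDP identity, I would use the large deviations principle for $t^{-1}\Vv_{t}$ at speed $t$, with rate function $\Lambda^{*}$, derived from the partial G\"artner-Ellis theorem applied to $\Lambda_{\infty}$ as indicated in the preceding paragraph. Starting from the bracketing
\[
\E^{xt}\bigl(1-\E^{-\eta}\bigr)\ind_{\{\Vv_{t}/t < x-\eta\}}\;\leq\;\bigl(\E^{xt}-\E^{\Vv_{t}}\bigr)_{+}\;\leq\;\E^{xt}\ind_{\{\Vv_{t}/t<x\}},
\]
valid for any $\eta>0$, taking $\PP$-expectations, logarithms and dividing by $t$ yields
\[
x+\frac{\log(1-\E^{-\eta})}{t}+\frac{\log\PP(\Vv_{t}/t<x-\eta)}{t}\;\leq\;\frac{1}{t}\log\EE^{\PP}(\cdot)_{+}\;\leq\; x+\frac{\log\PP(\Vv_{t}/t<x)}{t}.
\]
The LDP sends the probability logarithms to $-\Lambda^{*}$ on the regime where $x$ lies in the relevant tail; letting $\eta\downarrow 0$ and invoking continuity of $\Lambda^{*}$ on its interior collapses the upper and lower bounds to $x-\Lambda^{*}(x)$.

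For the MDP identity, I would proceed analogously, invoking the moderate deviations principle for $R^{\eps}$ from Theorem~\ref{T:IntegratedFunctionsl} applied to the Heston realised variance, which yields the quadratic rate~\eqref{eq:HestonIntRate} and, by contraction onto the endpoint, the explicit $J_{\Vv}(x)=x^{2}/(2\overline{Q})$. Under the identification $\eps=1/t$ and $h(\eps)=t^{\beta}$, the event $\{R^{\eps}_{1}<x\}$ coincides with $\{\Vv_{t}-\theta t<xt^{\beta+1/2}\}$. The natural bracketing on the MDP scale is
\[
\E^{xt^{\beta+1/2}}\bigl(1-\E^{-\eta t^{\beta+1/2}}\bigr)\ind_{\{R^{\eps}_{1}<x-\eta\}}\;\leq\;\bigl(\E^{xt^{\beta+1/2}}-\E^{\Vv_{t}-\theta t}\bigr)_{+}\;\leq\;\E^{xt^{\beta+1/2}}\ind_{\{R^{\eps}_{1}<x\}};
\]
taking $\PP$-expectations, logarithms, dividing by $t^{2\beta}$ and subtracting the deterministic leading term $x/t^{\beta-1/2}=\log(\E^{xt^{\beta+1/2}})/t^{2\beta}$, one is left with a residual controlled by $t^{-2\beta}\log\PP(R^{\eps}_{1}<\cdot)$, up to a vanishing error from $\log(1-\E^{-\eta t^{\beta+1/2}})/t^{2\beta}\to 0$. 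The MDP then sends this residual to $-\inf_{z<x}J_{\Vv}(z)$, which by continuity and strict convexity of the quadratic $J_{\Vv}$ collapses under $\eta\downarrow 0$ to $-J_{\Vv}(x)$.

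The main technical point is bookkeeping the two different rescalings ($t$ for LDP, $t^{2\beta}$ for MDP) and checking that the sandwich perturbation $K_{t}(1-\E^{-\eta t^{\beta+1/2}})$ with $K_{t}=\E^{xt^{\beta+1/2}}$ is negligible on the $t^{2\beta}$ scale, since $\eta t^{\beta+1/2}\to\infty$ makes the prefactor tend to one, so that it does not spoil the squeeze. Once this is in place, the only remaining input is the relevant deviations principle itself, and the final identification follows exactly as in Proposition~\ref{prop:LargeTOption}.
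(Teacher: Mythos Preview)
Your proposal is correct and follows essentially the same route as the paper: both arguments bracket the put payoff by indicator functions and then apply the relevant (large or moderate) deviations principle to the resulting probability terms, letting the perturbation parameter tend to zero at the end. The only cosmetic difference is that in the MDP lower bound you use the tighter prefactor $(1-\E^{-\eta t^{\beta+1/2}})$ whereas the paper uses the fixed constant $(1-\E^{-\eps})$; both vanish on the $t^{2\beta}$ scale, so the conclusion is identical.
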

\begin{remark}
Note that, again, formally setting $\beta=1/2$ in the second limit, which arises from moderate deviations,
one obtains after simplification,
$$
\lim_{t\uparrow\infty}\frac{1}{t}\log\EE\left(\E^{(x+\theta) t} - \E^{\Vv_t}\right)_+ = x+\theta - J_{\Vv}(x),
$$
which again, as in Remark~\ref{rem:MDPLDPCurv}, indicates that the moderate deviations rate function
represents exactly the curvature of the large deviations one at its minimum.
\end{remark}
\begin{proof}
We start with the first limit, which falls in the scope of large deviations.
Mimicking the proof of Proposition~\ref{prop:LargeTOption}, for all $t\geq 1$ and $\eps>0$, we can write the inequalities
$$
\E^{x t} \left(1-\E^{-\eps}\right)\ind_{\{\Vv_t/t < x-\eps\}}
\leq \left(\E^{x t} - \E^{\Vv_t}\right)_+
\leq \E^{x t} \ind_{\{\Vv_t/t < x\}}.
$$
Taking expectations, logarithms and dividing by~$t$ and taking limits, we obtain
$$
x + \lim_{t\uparrow\infty}\frac{1}{t}\log\PP\left(\frac{\Vv_t}{t} < x-\eps\right)
\leq \lim_{t\uparrow\infty}\frac{1}{t}\log\EE\left(\E^{x t} - \E^{\Vv_t}\right)_+
\leq x + \lim_{t\uparrow\infty}\frac{1}{t}\log\PP\left(\frac{\Vv_t}{t} < x\right).
$$
The large deviations obtained above for the sequence $(\Vv_t/t)_{t>0}$ as~$t$ tends to infinity concludes the proof.

We now prove the other limit, which arises from moderate deviations.
Now, mimicking the proof of Proposition~\ref{prop:LargeTOption}, for all $t\geq 1$ and $\eps>0$, we can write
$$
\E^{x t^{\beta +1/2}} \left(1-\E^{-\eps}\right)\ind_{\{\widetilde{\Vv}_t/t^{\beta +1/2} < x-\eps\}}
\leq \left(\E^{x t^{\beta +1/2}} - \E^{\widetilde{\Vv}_t}\right)_+
\leq \E^{x t^{\beta +1/2}} \ind_{\{\widetilde{\Vv}_t/t^{\beta +1/2} < x\}},
$$
with $\widetilde{\Vv}_t:=\Vv_t - \theta t$,
so that, taking expectations, logarithms and dividing by $t^{2\beta}$,
the proposition follows from
$$
\frac{x}{t^{\beta-1/2}} + \frac{1}{t^{2\beta}}\log\PP\left(\frac{\widetilde{\Vv}_t}{t^{\beta +1/2}} < x-\eps\right)
\leq \frac{1}{t^{2\beta}}\log\EE\left(\E^{x t^{\beta +1/2}} - \E^{\widetilde{\Vv}_t}\right)_+
\leq \frac{x}{t^{\beta-1/2}} + \frac{1}{t^{2\beta}}\log\PP\left(\frac{\widetilde{\Vv}_t}{t^{\beta +1/2}} < x\right).
$$
\end{proof}

\section{Control representations for the proof of Theorem~\ref{T:IntegratedFunctionsl}}\label{S:ControlRep}
In this section we connect Theorem~\ref{T:IntegratedFunctionsl} to certain stochastic control representation and to the Laplace principle.
We start by applying It\^{o} formula to the solution $u$ to~\eqref{Eq:uPoisson}.
Rearranging terms, we obtain
\begin{equation*}
\begin{array}{rl}
R_{t}^{\eps}
 & = \displaystyle \frac{\delta^{2}}{\eps^{3/2} h(\eps)}\left(u(X^{\eps}_{t},Y^{\eps}_{t})-u(x_{0},y_{0})\right)-\frac{\delta}{\eps h(\eps)}\int_{0}^{t}(g u_{y})(X^{\eps}_{s},Y^{\eps}_{s})\D Z_s
- \frac{\delta^{2}}{\eps h(\eps)}\int_{0}^{t}(\sigma u_{x})(X^{\eps}_{s},Y^{\eps}_{s})\D W_s\\
 & \displaystyle - \frac{\rho\delta}{\sqrt{\eps} h(\eps)}\int_{0}^{t}(\sigma g u_{xy}) (X^{\eps}_{s},Y^{\eps}_{s})\D s
 + \frac{\delta}{2\sqrt{\eps}{h(\eps)}}\int_{0}^{t}\left(\sigma^{2}u_{x}\right)(X^{\eps}_{s},Y^{\eps}_{s})\D s
 - \frac{\delta^{2}}{2\sqrt{\eps}h(\eps)}\int_{0}^{t}\left(\sigma^{2}u_{xx}\right)(X^{\eps}_{s},Y^{\eps}_{s})\D s.
\end{array}
\end{equation*}
We shall consider $R_{t}^{\eps}$ through this expression together with~\eqref{eq:Main3c} as the triple $(R^{\eps}_{t},X^{\eps}_{t},Y^{\eps}_{t})$.
By~\cite[Section 1.2]{DE97}, the large deviations principle (with speed~$h^2(\eps)$) for $R^\eps$ is equivalent to the Laplace principle, which states that for any bounded continuous function
$G$ mapping $\mathcal{C}([0, T];\RR)$ to~$\RR$,
\begin{equation} \label{E:LaplacePrinciple}
  \lim_{\eps \downarrow 0} - \frac{1}{h^2(\eps)} \log \mathbb{E} \left[ \exp\left\{- h^2(\eps) G(R^\eps) \right\} \right]
   = \inf_{\phi \in \mathcal{C}([0, T]; \RR)} \Big\{ I(\phi) + G(\phi) \Big\},
\end{equation}
where $I(\cdot)$ is called the action functional.
We prove~\eqref{E:LaplacePrinciple} and then Theorem~\ref{T:IntegratedFunctionsl} identifies~$I(\phi)$.
The proof of~\eqref{E:LaplacePrinciple} is based on appropriate stochastic control representations developed in~\cite{BD98}.
Let~$\mathcal{A}$ be the space of $\mathscr{F}_t$-progressively measurable two-dimensional processes
$v = (v_1, v_2)$ such that
$\EE\left(\int_0^T \lvert v(s) \rvert^2 \D s\right)$ is finite.
From~\cite[Theorem 3.1]{BD98}, we can write the stochastic control representation
\begin{align}
- \frac{1}{h^2(\eps)} \log \mathbb{E} \left[ \exp\left\{- h^2(\eps) G(R^\eps) \right\} \right] = \inf_{u^\eps \in \mathcal{A}} \mathbb{E} \left[ \frac{1}{2} \int_0^T \lvert u^\eps(s) \rvert^2 \D s + G(R^{\eps,u^\eps}) \right]\label{Eq:ControlRepresentation}
\end{align}
where the controlled process $(R^{\eps,u^\eps}_{t},X^{\eps,u^\eps}_{t},Y^{\eps,u^\eps}_{t})$
satisfies the system
\begin{align}\label{Eq:ControlledProcesses}
R_{t}^{\eps,u^\eps}
 & = \displaystyle
 \frac{\delta^{2}}{\eps^{3/2} h(\eps)}\left(u(X^{\eps,u^\eps}_{t},Y^{\eps,u^\eps}_{t})-u(x_{0},y_{0})\right)-\frac{\delta}{\eps h(\eps)} \int_{0}^{t}(g u_{y})(X^{\eps,u^\eps}_{s},Y^{\eps,u^\eps}_{s})\D Z_s\nonumber\\
 & \displaystyle
-\frac{\delta}{\eps}\rho\int_{0}^{t}(g u_{y})(X^{\eps,u^\eps}_{s},Y^{\eps,u^\eps}_{s})u^{\eps}_{1}(s)\D s-\frac{\delta}{\eps}\rrho\int_{0}^{t}(g u_{y})(X^{\eps,u^\eps}_{s},Y^{\eps,u^\eps}_{s})u^{\eps}_{2}(s)\D s\nonumber\\
 & \displaystyle
 -\frac{\delta^{2}}{\eps}\rho\int_{0}^{t}(\sigma u_{x})(X^{\eps,u^\eps}_{s},Y^{\eps,u^\eps}_{s})u^{\eps}_{1}(s)\D s\nonumber\\
 & \displaystyle
 -\rho\frac{\delta}{\sqrt{\eps} h(\eps)}\int_{0}^{t}(\sigma g u_{xy}) (X^{\eps,u^\eps}_{s},Y^{\eps,u^\eps}_{s})\D s+\frac{1}{2}\frac{\delta}{\sqrt{\eps}{h(\eps)}}\int_{0}^{t}\left(\sigma^{2}u_{x}\right)(X^{\eps,u^\eps}_{s},Y^{\eps,u^\eps}_{s})\D s
\\
 & \displaystyle
 -\frac{1}{2}\frac{\delta^{2}}{\sqrt{\eps}h(\eps)}\int_{0}^{t}\left(\sigma^{2}u_{xx}\right)(X^{\eps,u^\eps}_{s},Y^{\eps,u^\eps}_{s})\D s-\frac{\delta^{2}}{\eps h(\eps)}\int_{0}^{t}(\sigma u_{x})(X^{\eps,u^\eps}_{s},Y^{\eps,u^\eps}_{s})\D W_s,\nonumber\\
\D X^{\eps,u^\eps}_t
 & = \displaystyle
 -\frac{1}{2}\frac{\eps}{\delta}\sigma^2(X^{\eps,u^\eps}_t,Y^{\eps,u^\eps}_t) \D t + \sqrt{\eps}h(\eps)\sigma(X^{\eps,u^\eps}_t,Y^{\eps,u^\eps}_t)u_{1}^{\eps}(t)\D t+ \sqrt{\eps}\sigma(X^{\eps,u^\eps}_t,Y^{\eps,u^\eps}_t)\D W_t,\nonumber\\
\D Y^{\eps,u^\eps}_t
 & = \displaystyle
 \frac{\eps}{\delta^{2}}f(X^{\eps,u^\eps}_t,Y^{\eps,u^\eps}_t) \D t
 + \frac{\sqrt{\eps}h(\eps)}{\delta}g(X^{\eps,u^\eps}_t,Y^{\eps,u^\eps}_t)
 \left[\rho u_{1}^{\eps}(t) + \rrho u_{2}^{\eps}(t)\right] \D t
 + \frac{\sqrt{\eps}}{\delta}g(X^{\eps,u^\eps}_t,Y^{\eps,u^\eps}_t)\D Z_t,\nonumber
\end{align}
with $\D\langle W, Z\rangle_t = \rho \D t$.
With this control representation at hand, we proceed by analysing
the limit of the right-hand side of~\eqref{Eq:ControlRepresentation} as~$\eps$ tends to zero.
Similarly to~\cite{MS}, we define the function
$\theta: \RR^{4}\mapsto \RR$ by
\begin{align}
\theta(x,y,z_{1},z_{2}) := -\gamma^{-1} g(x,y)u_{y}(x,y)\left(\rho z_{1} + \rrho z_{2}\right).
\label{Eq:GoverningLimitODE}
\end{align}
Let $\Zz=\RR$ define the control space,
and $\Yy$ the state space of~$Y$.
Our assumptions guarantee that~$\theta$ is bounded in $x$, affine in $z_{1},z_{2}$ growing at most polynomially in $y$. Next step is to introduce an appropriate family of occupation measures whose role is to single out the correct averaging taking place in the limit. For this reason, let $0<\Delta=\Delta(\eps)\downarrow 0$  be a parameter whose role is to exploit a time-scale separation.
Let~$A_{1}$, $A_{2}$, $B$, $\Gamma $ be Borel sets of $\Zz$, $\Zz$, $\Yy$, $[0,T]$ respectively.
Then, define the occupation measure  $P^{\eps,\Delta}$ by
\begin{equation} \label{E:occupation}
  P^{\eps, \Delta}(A_1 \times A_2 \times B \times \Gamma)
   := \int_\Gamma \left[ \frac{1}{\Delta} \int_t^{t+\Delta} \ind_{A_1}(u_1^\eps(s)) \ind_{A_2}(u_2^\eps(s)) \ind_B(Y_s^{\eps, u^\eps}) \D s \right] \D t,
\end{equation}
and assume $u_i^\eps(s) = 0$ if $s > T$. For a Polish space $S$, let $\mathcal{P}(S)$ be the space of probability measures on $S$. Next we recall the definition of a viable pair for the moderate deviations case.

\begin{definition} [Definition 4.1 in~\cite{MS}]\label{D:ViablePair}
Let $\theta(x, y, z_1, z_2) \colon \RR \times \Yy \times  \Zz \times \Zz \to \RR$ be a function that grows at most linearly in~$y$.
For each $x\in\RR$, let $\Ll_{x}$ be a second-order elliptic partial differential operator and denote by $\mathcal{D}(\Ll_{x})$ its domain of definition. A pair $(\psi, P) \in \mathcal{C}([0,T]; \RR) \times \mathcal{P}(\Zz \times \Zz \times \Yy \times [0,T])$ is called a viable pair with respect to $(\theta, \Ll_{x})$,
and we write $(\psi, P)\in\mathcal{V}(\theta, \Ll_{x})$, if
\begin{itemize}
\item the function $\psi$ is absolutely continuous;
\item the measure $P$ is integrable in the sense that
\begin{equation*}
  \int_{\Zz \times \Zz \times \Yy \times [0,T]}
  \left( \lvert z_1 \rvert^2 + \lvert z_2 \rvert^2 + \lvert y \rvert^{2}\right) P(\D z_1 \,\D z_2 \,\D y \,\D s) < \infty;
\end{equation*}
\item for all $t \in [0,T]$ (with~$\overline{X}$ defined in~\eqref{E:averagedSDE1}),
\begin{equation}\label{E:viableSDE}
  \psi_t = \int_{\Zz \times \Zz \times \Yy \times [0,t]} \theta(\overline{X}_s, y, z_1, z_2) P(\D z_1 \,\D z_2 \,\D y \,\D s);
\end{equation}
\item for all $t \in [0, T]$ and for every $F \in \mathcal{D}(\Ll_{x})$,
\begin{equation}\label{E:viableCenter}
  \int_0^t \int_{\Zz \times \Zz \times \Yy} \Ll_{\overline{X}_s} F(y) P(\D z_1 \,\D z_2 \,\D y \,\D s) = 0;
\end{equation}
\item for all $t \in [0, T]$,
\begin{equation}\label{E:viableLebesgue}
  P(\Zz \times \Zz \times \mathcal{Y} \times [0,t]) = t.
\end{equation}
\end{itemize}
\end{definition}
The last item is equivalent to stating that the last marginal of $P$ is Lebesgue measure, or that~$P$ can be decomposed as $P(\D z_1 \,\D z_2 \,\D y \,\D t) = P_t(\D z_1\, \D z_2\, \D y) \,\D t$.  Then we shall establish the following result:
\begin{theorem} \label{T:control}
Under Assumptions~\ref{C:ergodic1},~\ref{C:StrongSolution1}, \ref{C:GrowthAssumptions} and~\ref{C:Positiveness}, the family of processes $\{R^{\eps},\eps>0\}$ from~\eqref{Eq:IntergatedProcess} satisfies the large deviations principle, with action functional
\begin{equation}
  I(\phi) = \inf_{(\phi,P) \in \mathcal{V}(\theta, \Ll_{x})}
  \left\{ \frac{1}{2} \int_{\Zz \times \Zz \times \Yy \times [0, T]}
  \left( \lvert z_1 \rvert^2 + \lvert z_2 \rvert^2 \right) P(\D z_1 \, \D z_2 \, \D y \, \D s) \right\}\label{Def:RateFunctionControlForm}
\end{equation}
with the convention that the infimum over the empty set is infinite.
\end{theorem}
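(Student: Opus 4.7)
The plan is to establish the Laplace principle \eqref{E:LaplacePrinciple} via the weak-convergence approach of \cite{DE97, BD98}, with the rate function expressed in the viable-pair form \eqref{Def:RateFunctionControlForm}; this is the intermediate representation that will then be reduced to the explicit quadratic form of Theorem~\ref{T:IntegratedFunctionsl}. The starting point is the stochastic control representation \eqref{Eq:ControlRepresentation}, so the whole argument reduces to passing to the limit $\eps\downarrow 0$ on the right-hand side, identifying accumulation points and proving matching upper and lower bounds.

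The first step is the Laplace upper bound. Pick an $\eps$-optimal control $u^{\eps}\in\mathcal{A}$ for \eqref{Eq:ControlRepresentation}; boundedness of $G$ gives a uniform bound $\EE\int_{0}^{T}|u^{\eps}(s)|^{2}\D s\leq C$. Following the template of \cite{MS}, I form the occupation measure $P^{\eps,\Delta}$ from \eqref{E:occupation} with a suitably chosen $\Delta=\Delta(\eps)\downarrow 0$ slow enough that ergodic averaging of the fast component takes place and the control is seen as locally constant. I then show tightness of the triple $(R^{\eps,u^{\eps}}, X^{\eps,u^{\eps}}, P^{\eps,\Delta})$ on $\mathcal{C}([0,T];\RR)\times\mathcal{C}([0,T];\RR)\times\mathcal{P}(\Zz\times\Zz\times\Yy\times[0,T])$, relying on uniform moment bounds in $y$ for the controlled fast process $Y^{\eps,u^{\eps}}$ (these are the point of Appendix~\ref{S:PrelResControlledProcesses} and hinge on the drift decomposition $f(x,y)=-\kappa y+\tau(x,y)$ with $L_{\tau}<\kappa$ from Assumption~\ref{C:ergodic1}(iii)). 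Tightness of $P^{\eps,\Delta}$ in particular needs a tightness function on $\Yy$, for which $y\mapsto|y|^{2}$ works under the same assumption.

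Next I identify the limit. Writing $R^{\eps,u^{\eps}}$ as in \eqref{Eq:ControlledProcesses}, most terms carry prefactors $\delta^{2}/(\eps^{3/2}h(\eps))$, $\delta/(\eps h(\eps))$, $\delta^{2}/(\eps h(\eps))$, $\delta/(\sqrt{\eps}h(\eps))$, $\delta^{2}/(\sqrt{\eps}h(\eps))$; using the polynomial growth of $u,u_{x},u_{xy}$ from Theorem~\ref{T:regularity} and Lemma~\ref{T:regularityCIR}, together with Assumption~\ref{C:GrowthAssumptions} and the moment bounds on $Y^{\eps,u^{\eps}}$, every such term is shown to vanish in $L^{1}$ as $\eps\downarrow 0$; the terms involving $u^{\eps}$ with prefactor $\delta/\eps$ combine into a Riemann sum converging, after averaging via $P^{\eps,\Delta}$, to $\int\theta(\overline{X}_{s},y,z_{1},z_{2})P(\D z_{1}\D z_{2}\D y\D s)$ with $\theta$ as in \eqref{Eq:GoverningLimitODE}. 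For the centering identity \eqref{E:viableCenter}, the standard argument based on Itô's formula applied to $F\in\mathcal{D}(\Ll^{Y}_{x})$ composed with $Y^{\eps,u^{\eps}}$, multiplied by $\eps/\delta^{2}$ and passed to the limit, shows that the $y$-marginal of $P_{t}$ is $\Ll^{Y}_{\overline{X}_{t}}$-invariant. Thus any accumulation point $(\phi,P)$ belongs to $\mathcal{V}(\theta,\Ll^{Y}_{x})$, and lower semicontinuity of the cost together with Fatou yields the upper bound for \eqref{E:LaplacePrinciple}.

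The lower bound is proved by reverse construction: given a viable pair $(\phi,P)$ with finite cost $\frac{1}{2}\int(|z_{1}|^{2}+|z_{2}|^{2})P(\D z_{1}\D z_{2}\D y\D s)+\eta$, I build a nearly optimal control $u^{\eps}$ by a piecewise approximation of $P$ on a grid of mesh $\Delta$, using a measurable selection to obtain $z_{i}$ as functions of the slow state and time. Plugging this control back into \eqref{Eq:ControlRepresentation} and running the same tightness/identification machinery shows that $R^{\eps,u^{\eps}}$ converges to $\phi$ and that $\EE[\tfrac{1}{2}\int|u^{\eps}|^{2}\D s+G(R^{\eps,u^{\eps}})]$ converges to $\frac{1}{2}\int(|z_{1}|^{2}+|z_{2}|^{2})\D P+G(\phi)$, which is within $\eta$ of the putative limit. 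The main technical obstacle is the tightness/moment analysis: unlike \cite{GuillinLipster2005}, $g,\sigma,H$ and the Poisson solution $u$ all grow polynomially in $y$, so the uniform integrability of $|Y^{\eps,u^{\eps}}|^{p}$ and of the nonlinear terms appearing in $R^{\eps,u^{\eps}}$ has to be secured from the dissipative structure of $f$ together with the growth budget allocated by Assumption~\ref{C:GrowthAssumptions}, which is precisely what distinguishes the two regimes singled out there (general $(x,y)$-dependence versus CIR-type with $H=H(y)$).
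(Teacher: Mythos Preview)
Your overall architecture matches the paper's: weak convergence via the Bou\'{e}--Dupuis representation \eqref{Eq:ControlRepresentation}, tightness of $(R^{\eps,u^{\eps}},P^{\eps,\Delta})$, identification of accumulation points as viable pairs, and a constructive step for the other inequality. Two points, however, need correction.

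First, you have the two halves of the Laplace principle labelled backwards. The argument you call the ``upper bound'' (near-optimal controls, tightness, identification, then Fatou/lower semicontinuity) produces the Laplace \emph{lower} bound $\liminf_{\eps\downarrow 0}\left(-h^{-2}(\eps)\log\EE[\cdots]\right)\geq\inf\{I+G\}$; the constructive step is what gives the Laplace \emph{upper} bound $\limsup\leq\inf\{I+G\}$. This is terminological only; the mathematics in each half is pointed the right way. Relatedly, you should upgrade the bound on controls from $\EE\int_{0}^{T}|u^{\eps}|^{2}\D s\leq C$ to the almost-sure bound $\int_{0}^{T}|u^{\eps}|^{2}\D s<N$ via Lemma~\ref{L:uBound}, since that is what the estimates in Appendix~\ref{S:PrelResControlledProcesses} actually require; and you do not mention compactness of the level sets of~$I$, which the paper checks separately.

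Second, and more substantively, your construction for the upper bound differs from the paper's and, as stated, would not close. You propose to approximate a general viable pair $(\phi,P)$ on a grid and use a measurable selection to produce $z_{i}$ as functions of the slow state and time. The paper instead first solves the local problem explicitly, obtaining $L^{o}(x,\eta,\beta)=\beta^{2}/(2\overline{Q}(x))$ together with the minimising feedback $\overline{u}_{1}(t,x,y),\overline{u}_{2}(t,x,y)$ proportional to $g(x,y)u_{y}(x,y)\dot{\psi}_{t}/\overline{Q}(x)$, and then plugs in the \emph{fast-variable feedback} control $\overline{u}^{\eps}(t)=\overline{u}(t,\overline{X}_{t},Y^{\eps}_{t})$. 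The dependence on $Y^{\eps}_{t}$ is essential: it is what makes the running cost $\frac{1}{2}\int_{0}^{T}|\overline{u}^{\eps}(s)|^{2}\D s$ converge by ergodic averaging to $\frac{1}{2}\int_{0}^{T}\int_{\Yy}|\overline{u}(s,\overline{X}_{s},y)|^{2}\mu_{\overline{X}_{s}}(\D y)\D s=I(\psi)$. A control depending only on slow state and time cannot trigger this averaging; you would need either relaxed controls or precisely the $y$-dependent feedback the paper uses. The paper's explicit route also delivers for free the reduction of the viable-pair rate function \eqref{Def:RateFunctionControlForm} to the quadratic form in Theorem~\ref{T:IntegratedFunctionsl}, and is where Assumption~\ref{C:Positiveness} is actually used.
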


As will be shown during the proof, Theorem~\ref{T:IntegratedFunctionsl} follows directly from Theorem~\ref{T:control}.

\section{Proof of  Theorem~\ref{T:IntegratedFunctionsl}}\label{S:ProofIntegratedFunctMDP}
In this section we offer the proof of Theorem ~\ref{T:IntegratedFunctionsl}.
As mentioned in Section~\ref{S:ControlRep}, the proof of Theorem~\ref{T:IntegratedFunctionsl}
is equivalent to that of the Laplace principle in Theorem~\ref{T:control}.
In Subsections~\ref{SS:Tightness} and~\ref{SS:ExistenceViablePair} we prove tightness and convergence of the pair $(R^{\eps, u^\eps}, P^{\eps, \Delta})$ respectively. In Subsection~\ref{SS:LaplacePrinciple}, we finally establish the Laplace principle and the representation formula of Theorem~\ref{T:IntegratedFunctionsl}.
The proofs of these results make use of the results of~\cite{MS}. Below we present the main arguments, highlighting the differences and give exact pointers to~\cite{MS} when appropriate.

\subsection{Tightness of the pair $\{(R^{\eps, u^\eps}, P^{\eps, \Delta}),\eps>0\}$}\label{SS:Tightness}
In this section we prove the following proposition:
\begin{proposition}\label{P:tightness}
Under Assumptions~\ref{C:ergodic1},~\ref{C:StrongSolution1} and~\ref{C:GrowthAssumptions},
for a family $\{u^{\eps},\eps>0\}$ of controls in~$\mathcal{A}$ satisfying
\begin{equation*}
\sup_{\eps>0}\int_{0}^{T}\left| u^{\eps}(t)\right|
^{2}\D t<N \quad\text{almost surely},
\end{equation*}
for some $N<\infty$, the following hold:
\begin{enumerate}[(i)]
\item {the family $\{(R^{\eps,u^{\eps}},\mathrm{P}^{\eps,\Delta
}),\eps>0\}$ is tight on $\mathcal{C}([0,T];\RR)\times \mathcal{P}(\Zz \times \Zz \times \Yy \times [0,T])$;}

\item {With the set
\[
\mathcal{B}_{M} := \Big\{(z_{1},z_{2},y)\in\Zz\times\Zz\times \RR: |z_{1}|>M, |z_{2}| > M, |y| >M\Big\},
\]
the family $\{\mathrm{P}^{\eps,\Delta},\eps>0\}$ is uniformly
integrable in the sense that
\[
\lim_{M\uparrow\infty}\sup_{\eps>0}\mathbb{E}_{x_{0},y_{0}}\left[
\int_{(z_{1},z_{2},y)\in\mathcal{B}_{M}\times [0,T]}
\left[\left| z_{1}\right| +\left| z_{2}\right|+\left| y\right| \right]
\mathrm{P}^{\eps,\Delta}(\D z_{1} \,\D z_{2} \,\D y \,\D t)\right]  =0.
\]
}
\end{enumerate}
\end{proposition}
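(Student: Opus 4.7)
The plan is to adapt the weak-convergence strategy of~\cite{MS}, with the main new input being the handling of polynomial (rather than bounded) growth of the coefficients and of the Poisson solution $u$, under the tailored Assumption~\ref{C:GrowthAssumptions}.

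\textbf{Step 1: Uniform moment bounds on the controlled fast process.} I would first prove that for every $p\geq 1$,
\[
\sup_{\eps>0}\EE\left[\sup_{t\in[0,T]}\bigl(1+|Y_t^{\eps,u^{\eps}}|\bigr)^{p}\right]<\infty.
\]
This is done by applying It\^o's formula to $V(y):=(1+|y|^{2})^{p/2}$, taking into account the prefactors $\eps/\delta^{2}$ and $\sqrt{\eps}h(\eps)/\delta$ in the SDE for $Y^{\eps,u^{\eps}}$. The dissipativity $f(x,y)y\leq -\kappa|y|^{2}$ (Assumption~\ref{C:ergodic1}(iii)) dominates the It\^o correction, which grows like $|y|^{2q_{g}}$ with $q_{g}<1$. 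The controlled drift term is absorbed using Young's inequality together with the a.s.~bound $\int_{0}^{T}|u^{\eps}(s)|^{2}\D s\leq N$. Here I would follow closely~\cite[Lemma 5.1]{MS}, but simplified because $f,g$ depend only on $(x,y)$ (not on additional averaged variables).

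\textbf{Step 2: Tightness and uniform integrability of $\{P^{\eps,\Delta}\}$.} Combining Step~1 with the a.s.~$L^{2}$-bound on the controls yields the key second-moment estimate
\[
\sup_{\eps>0}\EE\int_{\Zz\times\Zz\times\Yy\times[0,T]}\!\!\bigl(|z_{1}|^{2}+|z_{2}|^{2}+|y|^{2}\bigr)\,P^{\eps,\Delta}(\D z_{1}\,\D z_{2}\,\D y\,\D t)<\infty,
\]
since the $z$-integral reduces to $\int_{0}^{T}|u^{\eps}|^{2}\D t$ (up to boundary effects of order $\Delta$) and the $y$-integral is controlled by Step~1. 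By the Prokhorov-type compactness criterion on $\mathcal{P}(\Zz\times\Zz\times\Yy\times[0,T])$ via uniform second moments, we obtain tightness of $\{P^{\eps,\Delta}\}$, and a Chebyshev-type bound $\EE\int_{\mathcal{B}_{M}\times[0,T]}(|z_{1}|+|z_{2}|+|y|)\,\D P^{\eps,\Delta}\leq C/M$ gives the uniform integrability statement (ii).

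\textbf{Step 3: Tightness of $\{R^{\eps,u^{\eps}}\}$ in $\mathcal{C}([0,T];\RR)$.} Using the Aldous-Kurtz criterion, I would inspect term by term the representation~\eqref{Eq:ControlledProcesses}. With $\eps/\delta\to\gamma\in(0,\infty)$ the prefactors satisfy $\delta^{2}/(\eps^{3/2}h(\eps))=O(\sqrt{\eps}/h(\eps))$, $\delta/(\eps h(\eps))=O(1/h(\eps))$ and $\delta/(\sqrt{\eps}h(\eps))=O(\sqrt{\eps}/h(\eps))$, all tending to zero, so the boundary term, the two It\^o integrals (controlled by Burkholder-Davis-Gundy and the polynomial growth of $u$, $u_{x}$, $u_{y}$ from Theorem~\ref{T:regularity} and Lemma~\ref{T:regularityCIR}) and the three bounded-variation ``mixed'' terms are uniformly negligible. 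The remaining three absolutely continuous terms of order $\delta/\eps\sim 1/\gamma$ are estimated in $\mathcal{C}([0,T];\RR)$ by Cauchy-Schwarz in time against the uniform bound on $\int|u^{\eps}|^{2}\D t$ and the moment bounds of Step~1 on $(gu_{y})(X^{\eps,u^{\eps}},Y^{\eps,u^{\eps}})$, yielding equicontinuity in probability. In the CIR regime (Assumption~\ref{C:GrowthAssumptions}(ii)), the sharper estimate $|u_{y}(y)|\lesssim |y|^{q_{H}-1}$ from Lemma~\ref{T:regularityCIR} must be combined with the scaling condition $\sqrt{\eps}h(\eps)^{(q_{g}+q_{H}-1)/(1-q_{g})}\to 0$ to close the estimates.

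\textbf{Main obstacle.} The core difficulty is Step~3: the polynomial growth of $u$ and its derivatives interacts non-trivially with the $\eps,\delta,h(\eps)$ prefactors and with the only $L^{2}$-controlled processes $u^{\eps}_{i}$. The constraints $\max\{q_{\sigma}+q_{H},q_{g}+q_{H}\}<1$ in Assumption~\ref{C:GrowthAssumptions}(i), and the sharper scaling condition in (ii), are precisely what is needed for the moment bounds of Step~1 to compensate these polynomial blow-ups. The price to pay for relaxing the uniform-boundedness assumption of~\cite{GuillinLipster2005} is exactly this delicate bookkeeping.
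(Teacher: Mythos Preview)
Your Step~1 is incorrect as stated, and this is not a cosmetic issue: the claimed uniform bound
\[
\sup_{\eps>0}\EE\Bigl[\sup_{t\in[0,T]}\bigl(1+|Y_t^{\eps,u^{\eps}}|\bigr)^{p}\Bigr]<\infty
\]
does \emph{not} hold. In the controlled fast equation the control drift carries the prefactor $\sqrt{\eps}h(\eps)/\delta$; since $\eps/\delta\to\gamma$, this behaves like $\gamma h(\eps)/\sqrt{\eps}\to\infty$, and no amount of dissipativity from $f$ can absorb it uniformly. What one actually obtains (this is Lemma~\ref{L:Ygrowth}) is the \emph{non-uniform} bound
\[
\EE\Bigl[\sup_{t\in[0,T]}|Y_t^{\eps,u^{\eps}}|^{2}\Bigr]\leq K\Bigl(1+\bigl(\tfrac{\sqrt{\eps}}{\delta}\bigr)^{2(1-\nu)}+h(\eps)^{\frac{2}{1-q_g}}\Bigr),
\]
which diverges like $h(\eps)^{2/(1-q_g)}$. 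By contrast, the \emph{time-integrated} second moment $\EE\int_0^T|Y_s^{\eps,u^{\eps}}|^2\D s$ \emph{is} uniformly bounded (Lemma~\ref{L:YIntegralgrowth}), and that is what drives Step~2.

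This distinction is exactly what forces the structure of Step~3. For the occupation measures and for the eight integral terms in the decomposition of $R^{\eps,u^\eps}$, the integrated bound (via Lemma~\ref{L:productBound}) suffices, essentially as you sketch. But the boundary term $R^{1,\eps}=\frac{\delta^2}{\eps^{3/2}h(\eps)}\bigl(u(X^{\eps,u^\eps}_t,Y^{\eps,u^\eps}_t)-u(x_0,y_0)\bigr)$ genuinely requires the sup of~$Y$, and here the paper does \emph{not} argue by a uniform moment bound. Instead it plugs in the diverging estimate above and shows that the vanishing prefactor $\delta^2/(\eps^{3/2}h(\eps))$ beats the growth $h(\eps)^{q_H/(1-q_g)}$; the resulting balance produces precisely the quantity $\sqrt{\eps}\,h(\eps)^{(q_g+q_H-1)/(1-q_g)}$ appearing in Assumption~\ref{C:GrowthAssumptions}(ii). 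You allude to this condition in your ``Main obstacle'' paragraph, but your Step~1 as written would make it unnecessary---which it is not. Rewrite Step~1 to separate the uniform integrated bound from the $h(\eps)$-dependent sup bound, and route the boundary term through the latter.
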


\begin{proof}[Proof of Proposition~\ref{P:tightness}]
Tightness of $\{ P^{\eps, \Delta}, \eps,\Delta>0 \}$ on $\mathcal{P}(\Zz \times \Zz \times \Yy \times [0,T])$
and the uniform integrability of the family of occupation measures
is the subject of~\cite[Section 5.1.1]{MS}, and will not be repeated here.
It remains to prove tightness of $\{ R^{\eps, u^\eps},\eps>0 \}$ on $\mathcal{C}([0,T];\RR)$.
We prove it making use of the characterisation of~\cite[Theorem 8.7]{Billingsley1968},
and it follows if we establish that there is $\eps_{0}>0$ such that for every $\eta>0$,
\begin{enumerate}[(i)]
\item there exists $N<\infty$ such that
\begin{equation}
\mathbb{P}\left(\sup_{0\leq t\leq T}\left|R^{\eps,u^{\eps}}_{t}\right|>N\right) \leq \eta,
\quad\textrm{ for every }\eps\in(0,\eps_{0});
\label{Eq:CompactContainment}
\end{equation}
\item for every $M<\infty$,
\begin{equation}
\lim_{\alpha\downarrow0}\sup_{\eps\in(0,\eps_{0})}\PP\left(
\sup_{|t_{1}-t_{2}|<\alpha,0\leq t_{1}<t_{2}\leq T}\left|R^{\eps,u^{\eps}}_{t_{1}}-R^{\eps,u^{\eps}}_{t_{2}}\right|\geq\eta, \sup_{t\in[0,T]}\left|R^{\eps,u^{\eps}}_{t}\right|\leq M\right)  =0.\label{Eq:ContinuousReg}
\end{equation}
\end{enumerate}
Essentially both statements follow from the control representation~\eqref{Eq:ControlledProcesses} together with the results on the growth of the solution to the Poisson equation by Theorem~\ref{T:regularity} and Lemma~\ref{T:regularityCIR} and using Lemma~\ref{L:productBound} to treat each term on the right-hand side of~\eqref{Eq:ControlledProcesses}. For sake of completeness, we prove the first statement~\eqref{Eq:CompactContainment}.
The second statement~\eqref{Eq:ContinuousReg} follows similarly using the general purpose
Lemma~\ref{L:productBound}. Rewrite~\eqref{Eq:ControlledProcesses} as
\begin{equation}\label{Eq:ControlRepresentation_R}
R_{t}^{\eps,u^\eps} = \sum_{i=1}^{9}R_{t}^{i,\eps},
\end{equation}
where $R_{t}^{i,\eps}$ represents the $i^{\text{th}}$ term
on the right-hand side of~\eqref{Eq:ControlledProcesses}, and notice that
\[
\mathbb{P}\left(\sup_{0\leq t\leq T}\left|R^{\eps,u^{\eps}}_{t}\right|>N\right)\leq \sum_{i=1}^{9}\mathbb{P}\left(\sup_{0\leq t\leq T}\left|R^{i,\eps}_{t}\right|>\frac{N}{9}\right).
\]
Theorem~\ref{T:regularity} and an application of H\"{o}lder inequality imply that, for $q_{H}<2-q_{g}$,
\begin{align}
\mathbb{E}\left(\sup_{t\in[0,T]}| R^{1,\eps}_{t}|\right)
 & \leq \frac{2\delta^{2}}{\eps^{3/2} h(\eps)}\left(1+\mathbb{E}\sup_{t\in[0,T]}| Y^{\eps,u^\eps}_{t}|^{q_{H}}\right)\leq
 \frac{2\delta^{2}}{\eps^{3/2} h(\eps)}\left(1+\left(\mathbb{E}\sup_{t\in[0,T]}| Y^{\eps,u^\eps}_{t}|^{2}\right)^{q_{H}/2}\right)\nonumber\\
 &\leq K \frac{\delta^{2}}{\eps^{3/2} h(\eps)}\left(1+\left(\frac{\sqrt{\eps}}{\delta}\right)^{(1-\nu)q_{H}}+h(\eps)^{\frac{q_{H}}{1-q_{g}}}\right)\nonumber\\
 & = K \left(\frac{\delta^{2}}{\eps^{3/2} h(\eps)}+ \left(\frac{\delta}{\eps}\right)^{2-(1-\nu)q_{H}}\frac{\eps^{\frac{1-(1-\nu)q_{H}}{2}}}{ h(\eps)}+\frac{\delta^{2}}{\eps^{2}} \sqrt{\eps}h(\eps)^{\frac{q_{g}+q_{H}-1}{1-q_{g}}}\right) ,\label{Eq:FirstTermConv}
\end{align}
where the third inequality follows from Lemma~\ref{L:Ygrowth}.
This certainly stays bounded (actually, it goes to zero) by Assumption~\ref{C:GrowthAssumptions} and because, for any $q_{H}<2-q_{g}$ we can choose $\nu\in(0,1-q_{g})$ so that $\lim_{\eps\downarrow  0}\frac{\eps^{\frac{1-(1-\nu)q_{H}}{2}}}{ h(\eps)}=0$.
Hence we obtain that  for some unimportant constant $K<\infty$,
$$
 \PP\left( \sup_{t\in[0,T]} \left|R^{1,\eps}_{t}\right| >  \frac{N}{9} \right) \leq \frac{K}{N}.
$$
For the stochastic integral term $R^{2,\eps}_{t}$, we have that for a constant $C<\infty$ that may change from line to line and  for $\nu>0$ small enough such that $q_{gu_{y}}(1+\nu)<1$ (with some abuse of notation $q_{gu_{y}}$ is the degree of polynomial growth in $|y|$ of the function $g(x,\cdot)u_{y}(x,\cdot)$), we have
\begin{align*}
\PP &\left( \sup_{t\in[0,T]} \left\lvert \int_{0}^{t} gu_{y}(X_{s}^{\eps, u^\eps}, Y_{s}^{\eps, u^\eps}) \D Z_{s} \right\rvert > \frac{\eps h(\eps)N}{9\delta}\right)\leq\nonumber\\
  & \leq \frac{C}{N^{2(1+\nu)}}\left(\frac{\delta}{\eps h(\eps)}\right)^{2(1+\nu)}
  \EE\left(\sup_{t\in[0,T]} \left\lvert \int_{0}^{t} gu_{y}(X_{s}^{\eps, u^\eps}, Y_{s}^{\eps, u^\eps}) \D W_{s} \right\rvert^{2(1+\nu)}  \right)\nonumber\\
  &\leq \frac{C}{N^{2(1+\nu)}}\left(\frac{\delta}{\eps h(\eps)}\right)^{2(1+\nu)}
    \EE\left(\sup_{t\in[0,T]} \left\lvert \int_{0}^{t} \left|gu_{y}(X_{s}^{\eps, u^\eps}, Y_{s}^{\eps, u^\eps})\right|^{2} \D s \right\rvert^{(1+\nu)}  \right),
\end{align*}
from which the result follows by Lemma~\ref{L:productBound} given that $q_{gu_{y}}<1$.
Similarly, we obtain a corresponding bound for the other stochastic integral term $R^{9,\eps}_{t}$.
The Riemann integral terms $R^{3,\eps}_{t},\ldots, R^{8,\eps}_{t}$ are treated very similarly again using Lemma~\ref{L:productBound}. The conditions on $q_{\sigma},q_{g}$ and $q_{H}$
from Assumption~\ref{C:GrowthAssumptions} guarantee that Lemma~\ref{L:productBound} applies in these cases.  These considerations yield~\eqref{Eq:CompactContainment}.

The second statement~\eqref{Eq:ContinuousReg} follows by similar arguments using again Lemma~\ref{L:productBound} and the growth properties of the involved functions with respect to~$|y|$.
The only term that potentially needs some discussion is the~$R^{1,\eps}_{t}$ term.
In particular, we need to show that for every $\eta>0$, there exists $\eps_{0}>0$ such that
$$
\sup_{\eps\in(0,\eps_{0})}\PP\left( \sup_{t\in[0,T]} \left|R^{1,\eps}_{t}\right| >  \eta \right) \leq \eta.
$$
But this follows again by the estimate on $\mathbb{E}\left(\sup_{t\in[0,T]}| R^{1,\eps}_{t}|\right)$ above together with Assumption~\ref{C:GrowthAssumptions}. This completes the proof of the proposition.
\end{proof}

\subsection{Convergence of the pair $\{(R^{\eps, u^\eps}, P^{\eps, \Delta}),\eps>0\}$}\label{SS:ExistenceViablePair}

In Section~\ref{SS:Tightness} we proved that the family of processes $\{ (R^{\eps, u^\eps}, P^{\eps, \Delta}),\ \eps > 0 \}$ is tight.
It follows that for any subsequence of $\eps$ converging to 0, there exists a subsubsequence
of $(R^{\eps, u^\eps}, P^{\eps, \Delta})$ which converges in distribution to some limit
$(\overline{R}, \overline{P})$.
The goal of this section is to show that $(\overline{R}, \overline{P})$
is a viable pair with respect to $(\theta,\Ll_{x})$, according to Definition~\ref{D:ViablePair}.
To show that the limit point $(\overline{R}, \overline{P})$ is a viable pair,
we must show that it satisfies~\eqref{E:viableSDE},~\eqref{E:viableCenter}, and~\eqref{E:viableLebesgue}.
The proof of~\eqref{E:viableCenter} and~\eqref{E:viableLebesgue}
can be found in~\cite[Section 5.2]{MS}, whereas the proof of statement~\eqref{E:viableSDE}
follows via the Skorokhod Representation Theorem~\cite[Theorem 6.7]{Billingsley1968}
and the martingale problem formulation.
For completeness, we present the argument below.
We will invoke the Skorokhod Representation Theorem, which allows us to assume that there exists a probability space in which the desired convergence occurs with probability one. Let us define
$$
\Psi^{\eps, u^\eps}_{t} := R^{\eps, u^\eps}_{t}-\frac{\delta^{2}}{\eps^{3/2} h(\eps)}\left(u(X^{\eps,u^\eps}_{t},Y^{\eps,u^\eps}_{t})-u(x_{0},y_{0})\right).
$$
As in the proof of (\ref{Eq:FirstTermConv}),
$$
\lim_{\eps\downarrow 0}\frac{\delta^{2}}{\eps^{3/2} h(\eps)}
\EE\left(\sup_{t\in[0,T]}\left|u(X^{\eps,u^\eps}_{t},Y^{\eps,u^\eps}_{t})-u(x_{0},y_{0})\right|\right)=0,
$$
so that it is enough to consider the limit in distribution of the family
$\{ \Psi^{\eps, u^\eps}_{\cdot},\eps > 0 \}$. The rest of the argument is now classical.

Consider an arbitrary function $G$  that is real valued, smooth with compact support on~$\RR$.
Fix two positive integers~$p_1$ and~$p_2$ and let $\phi_j$, $j = 1, \dots, p_1$, be real valued, smooth functions with compact support on $\Zz \times \Zz \times \Yy \times [0, T]$.
Let~$S_{1}$, $S_{2}$, and $t_i$, $i = 1, \dots, p_2$, be nonnegative real numbers with
$t_i \le S_{1} <  S_{1} + S_{2} \le T$.
Let~$\zeta$ be a real-valued, bounded and continuous function with compact support on $\RR^{p_2} \times \RR^{p_1 p_2}$.
Given a measure $P_{0} \in \mathcal{P}(\Zz \times \Zz \times \Yy \times [0,T])$ and $t \in [0,T]$, define
$$
\left(P_{0}, \phi_j\right)_t := \int_{\Zz \times \Zz \times \Yy \times [0,t]} \phi_j(z_1, z_2, y, s)P_{0}(\D z_1 \,\D z_2 \,\D y \,\D s),
$$
the empirical measure
$$
P_t^{\eps, \Delta}(\D z_1 \,\D z_2 \,\D y) := \frac{1}{\Delta} \int_t^{t+\Delta} \ind_{\D z_1}(u_1^\eps(s)) \ind_{\D z_2}(u_2^\eps(s)) \ind_{\D y}(Y_s^{\eps, u^\eps}) \D s,
$$
as well as the operator $\widehat{\Ll}_t^{\eps, \Delta}$ as
$$
\widehat{\Ll}_t^{\eps, \Delta}G(\Psi) := \int_{\Zz \times \Zz \times \Yy}   G^{\prime}(\Psi)  \theta (\overline{X}_t, y, z_1, z_2) P_t^{\eps, \Delta}(\D z_1 \,\D z_2 \,\D y).
$$
Based on the martingale problem formulation, proving~\eqref{E:viableSDE} is equivalent to proving that
\begin{equation}\label{E:martingaleProblem}
 \lim_{\eps \downarrow 0} \EE\left[
 \zeta\left(\Psi_{t_i}^{\eps, u^\eps}, \left((P^{\eps, \Delta}, \phi_j)_{t_i}\right)_{i \le p_2, j \le p_1}\right)
 \left[ G(\Psi_{S_{1} + S_{2}}^{\eps, u^\eps}) - G({\Psi_{S_{1}}^{\eps, u^\eps}}) - \int_{S_{1}}^{S_{1} + S_{2}} \widehat{\Ll}_t^{\eps, \Delta}G(\Psi_{t}^{\eps, u^\eps}) \D t \right] \right] = 0,
\end{equation}
and
\begin{equation}\label{E:PConvergence}
 \lim_{\eps \downarrow 0}\left\{\int_{S_{1}}^{S_{1} + S_{2}} \widehat{\Ll}_t^{\eps, \Delta}G(\Psi_{t}^{\eps, u^\eps}) \D t - \int_{\Zz \times \Zz \times \RR \times [S_{1}, S_{1} + S_{2}]}   G^{\prime}( \overline{R}_t) \theta(\overline{X}_t, y, z_1, z_2) \overline{P}(\D z_1 \,\D z_2 \,\D y \,\D t) \right\} = 0.
\end{equation}
Note first that for every real-valued, continuous function~$\phi$ with compact support and $t \in [0, T]$,
$\left(P^{\eps, \Delta}, \phi\right)_t$ converges to $\left(\overline{P}, \phi\right)_t$
with probability one as~$\eps$ tends to zero.
Now~\eqref{E:PConvergence} follows directly by the first statement of~\cite[Lemma 5.1]{MS}.
In order to prove~\eqref{E:martingaleProblem} we first apply the It\^{o} formula to $G(\Psi)$.
Then it is easy to see that all of the terms apart from
$$
\frac{\delta}{\eps}\int_{0}^{t}
G^{\prime}( \Psi_s)(g u_{y})(X^{\eps,u^\eps}_{s},Y^{\eps,u^\eps}_{s})\left(\rho u^{\eps}_{1}(s)
 + \rrho u^{\eps}_{2}(s)\right)\D s
$$
in the representation~\eqref{Eq:ControlRepresentation_R} will vanish in the limit.
This term surviving in the limit together with the second statement of~\cite[Lemma 5.1]{MS}
directly yield~\eqref{E:martingaleProblem}. This concludes the proof of~\eqref{E:viableSDE}.

\subsection{Laplace principle and compactness of level sets}\label{SS:LaplacePrinciple}
We prove here the Laplace principle lower and upper bounds and the compactness of level sets of the action functional~$I(\cdot)$.
We start with the lower bound, i.e. we need to show that for all bounded, continuous functions~$G$ mapping $\mathcal{C}([0,T]; \RR)$ into~$\RR$,
$$
\liminf_{\eps \downarrow 0} - \frac{\log \EE \left[ \exp \left\{ - h^2(\eps) G(R^\eps) \right\} \right]}{h^2(\eps)}
 \ge \inf_{(\phi, P) \in \mathcal{V}(\theta, \Ll_{x})}
 \left\{\frac{1}{2} \int \left[ \lvert z_1\rvert^2 + \lvert z_2\rvert^2 \right] P(\D z_1\, \D z_2\, \D y \,\D s) + G(\phi) \right\}.
$$
It is sufficient to prove it along any subsequence such that
$- \frac{1}{h^2(\eps)} \log \EE \left[\E^{ - h^2(\eps) G(R^\eps) } \right]$
converges, which exists due to the uniform bound on the test function $G$. In addition, by Lemma~\ref{L:uBound}, we may assume that
\begin{equation*}
  \sup_{\eps > 0} \EE\int_0^1 \lvert u^\eps(s) \rvert^2 \D s \le N,
\end{equation*}
for some constant $N$. For such controls, we construct the family of occupation measures $P^{\eps, \Delta}$
from~\eqref{E:occupation}, and per Proposition~\ref{P:tightness} the family $\{(R^{\eps, u^\eps}, P^{\eps, \Delta}), \eps >0 \}$ is tight.
This indicates that for any subsequence, there is a further subsequence for which
$  (R^{\eps, u^\eps}, P^{\eps, \Delta})$ converges to $(\overline{R}, \overline{P})$
in distribution, with $(\overline{R}, \overline{P}) \in \mathcal{V}(\theta, \Ll_{x})$ being a viable pair
per Definition~\ref{D:ViablePair}. Then using Fatou's lemma, we conclude the proof of the lower bound
\begin{align*}
\liminf_{\eps \downarrow 0} - \frac{\log \EE \left[ \exp \left\{ - h^2(\eps) G(R^\eps) \right\} \right]}{h^2(\eps)}
    &\ge \liminf_{\eps \downarrow 0} \left\{ \EE\left( \frac{1}{2} \int_0^T \lvert u^\eps(s)\rvert^2 \,\D s + G(R^{\eps,u^\eps}) \right) - \eps \right\} \notag \\
    &\ge \liminf_{\eps \downarrow 0}\EE\left( \frac{1}{2} \int_0^T \frac{1}{\Delta} \int_t^{t+\Delta} \lvert u^\eps(s)\rvert^2 \D s \D t + G(R^{\eps,u^\eps}) \right)\\
    &= \liminf_{\eps \downarrow 0}\EE\left( \frac{1}{2} \int_{\Zz^2 \times \Yy \times [0,T]} \left[ \lvert z_1\rvert^2 +  \lvert z_2\rvert^2 \right] P^{\eps, \Delta} (\D z_1\,\D z_2\,\D y\,\D t) + G(R^{\eps,u^\eps}) \right)\\
    &\ge \EE\left( \frac{1}{2} \int_{\Zz^2\times \Yy \times [0,T]} \left[ \lvert z_1\rvert^2 +  \lvert z_2\rvert^2 \right] \overline{P} (\D z_1\,\D z_2\,\D y\,\D t) + G(\overline{R}) \right) \notag \\
    &\ge \inf_{(\xi, P) \in \mathcal{V}(\theta, \Ll_{x})} \left\{ \frac{1}{2} \int_{\Zz^2\times \Yy \times [0,T]} \left[ \lvert z_1\rvert^2 +  \lvert z_2\rvert^2 \right] P(\D z_1\,\D z_2\,\D y\,\D t) + G(\phi) \right\}. \notag
\end{align*}
Next we want to prove compactness of level sets. Namely, we want to show that for each $s < \infty$, the set
$\Phi_s := \{ \xi \in \mathcal{C}([0,T]; \RR) : I(\phi) \le s \}$
is a compact subset of $\mathcal{C}([0,T]; \RR)$,
which amounts to showing that it is  precompact and closed.
The proof is analogous to the proof of the lower bound using Fatou's lemma on the form of~$I(\cdot)$
as given by~\eqref{Def:RateFunctionControlForm} and thus the details are omitted; see also~\cite{DS12} for further details.

It remains to show the upper bound. From~\cite{MS} we can write
$I(\phi) = \int_0^T L^{o}(\overline{X}_s, \phi_s, \dot{\phi}_s) \D s$, where
\begin{equation*}
\begin{array}{rl}
  L^o(x, \eta, \beta) &:= \displaystyle \inf_{(v, \mu) \in \mathcal{A}_{x,\eta,\beta}^o} \frac{1}{2} \int_\mathcal{Y} \lvert v(y) \rvert^2 \mu_{x}(\D y),\\
  \mathcal{A}_{x, \eta,\beta}^o &:=
  \displaystyle \left\{
v = (v_1, v_2): \Yy\to\RR^{2},
  \mu \in \mathcal{P}(\Yy), \text{ such that}
  \begin{array}{rl}
  & \displaystyle \int_{\Yy} \Ll^{Y}_{x} F(y) \mu_{x}(\D y) = 0, \text{ for all } F \in \Cc^2(\Yy),\\
  & \displaystyle \int_{\Yy} \left[\lvert v(y)\rvert^2 +|y|^{2}\right]\mu_{x}(\D y) < \infty, \\
  & \displaystyle \beta = \int_{\Yy} \theta(x, y, v_1(y), v_2(y)) \mu_{x}(\D y).
  \end{array}
  \right\}
  \end{array}
\end{equation*}
Then, applying~\cite[Theorem 5.1]{MS} we get that if $\overline{Q}(x)$ defined in~\eqref{Eq:BarQDef}  is non-degenerate everywhere then
\begin{equation}
  L^o(x, \eta, \beta) = \frac{\beta^{2}}{2\overline{Q}(x)}, \label{Eq:EquivalenceInRep}
\end{equation}
and the infimum is attained at
$$
  u_1(y) =  -\frac{\rho\beta g(x,y)u_{y}(x,y)}{\gamma\overline{Q}(x)}
  \qquad\text{and}\qquad
  u_2(y) = -\frac{\rrho\beta g(x,y)u_{y}(x,y)}{\gamma\overline{Q}(x)}.
$$
This representation essentially gives the equivalence between Theorems~\ref{T:IntegratedFunctionsl} and~\ref{T:control}. Now, we have all the tools needed to prove the Laplace principle upper bound.
We need to show that for all bounded, continuous functions~$G$ mapping $\mathcal{C}([0,T]; \RR)$ into~$\RR$,
\begin{equation*}
  \limsup_{\eps \downarrow 0} - \frac{1}{h^2(\eps)} \log \mathbb{E} \left[ \exp \left\{ -h^2(\eps) G(R^\eps) \right\} \right] \le \inf_{\phi \in \mathcal{C}([0,T]; \RR)} [I(\phi) + G(\phi)].
\end{equation*}
Let $\zeta > 0$ be given and consider $\psi \in \mathcal{C}([0,T];\RR)$ with $\psi_0 = 0$ such that
\begin{equation*}
  I(\psi) + G(\psi) \le \inf_{\phi \in \mathcal{C}([0,T]: \RR)} [I(\phi) + G(\phi)] + \zeta < \infty.
\end{equation*}
Since~$G$ is bounded, this implies that~$I(\psi)$ is finite, and thus $\psi$ is absolutely continuous. Because $L^o(x, \eta, \beta)$ is continuous and finite at each $(x, \eta, \beta) \in \RR^{3}$, a mollification argument allows us to assume that $\dot{\psi}$ is piecewise continuous, as in~\cite[Section 6.5]{DE97}.
Given this~$\psi$ define
$$
  \overline{u}_1(t, x,  y) := -\rho \frac{g(x,y)u_{y}(x,y) \dot{\psi}_t  }{\gamma \overline{Q}(x)}
  \qquad\text{and}\qquad
  \overline{u}_2(t, x,  y) := -\rrho \frac{g(x,y)u_{y}(x,y)\dot{\psi}_t }{\gamma \overline{Q}(x)}.
$$
 Define a control in feedback form by
\begin{equation*}
  \overline{u}^\eps(t) :=  \left( \overline{u}_1(t, \overline{X}_t,  Y_t^\eps),  \overline{u}_2(t, \overline{X}_t,  Y_t^\eps)\right).
\end{equation*}
Then~$R^{\eps,\overline{u}^\eps}$ converges to~$\overline{R}$ in distribution, where
\begin{align*}
  \overline{R}_t & = \int_0^t \left[ \int_\RR -\left[  \gamma^{-1}\rho g(\overline{X}_s, y)u_{y}(\overline{X}_s, y) \overline{u}_1(s, \overline{X}_s,y)  + \gamma^{-1}\rrho g(\overline{X}_s, y)u_{y}(\overline{X}_s, y)\overline{u}_2(s, \overline{X}_s,y) \right] \mu_{ \overline{X}_s}(\D y) \right] \D s \notag \\
  &= \int_0^t \left[ \int_\RR \left[ \gamma^{-2} |g(\overline{X}_s, y)u_{y}(\overline{X}_s, y)|^{2} \right] \mu_{\overline{X}_s}(\D y) \right] \overline{Q}^{-1}(\overline{X}_s)\dot{\psi}_s  \D s \notag \\
  &=  \int_0^t \overline{Q}(\overline{X}_s) \overline{Q}^{-1}(\overline{X}_s)\dot{\psi}_s \D s
  = \int_0^t \dot{\psi}_s \D s = \psi_t, \notag
\end{align*}
with probability one.
If $\overline{Q}(x)$ becomes zero at a countable number of points, we define $\overline{Q}^{-1}(x):=1/\overline{Q}(x)$ if $\overline{Q}(x)\neq 0$ and $0$ otherwise (recall Assumption \ref{C:Positiveness}).
At the same time, the cost satisfies
\begin{equation*}
\lim_{\eps \downarrow 0}\mathbb{E} \left( \frac{1}{2} \int_0^T \lvert \overline{u}_s^\eps \rvert^2 \D s - \frac{1}{2} \int_0^T \int_\RR \lvert \overline{u}(s, \overline{X}_s, y) \rvert^2 \mu_{\overline{X}_s}(\D y) \D s \right)^2 = 0.
\end{equation*}
In addition, the relation~\eqref{Eq:EquivalenceInRep} implies that
\begin{equation*}
\EE\left(\frac{1}{2} \int_0^T \int_\mathcal{Y} \lvert \overline{u}(s, \overline{X}_s,  y) \rvert^2 \mu_{ \overline{X}_s}(\D y) \D s\right)
= \EE\left(I(\overline{R})\right) = I(\psi).
\end{equation*}

Then, we can finally write
\begin{align*}
 \limsup_{\eps \downarrow 0}  - \frac{\log \EE\left[ \exp \left\{ -h^2(\eps) G(R^\eps) \right\} \right]}{h^2(\eps)}
   & = \limsup_{\eps \downarrow 0} \inf_{u \in \mathcal{A}} \mathbb{E} \left[ \frac{1}{2} \int_0^T \lvert u(t) \rvert^2 \D t + G(R^{\eps, u}) \right] \\
  &\le \limsup_{\eps \downarrow 0}  \mathbb{E} \left[ \frac{1}{2} \int_0^T \lvert \overline{u}^\eps(t) \rvert^2 \D t + G(R^{\eps, \overline{u}^\eps}) \right] \\
  & = \mathbb{E} \left[ \frac{1}{2} \int_0^T \int_\RR\lvert \overline{u}(s, \overline{X}_s,  y) \rvert^2 \mu_{\overline{X}_s}(\D y) \D s + G(\overline{R}) \right] \\
  & = [I(\psi) + G(\psi)]
 \le \inf_{\phi \in \mathcal{C}([0,1]; \RR)} [I(\phi) + G(\phi)] + \zeta.
\end{align*}
Since $\zeta >0$ is arbitrary, the upper bound is proved. The proof also shows that we have the explicit representation given by Theorem~\ref{T:IntegratedFunctionsl}.

\appendix

\section{Regularity results on Poisson equations}\label{S:Regularity}

The following theorem collects results from~\cite{PV01} and~\cite{PV03} that are used in this paper.
\begin{theorem}\label{T:regularity}
Under Assumption~\ref{C:ergodic1}, set $F(x, y)=H(x,y)-\overline{H}(x)$.
If there exist $K, q_{F}>0$ such that
\begin{equation*}
   \lvert F(x, y) \rvert + \lVert \partial_x F(x, y) \rVert + \lVert \partial_{xx} F(x, y) \rVert \le K (1 + \lvert y \rvert^{q_{F}}),
\end{equation*}
then there is a unique solution from the class of functions which grow at most polynomially in $\lvert y \rvert$  to
\begin{equation*}
    \Ll_{Y} u(x, y) = - F(x, y),
    \qquad \text{with}\qquad
    \int_\mathcal{Y} u(x, y) \mu(\D y) = 0.
\end{equation*}
Moreover, the solution satisfies $u(\cdot, y) \in \mathcal{C}^2$ for every $y \in \RR$, $\partial_{xx} u \in \mathcal{C}$, and there exists $K'>0$ such that
\begin{align*}
\lvert u(x, y) \rvert +\lVert \partial_y u(x, y) \rVert +\lVert \partial_x u(x, y) \rVert+\lVert \partial_{xx} u(x, y) \rVert&\le K' (1 + \lvert y \rvert )^{q_{F}},
\end{align*}
\end{theorem}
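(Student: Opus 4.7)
The plan is to reduce the statement to direct applications of the results in~\cite{PV01,PV03}, as is already signalled by the sentence preceding the theorem. The argument has two stages: first, establish existence, uniqueness and $y$-regularity pointwise in the parameter~$x$; second, promote this to joint smoothness in $(x,y)$ together with the polynomial growth estimates.

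For the first stage, fix $x\in\RR$ and view the Poisson equation as a one-dimensional equation in~$y$ alone. Under Assumption~\ref{C:ergodic1}, the drift~$f(x,\cdot)$ is dissipative ($\sup_x f(x,y)y \leq -\kappa|y|^2$ for $|y|$ large), and~$g$ is either uniformly elliptic or of CIR type with $q_g\in[1/2,1)$. These conditions guarantee exponential ergodicity of the fast process with invariant measure~$\mu_x$ via the Lyapunov function $V(y)=1+|y|^2$, and the centering $\int F(x,y)\mu_x(\D y)=0$ holds by the very definition $F=H-\overline{H}$. Applying Theorem~1 of~\cite{PV01} (or Lemma~\ref{T:regularityCIR} for the CIR case) then yields, for each fixed~$x$, a unique classical solution $u(x,\cdot)\in\mathcal{C}^2(\RR)$ satisfying $|u(x,y)|+|\partial_y u(x,y)|\leq K'(1+|y|^{q_F})$.

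For the second stage, one differentiates the Poisson equation formally in~$x$ to obtain
\begin{equation*}
\mathcal{L}^Y_x(\partial_x u)(x,y) = -\partial_x F(x,y)-\partial_x f(x,y)\,\partial_y u(x,y)-\tfrac{1}{2}\partial_x(g^2)(x,y)\,\partial_{yy} u(x,y),
\end{equation*}
and a further differentiation yields the analogous equation for $\partial_{xx}u$. Under the smoothness assumptions in C:ergodic1(i) combined with stage one, the right-hand sides still grow at most polynomially of order $q_F$ in~$|y|$. Then Theorem~2 of~\cite{PV03}, which is tailored for smoothness of Poisson solutions with respect to a parameter, promotes these formal computations into rigorous statements and directly delivers the polynomial bounds on $\partial_x u$ and $\partial_{xx}u$ announced in the theorem.

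The main obstacle is the rigour of the parameter differentiation in stage two: one cannot blindly differentiate under the integral sign in $\int u(x,y)\mu_x(\D y)=0$, because the invariant measure~$\mu_x$ itself depends on~$x$, and the contribution of $\partial_x\mu_x$ must be tracked and re-absorbed into a renormalised forcing so that the auxiliary equations still meet the centering condition required for applicability of~\cite{PV01}. This is precisely the technical content of~\cite{PV03} and is where the bulk of the work lies. A secondary subtlety is the degenerate case $g(y)=\xi y^{q_g}$, for which~\cite{PV01,PV03} do not apply verbatim at $y=0$ and must be replaced by the tailored analysis in Lemma~\ref{T:regularityCIR}.
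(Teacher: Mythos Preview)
Your proposal is correct and matches the paper's approach: the paper does not give an independent proof of this theorem but simply states that it ``collects results from~\cite{PV01} and~\cite{PV03},'' and your two-stage outline (pointwise-in-$x$ existence via~\cite{PV01}, then parameter regularity via~\cite{PV03}, with the CIR degenerate case handled separately by Lemma~\ref{T:regularityCIR}) is precisely how one unpacks that citation. Your discussion of the centering subtlety when differentiating through~$\mu_x$ is an accurate description of the technical content of~\cite{PV03}.
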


While Theorem~\ref{T:regularity} addresses regularity and growth properties of solutions to general Poisson equations, Lemma~\ref{T:regularityCIR} specialises the discussion on Poisson equations whose operators correspond
to the so-called CEV model---widely used in mathematical finance---and where the right-hand side
$H(x,y)\equiv H(y)$ is only a function of~$y$.
In this case, we have more precise information on the growth of the solution and its derivatives.
\begin{lemma}\label{T:regularityCIR}
Consider the Poisson equation~\eqref{Eq:uPoisson} with the operator given by~\eqref{Eq:CIR}
and $H(x,\cdot) = H(\cdot)\in  \mathcal{C}^{2, \alpha}$
such that for some $K>0$ and $q_{H}\geq 1$,
$\lvert H(y) \rvert  \le K (1 + \lvert y \rvert^{q_{H}})$ holds,
then there is a unique solution from the class of functions which grow at most polynomially in $\lvert y \rvert$  to
\begin{equation*}
    \Ll_{Y} u(y) = H(y)-\overline{H}, \qquad \text{with}\qquad \int_\mathcal{Y} u( y) \mu(\D y) = 0.
\end{equation*}
Moreover, the solution satisfies $u\in \mathcal{C}^2$, and there exists a positive constant $K'$ such that
\begin{align*}
\lvert u(y) \rvert &\le K' (1 + \lvert y \rvert ^{q_{H}}),
\qquad\text{and} \qquad
\lvert \nabla_y u(y) \rvert\le K' (1 + \lvert y \rvert ^{q_{H}-1}).
\end{align*}
If $q_{H}=0$, then
$\lvert u(y) \rvert \le K' (1 + \log(1+\lvert y \rvert))$.
\end{lemma}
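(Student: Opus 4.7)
The plan is to construct $u$ explicitly by one-dimensional quadrature and then to extract the claimed bounds from precise asymptotic information on the invariant density of~$\Ll_{Y}$.

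First I would write down the invariant density of the diffusion with generator $\Ll_{Y}$ from~\eqref{Eq:CIR}. A standard one-dimensional computation gives
$$
m(y) \;=\; \frac{C}{\xi^{2} y^{2q_{g}}}\exp\!\left(\frac{2\kappa}{\xi^{2}}\int_{1}^{y}\frac{\theta - s}{s^{2q_{g}}}\D s\right),\qquad y>0,
$$
whose normalizing constant is finite because $q_{g}\in[1/2,1)$. The dominant term in the exponent as $y\uparrow\infty$ is $-\frac{\kappa}{\xi^{2}(1-q_{g})}y^{2-2q_{g}}$, yielding stretched-exponential decay (reducing to the Gamma density~\eqref{eq:GammaDensity} when $q_{g}=1/2$); in particular $\mu$ has finite polynomial moments of all orders, so $\overline{H}$ is well-defined and $F := H-\overline{H}$ belongs to $L^{1}(\mu)$ with $\int F \D\mu=0$.

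Second, since the Poisson equation $\Ll_{Y} u = F$ is a first-order ODE in $u'$, I would integrate it once and use the centering condition $\int F\D\mu = 0$ to obtain the explicit representation
$$
u'(y) \;=\; -\,\frac{2}{\xi^{2} y^{2q_{g}} m(y)}\int_{y}^{\infty} F(z)\,m(z)\,\D z,
$$
and then $u(y) = \int_{y_{0}}^{y} u'(s)\D s + C$, with $C$ chosen so that $\int u\,\D\mu=0$. Uniqueness within the class of polynomially-growing functions follows by standard ergodic arguments: any polynomially bounded solution of $\Ll_{Y} v = 0$ is a local martingale whose time averages converge $\mu$-almost surely, forcing it to be constant.

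Third, I derive the bounds. From $|m'(z)/m(z)|\sim \frac{2\kappa}{\xi^{2}}z^{1-2q_{g}}$ as $z\uparrow\infty$, one integration by parts (Laplace-type estimate) yields
$$
\int_{y}^{\infty} z^{q_{H}} m(z)\D z \;\lesssim\; y^{q_{H}+2q_{g}-1}\, m(y),\qquad y\text{ large},
$$
hence $|u'(y)|\lesssim y^{q_{H}-1}$ for large $y$. Integrating gives $|u(y)|\lesssim y^{q_{H}}$ when $q_{H}\geq 1$, and $|u(y)|\lesssim \log(1+y)$ when $q_{H}=0$. Near $y=0$, the factor $y^{-2q_{g}}$ in the representation of $u'$ must be handled by using the second form $u'(y)=\frac{2}{\xi^{2}y^{2q_{g}}m(y)}\int_{0}^{y}F(z)m(z)\D z$; the power behavior of $m$ at zero (governed by the balance between $\kappa\theta$ and $\xi^{2}$) makes $F(z)m(z)$ integrable and compensates the singular factor, so $u'$ remains locally integrable.

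The main obstacle will be obtaining the sharp $|y|^{q_{H}-1}$ bound on $|\nabla_{y} u|$ rather than the naive $|y|^{q_{H}}$ one: this gain of one power comes precisely from the centering $\int F\D\mu=0$, which produces cancellation in the tail integral, and from a careful integration by parts exploiting the stretched-exponential decay rate of~$m$. A secondary subtlety is the borderline case $q_{H}=0$, where the $1/y$ bound on $|u'|$ integrates only logarithmically and forces the extra $\log(1+|y|)$ factor in the conclusion. Once these two estimates are in place, the remaining claims on continuity and existence in $\mathcal{C}^{2}$ follow from standard elliptic regularity applied to the explicit formula for $u$.
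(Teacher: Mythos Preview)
Your proposal is correct and follows essentially the same route as the paper: explicit speed-measure density~$m$, the quadrature representation $u'(y)=-\frac{C}{y^{2q_g}m(y)}\int_y^\infty F(z)m(z)\,\D z$ obtained from the centering condition, and a Laplace-type tail estimate to extract the $|y|^{q_H-1}$ bound on $u'$. The only cosmetic difference is that the paper carries out the tail asymptotic via a change of variables and an appeal to boundary-minimum Laplace asymptotics (Miller, Section~3.3), whereas you obtain the same estimate by a direct integration by parts using $m'/m\sim -\frac{2\kappa}{\xi^2}z^{1-2q_g}$; your version is arguably more self-contained, and your added remarks on the $y\downarrow 0$ behaviour and on uniqueness are points the paper leaves implicit.
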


\begin{proof}
If the generator  $\Ll_{Y}$ corresponds to the CIR model, i.e., when $q_{g}=1/2$,
we refer the reader to~\cite[Lemma 3.2]{FouqueEtall2011}.
We prove it for the more general case $q_{g}\in(1/2,1)$ and $q_{H}\geq 1$
using a more direct argument.
The invariant measure has density given by the speed measure of the diffusion:
\[
m(y)=\frac{1}{\xi^{2}y^{2q_{g}}}\exp\left\{\int_{1}^{y}\frac{2\kappa(\theta-z)}{\xi^{2}z^{2q_{g}}}\D z\right\},
\]
and a quick computation shows that the derivative to the solution of the Poisson PDE satisfies
\[
u^{\prime}(y)=\frac{1}{\xi^{2}y^{2q_{g}}m(y)}\int_{0}^{y}\left(H(z)-\overline{H}\right)m(z)\D z
 = -\frac{1}{\xi^{2}y^{2q_{g}}m(y)}\int_{y}^{\infty}\left(H(z)-\overline{H}\right)m(z)\D z,
\]
where we used the centering condition to obtain the last equality. Now, without loss of generality we assume that $\xi=\theta=\kappa=1$.
For some positive constant~$C$ that may change from line to line, we can write
\begin{align*}
|u^{\prime}(y)|
 \leq \frac{1}{y^{2q_{g}}m(y)}\int_{y}^{\infty}z^{q_{H}}m(z)\D z
&  \leq C\E^{\frac{y^{2-2q_{g}}}{1-q_{g}}}\int_{y}^{\infty} z^{q_{H}-2q_{g}} \E^{-\frac{z^{2-2q_{g}}}{1-q_{g}}}\D z\\
&  \leq C\E^{R(y)}y^{1+q_{H}-2q_{g}}
\int_{1}^{\infty} g(x) \E^{-y^{2(1-q_{g})}R(x)}\D x,
\end{align*}
with $g(x) := x^{q_{H}-2q_{g}}$ and $R(x):=\frac{x^{2(1-q_{g})}}{1-q_{g}}$.
Since $1-q_g>0$, the function~$R(\cdot)$ is smooth and strictly increasing on $[1,\infty]$.
Furthermore, since $q_H-2q_g \in (0,1)$, the function~$g(\cdot)$ is smooth on $[1,\infty]$.
As~$y$ becomes large, this integral is exactly of Laplace type, albeit without a saddlepoint for~$R(\cdot)$,
its minimum being attained at the left boundary of the domain.
Using~\cite[Section 3.3]{Miller}, we can write, for large~$y$,
$$
\int_{1}^{\infty} g(u) \E^{-y^{2(1-q_{g})}R(u)}\D u
 \sim \frac{g(1)\E^{-y^{2(1-q_g)R(1)}}}{y^{2(1-q_g)}R'(1)}
  = \frac{1}{2}y^{2(q_g-1)}\E^{-y^{2(1-q_g)R(1)}},
$$
and therefore, as~$y$ tends to infinity,
$|u^{\prime}(y)| = \mathcal{O}\left(y^{q_{H}-1}\right)$, and the lemma follows.
\end{proof}
\begin{remark}
It is clear from the proof of Lemma \ref{T:regularityCIR} that its statement is also true in the case of $q_{g}=0$ with $f$ (the drift component in the operator $ \Ll_{Y}$) growing at most linearly in $y$.
\end{remark}
\section{Preliminary results on the involved controlled processes}\label{S:PrelResControlledProcesses}
In this section we recall some results from~\cite{MS} and we prove some additional corollaries related to certain bounds involving the controlled processes~\eqref{Eq:ControlledProcesses} appearing via the weak convergence control representation.

\begin{lemma}[Lemma B.1 of~\cite{MS}] \label{L:uBound}
Under Assumptions~\ref{C:ergodic1} and~\ref{C:StrongSolution1},
let $(X_t^{\eps, u^\eps}, Y_t^{\eps, u^\eps})$ be the strong solution to~\eqref{Eq:ControlledProcesses}. Then the infimum of the representation in~\eqref{Eq:ControlRepresentation} can be taken over all controls such that
\begin{equation*}
    \int_0^T  \lvert u^\eps(s)  \rvert^2 \D s < N, \text{ almost surely},
\end{equation*}
where the constant $N$ does not depend on $\eps$ or $\delta$.
\end{lemma}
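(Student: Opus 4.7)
The plan is to exploit the boundedness of the test function $G$ in the variational representation~\eqref{Eq:ControlRepresentation}. Set $M := \|G\|_\infty < \infty$. Inserting the trivial control $u^\eps\equiv 0$ yields the upper bound
\be
\inf_{u^\eps \in \mathcal{A}} \EE\left[\frac{1}{2}\int_0^T |u^\eps(s)|^2 \D s + G(R^{\eps,u^\eps})\right] \le \EE\, G(R^\eps) \le M,
\ee
so that any $\eta$-near-optimal control $u^\eps$ (achieving a cost within $\eta$ of the infimum) must satisfy the a priori estimate $\EE\int_0^T |u^\eps(s)|^2 \D s \le 4M + 2\eta =: C$, with $C$ depending only on $\|G\|_\infty$, hence independent of $\eps$ and $\delta$.

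To upgrade this $L^2$-bound in expectation to the almost sure bound required by the lemma, I will apply a standard stopping-time truncation. Given such a near-optimal $u^\eps$, set
\be
\tau_N^\eps := \inf\left\{t \in [0,T]: \int_0^t |u^\eps(s)|^2 \D s \ge N\right\}\wedge T, \qquad \tilde{u}^\eps(s) := u^\eps(s)\,\ind_{[0,\tau_N^\eps)}(s).
\ee
Then $\tilde{u}^\eps \in \mathcal{A}$ and $\int_0^T |\tilde{u}^\eps(s)|^2\D s \le N$ almost surely by construction. Strong pathwise uniqueness for the controlled SDE~\eqref{Eq:ControlledProcesses} (Assumption~\ref{C:StrongSolution1}) guarantees that $R^{\eps,u^\eps}$ and $R^{\eps,\tilde{u}^\eps}$ coincide on $[0,\tau_N^\eps]$, and therefore on all of $[0,T]$ on the event $\{\tau_N^\eps = T\}$.

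Comparing the two costs and using $|G|\le M$ together with the pointwise domination $|\tilde{u}^\eps|^2 \le |u^\eps|^2$ yields
\be
\EE\left[\frac{1}{2}\int_0^T |\tilde{u}^\eps|^2 \D s + G(R^{\eps,\tilde{u}^\eps})\right]
\le \EE\left[\frac{1}{2}\int_0^T |u^\eps|^2 \D s + G(R^{\eps,u^\eps})\right] + 2M\,\PP(\tau_N^\eps < T).
\ee
By Markov's inequality, $\PP(\tau_N^\eps < T) \le C/N$, which is made arbitrarily small by choosing $N$ large, \emph{uniformly} in $\eps$ and $\delta$. Restricting the infimum in~\eqref{Eq:ControlRepresentation} to controls obeying $\int_0^T |u^\eps|^2 \D s < N$ almost surely therefore changes its value by at most this error, which is absorbed into the error tolerance of the approximately-optimal sequences used in the subsequent Laplace principle proof.

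No genuine difficulty arises; the only subtle ingredient is the pathwise identification $R^{\eps,u^\eps} = R^{\eps,\tilde{u}^\eps}$ up to the stopping time $\tau_N^\eps$, which is a consequence of Assumption~\ref{C:StrongSolution1} and follows the standard Dupuis--Ellis reduction scheme from~\cite{BD98, DE97}.
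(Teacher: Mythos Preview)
Your argument is correct and is precisely the standard Dupuis--Ellis/Bou\'e--Dupuis reduction: bound the infimum above via the zero control, deduce an $L^2$-in-expectation bound for near-optimal controls, then truncate at a stopping time and control the resulting error by Markov's inequality. The paper does not supply its own proof of this lemma---it simply cites Lemma~B.1 of~\cite{MS}---and the argument there follows exactly this scheme, so your proposal matches the intended proof.
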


\begin{lemma}[Lemma B.2 of~\cite{MS}] \label{L:YIntegralgrowth}
Under Assumptions~\ref{C:ergodic1} and~\ref{C:StrongSolution1},
for $N\in\mathbb{N}$, let $u^{\eps} \in \mathcal{A}$  such that
\begin{equation*}
    \sup_{\eps > 0} \int_0^T \lvert u^{\eps}(s) \rvert^2 \D s < N
\end{equation*}
holds almost surely. Then there exist $\eps_{0}>0$ small enough such that
 \begin{equation*}
    \sup_{\eps\in(0,\eps_{0})}\EE\left(\int_{0}^{T} | Y_s^{\eps, u^{\eps}}|^{2} \D s \right)\le K(N,T),
\end{equation*}
for some finite constant $K(N,T)$ that may depend on $(N,T)$, but not on $\eps,\delta$.
\end{lemma}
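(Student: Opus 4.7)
The plan is to apply It\^o's formula to $|Y_t^{\eps,u^\eps}|^2$ and exploit the fact that the mean-reverting drift of~$Y^{\eps,u^\eps}$ carries the large coefficient $\eps/\delta^2$. After the standard localisation of the martingale term and taking expectations, I would work with the identity
\begin{align*}
\EE|Y_t^{\eps,u^\eps}|^2 = y_0^2
 & + 2\tfrac{\eps}{\delta^2}\EE\!\int_0^t Y_s^{\eps,u^\eps}f(X_s^{\eps,u^\eps},Y_s^{\eps,u^\eps})\,\D s
 + \tfrac{\eps}{\delta^2}\EE\!\int_0^t g^2(X_s^{\eps,u^\eps},Y_s^{\eps,u^\eps})\,\D s\\
 & + 2\tfrac{\sqrt{\eps}h(\eps)}{\delta}\EE\!\int_0^t Y_s^{\eps,u^\eps}\,g(X_s^{\eps,u^\eps},Y_s^{\eps,u^\eps})\bigl[\rho u_1^\eps(s) + \rrho u_2^\eps(s)\bigr]\D s.
\end{align*}
The first two ``autonomous'' terms I would dispose of using Assumption~\ref{C:ergodic1}(iii) (which gives $\sup_x yf(x,y)\le -\kappa|y|^2 + K$) and Assumption~\ref{C:ergodic1}(ii),(iv) (which gives $|g(x,y)|^2 \le C(1+|y|^{2q_g})$ with $2q_g < 2$, hence $|g|^2 \le \eta|y|^2 + C_\eta$ for any $\eta>0$), so that both contributions reduce to a coercive $-c\frac{\eps}{\delta^2}|Y_s|^2$ piece plus an $O(\tfrac{\eps}{\delta^2})$ constant.

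The controlled cross-term is the technical heart of the argument. The scaling that makes the estimate work is that, since $\eps/\delta\to\gamma\in(0,\infty)$ and $\sqrt{\eps}h(\eps)\to 0$,
$$
\frac{\sqrt{\eps}h(\eps)/\delta}{\eps/\delta^2} \;=\; \frac{\delta\sqrt{\eps}h(\eps)}{\eps} \;\sim\; \frac{\sqrt{\eps}h(\eps)}{\gamma}\;\longrightarrow\; 0,
$$
so the coercive coefficient $\eps/\delta^2$ strictly dominates the control coefficient $\sqrt{\eps}h(\eps)/\delta$ as $\eps\downarrow 0$. I would then apply a Young-type split $2|Ygu|\le \mu|Y|^2 + \mu^{-1}|g|^2|u|^2$ with $\mu$ of order one, so that the $\tfrac{\sqrt{\eps}h(\eps)}{\delta}\mu|Y|^2$ contribution is absorbed into the coercivity for $\eps$ below some threshold~$\eps_0$, and the residual $\tfrac{\sqrt{\eps}h(\eps)}{\delta\mu}\EE\!\int_0^t|g|^2|u|^2\,\D s$ is handled by combining the sublinear bound on $|g|^2$, the a.s.\ bound $\int_0^T|u^\eps|^2\,\D s\le N$, and the interpolation $|y|^{2q_g}\le \eta|y|^2 + C_\eta$, which reduces it to $C\int_0^t\EE|Y_s|^2\D s$ plus a constant.

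Collecting everything, one obtains for uniform constants $c, C_N > 0$ the Gronwall-type inequality
$$
\EE|Y_t^{\eps,u^\eps}|^2 + c\,\tfrac{\eps}{\delta^2}\EE\!\int_0^t|Y_s^{\eps,u^\eps}|^2\,\D s
 \leq y_0^2 + C_N + C_N\!\int_0^t\EE|Y_s^{\eps,u^\eps}|^2\,\D s,
\qquad t\in[0,T],\;\eps\in(0,\eps_0),
$$
whence Gronwall's lemma delivers $\sup_{t\in[0,T]}\EE|Y_t^{\eps,u^\eps}|^2 \leq K(N,T)$ uniformly in $\eps$, and Fubini yields the stated bound on the time integral. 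The main obstacle is calibrating~$\mu$ so that the control-driven term---where $u^\eps$ is known only in $L^2([0,T])$ and not pointwise---is simultaneously absorbable into the coercivity and into the low-order Gronwall term. The degenerate CIR-type case $g(y)=\xi y^{q_g}$ with $q_g\in[1/2,1)$ causes no additional trouble since $2q_g < 2$ still holds, keeping the interpolation step available.
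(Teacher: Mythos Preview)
The paper does not prove this lemma; it merely quotes Lemma~B.2 of~\cite{MS}, so there is no in-paper argument to compare against. Your It\^o-on-$|Y|^2$ strategy is the natural one, and for bounded~$g$ it can be made to work, but there are two concrete problems with the argument as written.

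First, the displayed Gronwall-type inequality cannot hold with $\eps$-independent constants. The remainder from $yf+\tfrac12 g^2\le -c|y|^2+K'$ contributes $2K'\tfrac{\eps}{\delta^2}t$ on the right, not a uniform~$C_N$; likewise your ``constant'' coming from the control residual carries a factor $\tfrac{\sqrt{\eps}h(\eps)}{\delta}\sim\gamma h(\eps)/\sqrt{\eps}\to\infty$. Hence Gronwall on $\EE|Y_t|^2$ produces an $\eps$-dependent bound, and ``then Fubini'' does not recover uniformity. The correct extraction is to evaluate the It\^o identity at $t=T$, drop $\EE|Y_T|^2\ge 0$, and \emph{divide by $\eps/\delta^2$}: then the $K'$ term becomes $2K'T$, and the control residual (after the Young split with $a=\sqrt{\eps/\delta^2}\,|Y|$, $b=h(\eps)|g||u|$) picks up the small factor $\tfrac{\delta^2 h(\eps)^2}{\eps}=(\tfrac{\delta}{\sqrt{\eps}}h(\eps))^2\to 0$. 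That yields the time-integral bound directly, without Gronwall.

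Second, and more seriously, your reduction of the residual in the CIR case $g(y)=\xi y^{q_g}$, $q_g\in[1/2,1)$, does not go through. After $|g|^2\le C(1+|Y|^{2q_g})$ and the interpolation $|Y|^{2q_g}\le\eta|Y|^2+C_\eta$, you are left with
\[
\EE\!\int_0^t |Y_s|^{2}\,|u_s^\eps|^{2}\,\D s,
\]
which is \emph{not} bounded by $C\!\int_0^t\EE|Y_s|^2\,\D s$ plus a constant: the control $u^\eps$ is only in $L^2([0,T])$ almost surely, not pointwise bounded, so you cannot strip $|u_s|^2$ off $|Y_s|^2$. The obvious fixes fail: H\"older in time pushes you to $|u|^{2/(1-q_g)}$ (uncontrolled), and bounding by $N\,\EE\sup_s|Y_s|^{2q_g}$ requires Lemma~\ref{L:Ygrowth}, which the paper derives \emph{from} the present lemma. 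For bounded~$g$ the residual becomes $\tfrac{\delta^2 h(\eps)^2}{\eps}CN\to 0$ and the argument closes, but in the unbounded case you need an additional device (e.g.\ a moment bound for $\int_0^T|Y_s|^p\,\D s$ with $p>2$ proved simultaneously, or an $\eps$-dependent choice of the Young weight tuned to the sub-quadratic growth of $g$) that is not in your sketch.
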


Lemmas~\ref{L:Mgrowth}-\ref{L:Ygrowth} follow from Lemma~\ref{L:YIntegralgrowth},
but since their proof was not included in~\cite{MS}, we include them here.
\begin{lemma} \label{L:Mgrowth}
Let Assumptions~\ref{C:ergodic1} and~\ref{C:StrongSolution1} be satisfied. For  $N\in\mathbb{N}$, let $u^{\eps} \in \mathcal{A}$  such that
\begin{equation*}
    \sup_{\eps > 0} \int_0^T \lvert u^{\eps}(s) \rvert^2 \D s < N,
\end{equation*}
almost surely. Define
\[
M^{\eps}_{t} := \frac{\sqrt{\eps}}{\delta}\int_{0}^{t}\E^{-\frac{\eps}{\delta^{2}}\kappa(t-s)}g(X_s^{\eps, u^{\eps}},Y_s^{\eps, u^{\eps}})\D Z_{s}.
\]
Then, for $0\leq q_{g}<1$, and for every $\nu\in(0,1-q_{g})$, there is a constant $C=C(T,\nu)<\infty$ that is independent of $\eps,\delta$ such that
 \begin{equation*}
\EE\left(\sup_{t\in[0,T]}  | M^{\eps}_{T}|^{2}\right)
\leq  C \left(\frac{\sqrt{\eps}}{\delta}\right)^{2(1-\nu)}.
\end{equation*}
\end{lemma}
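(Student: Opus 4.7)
The strategy is to exploit the Ornstein--Uhlenbeck structure of $M^{\eps}$ by working at higher moments $L^{2p}$, combined with a dyadic partition of $[0,T]$ at the natural relaxation timescale $1/\lambda$, where $\lambda := \eps\kappa/\delta^{2}$. First factorise
\begin{equation*}
M_{t}^{\eps} = \frac{\sqrt{\eps}}{\delta}\,\E^{-\lambda t}\,\widetilde{M}_{t},
\qquad
\widetilde{M}_{t} := \int_{0}^{t} \E^{\lambda s}\,g(X_{s}^{\eps,u^{\eps}},Y_{s}^{\eps,u^{\eps}})\,\D Z_{s},
\end{equation*}
so that $\widetilde{M}$ is a genuine $L^{2}$-martingale. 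The plan is to bound $\EE\sup_{t\leq T}|M_{t}^{\eps}|^{2p}$ for $p:=1/(1-\nu)>1$ and then recover the claim via Jensen's inequality applied to the concave map $x\mapsto x^{1/p}$.

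I would partition $[0,T]$ into $K:=\lceil\lambda T\rceil$ sub-intervals $[t_{k},t_{k+1}]$ with $t_{k}:=k/\lambda$. On the $k$-th interval, $\E^{-2p\lambda t}\leq \E^{-2pk}$, so
\begin{equation*}
\sup_{t\in[t_{k},t_{k+1}]}|M_{t}^{\eps}|^{2p}
\leq \left(\frac{\sqrt{\eps}}{\delta}\right)^{2p}\E^{-2pk}\,\sup_{t\leq t_{k+1}}|\widetilde{M}_{t}|^{2p}.
\end{equation*}
Burkholder--Davis--Gundy together with Jensen's inequality applied to the probability measure $\E^{2\lambda s}\D s/\int_{0}^{t_{k+1}}\E^{2\lambda s}\D s$ yields
\begin{equation*}
\EE\sup_{t\leq t_{k+1}}|\widetilde{M}_{t}|^{2p}
\leq C_{p}\left(\frac{\E^{2\lambda t_{k+1}}}{2\lambda}\right)^{p}\sup_{s\leq T}\EE\Big[g^{2p}(X_{s}^{\eps,u^{\eps}},Y_{s}^{\eps,u^{\eps}})\Big].
\end{equation*}
Since $\lambda t_{k+1}=k+1$, the exponentials telescope: $\E^{-2pk}\cdot \E^{2p(k+1)}=\E^{2p}$, and the factors $(\sqrt{\eps}/\delta)^{2p}/(2\lambda)^{p}$ collapse to $(2\kappa)^{-p}$. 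Uniformly in $k$ we therefore get
\begin{equation*}
\EE\sup_{t\in[t_{k},t_{k+1}]}|M_{t}^{\eps}|^{2p}
\leq \widetilde{C}_{p}(\kappa)\,\sup_{s\leq T}\EE\Big[g^{2p}(X_{s}^{\eps,u^{\eps}},Y_{s}^{\eps,u^{\eps}})\Big].
\end{equation*}
Summing over the $K\sim\lambda T$ intervals and then applying Jensen's inequality with $1/p=1-\nu$,
\begin{equation*}
\EE\sup_{t\leq T}|M_{t}^{\eps}|^{2}
\leq \Big(\EE\sup_{t\leq T}|M_{t}^{\eps}|^{2p}\Big)^{1/p}
\leq \Big(C_{p}(T,\kappa)\,\lambda\,\sup_{s\leq T}\EE\big[g^{2p}\big]\Big)^{1-\nu}
= C(T,\nu)\left(\frac{\sqrt{\eps}}{\delta}\right)^{2(1-\nu)},
\end{equation*}
which is the stated bound.

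The main obstacle is establishing the pointwise-in-$s$ moment bound $\sup_{s\leq T}\EE|Y_{s}^{\eps,u^{\eps}}|^{2q_{g}/(1-\nu)}<\infty$ required to control $\EE g^{2p}$. Since $g^{2p}\leq K(1+|y|^{2q_{g}/(1-\nu)})$ and the exponent $2q_{g}/(1-\nu)$ is strictly less than $2$ precisely under the assumption $\nu\in(0,1-q_{g})$, Jensen's inequality reduces the task to bounding $\sup_{s\leq T}\EE|Y_{s}^{\eps,u^{\eps}}|^{2}$. Lemma~\ref{L:YIntegralgrowth} delivers only the integrated version, so one must bootstrap by applying It\^o's formula to $|Y^{\eps,u^{\eps}}|^{2}$ and invoking the near-dissipativity condition $L_{\tau}<\kappa$ from Assumption~\ref{C:ergodic1}(iii): Young's inequality absorbs the control-induced drift contribution $(\sqrt{\eps}/\delta)g(\rho u_{1}^{\eps}+\rrho u_{2}^{\eps})$ into the $-\kappa y$ term, at a cost that is controlled through the a priori bound $\int_{0}^{T}|u^{\eps}|^{2}\D s\leq N$ of Lemma~\ref{L:uBound}, and a Gronwall argument then closes the estimate uniformly in $\eps$ and $\delta$. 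The restriction $\nu>0$ strictly is dictated precisely by this margin.
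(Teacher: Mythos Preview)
Your dyadic-in-time decomposition is a natural idea and the algebra with the exponentials is correct, but the argument has a genuine gap at the very point you flag as ``the main obstacle'': the uniform-in-$\eps,\delta$ pointwise moment bound $\sup_{s\le T}\EE\lvert Y_{s}^{\eps,u^{\eps}}\rvert^{2}$ is \emph{not} available, and your proposed It\^o/Young/Gronwall route does not deliver it. To see the obstruction, note that the controlled fast process carries a drift of size $\frac{\sqrt{\eps}h(\eps)}{\delta}g\,\widetilde{u}$ with $\widetilde{u}=\rho u_{1}^{\eps}+\rrho u_{2}^{\eps}$. Even in the toy case $g\equiv 1$, the Duhamel formula gives a control contribution to $Y_{t}^{\eps,u^{\eps}}$ of size $\frac{\sqrt{\eps}h(\eps)}{\delta}\bigl(\int_{0}^{t}\E^{-2\lambda(t-s)}\D s\bigr)^{1/2}\bigl(\int_{0}^{T}\lvert\widetilde{u}\rvert^{2}\D s\bigr)^{1/2}\le h(\eps)\sqrt{N/(2\kappa)}$, so $\sup_{t}\EE\lvert Y_{t}^{\eps,u^{\eps}}\rvert^{2}$ genuinely blows up like $h(\eps)^{2}$. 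Attempting to absorb the cross term $\frac{\sqrt{\eps}h(\eps)}{\delta}\lvert Y_{t}\rvert^{1+q_{g}}\lvert\widetilde{u}_{t}\rvert$ into $-\kappa\lvert Y_{t}\rvert^{2}$ via Young with the correct exponents $(2/(1+q_{g}),2/(1-q_{g}))$ produces a residual term of the form $C\,h(\eps)^{2/(1-q_{g})}\lvert\widetilde{u}_{t}\rvert^{2/(1-q_{g})}$, and you only control $\int_{0}^{T}\lvert\widetilde{u}\rvert^{2}\D s$, not this higher power. Finally, invoking Lemma~\ref{L:Ygrowth} instead would be circular, since its proof uses Lemma~\ref{L:Mgrowth}, and in any case it only yields a bound of order $h(\eps)^{2/(1-q_{g})}$, which after your Jensen step with exponent $1-\nu$ leaves an extraneous factor $h(\eps)^{2q_{g}/(1-q_{g})}$ in the final estimate.

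The paper takes a different route precisely to avoid needing a pointwise-in-time moment: the Da~Prato--Zabczyk factorisation $M_{t}^{\eps}=\eta C_{\alpha}\int_{0}^{t}(t-s)^{\alpha-1}\E^{-\eta^{2}\kappa(t-s)}A_{\alpha}^{\eps}(s)\,\D s$ (with $\eta=\sqrt{\eps}/\delta$) passes the $\sup_{t}$ through H\"older onto an $L^{p}$-in-time norm of $A_{\alpha}^{\eps}$; BDG and then \emph{Young's convolution inequality} (rather than Jensen against a probability measure) bound this by a constant times $\EE\int_{0}^{T}\lvert g\rvert^{p}\D s$, which for $pq_{g}\le 2$ is controlled directly by the \emph{time-integrated} second moment of Lemma~\ref{L:YIntegralgrowth}. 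The exponential decay is traded for a polynomial loss via $\E^{-\lambda t}\le C_{\nu}(\lambda t)^{-\nu}$, which is exactly where the exponent $\nu\in(0,1-q_{g})$ enters. In short: your block-decomposition demands $\sup_{s}\EE g^{2p}$, whereas the factorisation method needs only $\int_{0}^{T}\EE g^{p}\D s$, and it is the latter (and only the latter) that is uniformly bounded under the standing assumptions.
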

\begin{proof}[Proof of Lemma~\ref{L:Mgrowth}]
For notational simplicity, let us set $\eta=\sqrt{\eps}/\delta$. By the factorization argument (see for example page 229 of \cite{DaPrato}), we have that for any $\alpha\in(0,1/2)$, there is a constant $C_{\alpha}$ such that
\[
M^{\eps}_{t}= \eta  C_{\alpha}\int_{0}^{t}(t-s)^{\alpha-1}\E^{-\eta^{2}\kappa(t-s)} A^{\eps}_{\alpha}(s)ds,
\]
where
\[
 A^{\eps}_{\alpha}(s)=\int_{0}^{s}(s-r)^{-\alpha}\E^{-\eta^{2}\kappa(s-r)}g(X_r^{\eps, u^{\eps}},Y_r^{\eps, u^{\eps}})\D Z_{r}.
\]

Then for any $p>1/\alpha>2$ we get
\begin{align}
\sup_{t\in[0,T]}  | M^{\eps}_{T}|^{p}&\leq \eta^{p}C^{p}_{\alpha}\left(\int_{0}^{T}s^{\frac{(\alpha-1)p}{p-1}}ds\right)^{p-1}\int_{0}^{T}| A^{\eps}_{\alpha}(s)|^{p}ds\nonumber\\
&=\eta^{p}C^{p}_{p,\alpha}T^{\alpha p-1}\int_{0}^{T}| A^{\eps}_{\alpha}(s)|^{p}ds.\nonumber
\end{align}

By the Burkholder-Davis-Gundy inequality, we get
\begin{align}
\EE \int_{0}^{T}| A^{\eps}_{\alpha}(s)|^{p}ds&\leq C_{p}\int_{0}^{T}\EE\left(\int_{0}^{s}(s-r)^{-2\alpha}\E^{-2\eta^{2}\kappa(s-r)}\left|g(X_r^{\eps, u^{\eps}},Y_r^{\eps, u^{\eps}})\right|^{2}dr\right)^{p/2}ds.\nonumber
\end{align}

Using the inequality that for any $\nu>0$,
\[
\E^{-\lambda t}\leq \left(\frac{\nu}{\E}\right)^{\nu} t^{-\nu}\lambda^{-\nu}, 
\quad\text{for }\lambda,t>0,
\]
we obtain
\[
\E^{-2\eta^{2}\kappa(s-r)}\leq \left(\frac{\nu}{\E}\right)^{\nu} (s-r)^{-\nu}(2\eta^{2}\kappa)^{-\nu}=K(\nu,\kappa) (s-r)^{-\nu}\eta^{-2\nu}.
\]

Therefore, we get
\begin{align}
\EE \int_{0}^{T}| A^{\eps}_{\alpha}(s)|^{p}ds&\leq C_{p}K(\nu,\kappa)^{p/2}\int_{0}^{T}\EE\left(\int_{0}^{s}(s-r)^{-2\alpha-\nu}\eta^{-2\nu}\left|g(X_r^{\eps, u^{\eps}},Y_r^{\eps, u^{\eps}})\right|^{2}dr\right)^{p/2}ds.\nonumber
\end{align}

Next choosing $2\alpha+\nu<1$ and $p>1/\alpha$ we get by Young's inequality
\begin{align}
\EE \int_{0}^{T}| A^{\eps}_{\alpha}(s)|^{p}ds&\leq C_{p}K(\nu,\kappa)^{p/2} \eta^{-\nu p}\left(\int_{0}^{T}s^{-2\alpha-\nu}ds\right)^{p/2}\EE
\int_{0}^{T}\left|g(X_s^{\eps, u^{\eps}},Y_s^{\eps, u^{\eps}})\right|^{p}ds\nonumber\\
&\leq C_{p}K(\nu,\kappa)^{p/2} \eta^{-\nu p}\left(\frac{1}{1-2\alpha-\nu}T^{1-2\alpha-\nu}\right)^{p/2}\EE
\int_{0}^{T}\left(1+\left|Y_s^{\eps, u^{\eps}}\right|^{pq_{g}}\right)ds\nonumber
\end{align}

Then, by Lemma \ref{L:YIntegralgrowth}, if we choose $pq_{g}\leq 2$ (which is possible since $p>1/\alpha>2$ and $q_{g}<1$) we obtain for some constant $C=C(p,T,\alpha,\nu,\kappa)<\infty$
\begin{align}
\EE \int_{0}^{T}| A^{\eps}_{\alpha}(s)|^{p}ds&\leq C \eta^{-\nu p}. \nonumber
\end{align}

Putting the latter bounds together, we then obtain for some constant~$C$ independent of $\eta=\sqrt{\eps}/\delta$:
$$
\EE\left(\sup_{t\in[0,T]}  | M^{\eps}_{T}|^{p}\right)\leq C \eta^{p(1-\nu)}.
$$

Gathering together all the restrictions on $\alpha,\nu,p, q_{g}$, we have that we need $p\in(2,2/q_{g})$, which due to the relation $p>1/\alpha$ means that $\alpha\in(q_{g}/2,1/2)\subset (0,1/2)$. Then the restriction $2\alpha+\nu<1$ gives that $\nu<1-q_{g}$.

Then, for $p>1/\alpha>2$ we get for some unimportant constant $C<\infty$ that may change from line to line
$$
\EE\left(\sup_{t\in[0,T]}  | M^{\eps}_{T}|^{2}\right)
\leq \left(\EE\left(\sup_{t\in[0,T]}  | M^{\eps}_{T}|^{2}\right)^{p/2}\right)^{2/p}\leq \left(\EE\left(\sup_{t\in[0,T]}  | M^{\eps}_{T}|^{p}\right)\right)^{2/p}
\leq C\eta^{2(1-\nu)}=C\left(\frac{\sqrt{\eps}}{\delta}\right)^{2(1-\nu)},
$$
competing the proof of the lemma.
\end{proof}

\begin{lemma} \label{L:Ygrowth}
Let Assumptions~\ref{C:ergodic1} and~\ref{C:StrongSolution1} be satisfied. For  $N\in\mathbb{N}$, let $u^{\eps} \in \mathcal{A}$  such that
\begin{equation*}
    \sup_{\eps > 0} \int_0^T \lvert u^{\eps}(s) \rvert^2 \D s < N,
\end{equation*}
almost surely.
Then, for $0\leq q_{g}<1$, for any $\nu\in(0,1-q_{g})$ and for $\eps>0$ small enough we have
 \begin{equation*}
\EE\left(\sup_{t\in[0,T]}  | Y_t^{\eps, u^{\eps}}|^{2}\right)
\leq  K(N,T,\nu) \left(1+\left(\frac{\sqrt{\eps}}{\delta}\right)^{2(1-\nu)}+h(\eps)^{\frac{2}{1-q_{g}}}\right),
\end{equation*}
for some finite constant $K(N,T)$ that does not depend on $\eps$ and $\delta$.
\end{lemma}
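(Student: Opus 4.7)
The plan is to exploit the dissipativity of $Y^{\eps, u^\eps}$ via a variation-of-constants representation built on the decomposition $f(x, y) = -\kappa y + \tau(x, y)$ from Assumption~\ref{C:ergodic1}(iii). Setting $\eta := \sqrt{\eps}/\delta$ and applying the integrating factor $\E^{\kappa \eta^2 t}$ to the $Y$-component of~\eqref{Eq:ControlledProcesses} yields the decomposition
\[
Y_t^{\eps, u^\eps} = \E^{-\kappa \eta^2 t} y_0 + \eta^2 \int_0^t \E^{-\kappa \eta^2 (t-s)} \tau(X_s^{\eps, u^\eps}, Y_s^{\eps, u^\eps}) \D s + \eta h(\eps) \int_0^t \E^{-\kappa \eta^2 (t-s)} (g u_\rho)(X_s^{\eps, u^\eps}, Y_s^{\eps, u^\eps}) \D s + M_t^\eps,
\]
where $u_\rho := \rho u_1^\eps + \rrho u_2^\eps$ and $M_t^\eps$ is exactly the stochastic convolution already controlled by Lemma~\ref{L:Mgrowth}. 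The plan is to estimate each of the other three terms and combine via an absorption argument.

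The drift integral is handled by the Lipschitz estimate $|\tau(x, y)| \leq L_\tau |y| + C$; here $\sup_x |\tau(x, 0)| < \infty$ follows from the dissipativity hypothesis, since otherwise Lipschitz interpolation from $\tau(x_n, 0) \to \infty$ would force $y \tau(x_n, y) > 0$ for fixed large $y$, contradicting $\sup_x f(x, y) y \leq -\kappa |y|^2$. Combined with $\eta^2 \int_0^t \E^{-\kappa \eta^2 (t-s)} \D s \leq 1/\kappa$, this produces a self-bound $(L_\tau/\kappa) \sup_{s \leq t}|Y_s^{\eps, u^\eps}| + C/\kappa$. For the control integral, Cauchy-Schwarz will separate the $L^2$-norm of $u^\eps$ (a.s.\ bounded by $\sqrt{N}$) from the weighted $L^2$-norm of $g$; the polynomial growth $|g|^2 \leq K(1+|Y|^{2q_g})$ and $\int_0^t \E^{-2\kappa \eta^2 (t-s)} \D s \leq 1/(2\kappa\eta^2)$ then combine to yield a bound of the form $C h(\eps) (1 + \sup_{s \leq t}|Y_s^{\eps, u^\eps}|^{q_g})$, the prefactors $\eta$ cancelling exactly.

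To close the estimate, I will apply Young's inequality with conjugate exponents $1/q_g$ and $1/(1-q_g)$ to the offending term, giving $h(\eps) \sup|Y|^{q_g} \leq \nu \sup|Y| + C_\nu h(\eps)^{1/(1-q_g)}$ for any $\nu > 0$. Since $L_\tau < \kappa$, one may choose $\nu$ so that $L_\tau/\kappa + \nu < 1$, which allows absorbing $\sup|Y|$ onto the left-hand side and gives
\[
\sup_{s \leq T} |Y_s^{\eps, u^\eps}| \leq K \bigl(1 + h(\eps)^{1/(1-q_g)} + \sup_{s \leq T} |M_s^\eps|\bigr).
\]
Squaring this inequality, taking expectations, and invoking the bound $\EE \sup_{t \leq T} |M_t^\eps|^2 \leq C (\sqrt{\eps}/\delta)^{2(1-\nu')}$ from Lemma~\ref{L:Mgrowth} (for any $\nu' \in (0, 1-q_g)$) then delivers the announced estimate.

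The main obstacle is the tension between the diverging prefactor $h(\eps)$ and the polynomial growth of $g$: the Young's inequality step succeeds only because $q_g < 1$, and the absorption is possible only thanks to the strict inequality $L_\tau < \kappa$. These two conditions are used in essential ways and are precisely what produce the exponents $2(1-\nu')$ and $2/(1-q_g)$ in the conclusion; the case $q_g = 0$ is a mild degeneration in which $g$ is already bounded and Young's inequality is not needed.
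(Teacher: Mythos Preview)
Your proof is correct and shares the paper's essential ingredients: the variation-of-constants representation based on $f=-\kappa y+\tau$, the Cauchy--Schwarz bound on the control term, Lemma~\ref{L:Mgrowth} for the stochastic convolution, and a Young's-inequality absorption exploiting $q_g<1$.

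The organisation differs, however. The paper introduces the auxiliary process $\Lambda^{\eps}_{t}:=Y^{\eps,u^\eps}_{t}-Q^{\eps}_{t}-M^{\eps}_{t}$ (with $Q^{\eps}$ the control integral), derives a differential inequality for $|\Lambda^{\eps}_{t}|^{2}$ using the Lipschitz property of~$\tau$, bounds $\EE\sup_t|Q^{\eps}_{t}|^{2}$ and $\EE\sup_t|M^{\eps}_{t}|^{2}$ separately, and only then reassembles and applies Young's inequality at the level of $\EE\sup_t|Y|^{2q_g}$. You instead bound $\sup_t|Y^{\eps,u^\eps}_{t}|$ directly and pathwise, absorbing via the inequality $L_\tau/\kappa<1$ before squaring and taking expectations. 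Your route is slightly more elementary (no auxiliary process, no energy estimate), and it makes transparent why the estimate holds almost surely up to the stochastic term~$M^{\eps}$; the paper's route, by working at the level of $|\Lambda|^2$, is perhaps more robust if one wanted to push to higher moments or to multidimensional $Y$. Both lead to the same final bound with the same exponents.
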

\begin{proof}[Proof of Lemma~\ref{L:Ygrowth}]
By Assumption~\ref{C:ergodic1}, we can write that
$f(x,y)=-\kappa y+\tau(x,y)$,
and we can assume that~$\tau$ is globally Lipschitz in $y$ with Lipschitz constant $L_{\tau}<\kappa$,
where~$\kappa$ is from Assumption~\ref{C:ergodic1}.
For presentation purposes and without loss of generality,
we only prove the case $\rho=0$, i.e., when the Brownian motions are independent.
It is easy to see that
\begin{align*}
Y_t^{\eps, u^{\eps}}&=\E^{-\frac{\eps \kappa}{\delta^{2}}t}y_{0}+\frac{\eps}{\delta^{2}}\int_{0}^{t}\E^{-\frac{\eps}{\delta^{2}}\kappa(t-s)}\tau(X_s^{\eps, u^{\eps}},Y_s^{\eps, u^{\eps}})\D s+ Q^{\eps}_{t}+M^{\eps}_{t},
\end{align*}
where
$$
Q^{\eps}_{t} := \frac{\sqrt{\eps}h(\eps)}{\delta}\int_{0}^{t} \E^{-\frac{\eps}{\delta^{2}}\kappa(t-s)} g(X_s^{\eps, u^{\eps}},Y_s^{\eps, u^{\eps}}) u^{\eps}_{s}\D s
\qquad\text{and}\qquad
M^{\eps}_{t} := \frac{\sqrt{\eps}}{\delta}\int_{0}^{t}\E^{-\frac{\eps}{\delta^{2}}\kappa(t-s)}g(X_s^{\eps, u^{\eps}},Y_s^{\eps, u^{\eps}})\D Z_{s}.
$$
Let us also set $\Lambda^{\eps}_{t} := Y_t^{\eps, u^{\eps}}-Q^{\eps}_{t}-M^{\eps}_{t}$, so that
$$
d\Lambda^{\eps}_{t}=-\frac{\eps}{\delta^{2}}\kappa \Lambda^{\eps}_{t}dt+\frac{\eps}{\delta^{2}} \tau(X_t^{\eps, u^{\eps}},Y_t^{\eps, u^{\eps}})\D t.
$$
Hence, we have
\begin{align}
\frac{1}{2}\frac{\D}{\D t}|\Lambda^{\eps}_{t}|^{2}&= \D\Lambda^{\eps}_{t} \cdot \Lambda^{\eps}_{t}
 = -\frac{\eps}{\delta^{2}}\kappa |\Lambda^{\eps}_{t}|^{2}+\frac{\eps}{\delta^{2}} \tau(X_t^{\eps, u^{\eps}},Y_t^{\eps, u^{\eps}})\Lambda^{\eps}_{t}\nonumber\\
&\leq -\frac{\eps}{\delta^{2}}\kappa |\Lambda^{\eps}_{t}|^{2}+\frac{\eps}{\delta^{2}} \left(\tau(X_t^{\eps, u^{\eps}},\Lambda^{\eps}_{t}+Q^{\eps}_{t}+M^{\eps}_{t})-\tau(X_t^{\eps, u^{\eps}},Q^{\eps}_{t}+M^{\eps}_{t})\right)\Lambda^{\eps}_{t}+\frac{\eps}{\delta^{2}} \tau(X_t^{\eps, u^{\eps}},Q^{\eps}_{t}+M^{\eps}_{t})\Lambda^{\eps}_{t}\nonumber\\
&\leq -\frac{\eps}{2\delta^{2}}(\kappa-L_{\tau}) |\Lambda^{\eps}_{t}|^{2}+K \frac{\eps}{\delta^{2}} \left(1+ |Q^{\eps}_{t}|^{2}+|M^{\eps}_{t}|^2\right),\nonumber
\end{align}
for some unimportant positive constant~$K$.
In the last line we used the Lipschitz and growth properties of~$\tau$ together with Young's inequality.
Then we obtain that for some finite constant $K>0$,
\begin{align*}
|\Lambda^{\eps}_{t}|^{2}&\leq \E^{-\frac{\eps}{\delta^{2}}(\kappa-L_{\tau})t}|y_{0}|^{2}+K \frac{\eps}{\delta^{2}}\int_{0}^{t}\E^{-\frac{\eps}{\delta^{2}}(\kappa-L_{\tau})(t-s)}\left(1+ |Q^{\eps}_{s}|^{2}+|M^{\eps}_{s}|^2\right)\D s.
\end{align*}

We now bound each term separately.  We have for some constant $K$ that may change from line to line
\begin{align*}
\EE\left(\sup_{t\in[0,T]} |Q^{\eps}_{t}|^{2}\right)
 & \leq \frac{\eps h^{2}(\eps)}{\delta^{2}}\EE\sup_{t\in[0,T]}\left|\int_{0}^{t} \E^{-\frac{\eps}{\delta^{2}}\kappa(t-s)} g(X_s^{\eps, u^{\eps}},Y_s^{\eps, u^{\eps}}) u^{\eps}_{s}\D s\right|^{2}\\
&\leq \frac{\eps h^{2}(\eps)}{\delta^{2}}\EE\sup_{t\in[0,T]}\int_{0}^{t} \E^{-2\frac{\eps}{\delta^{2}}\kappa(t-s)} \left|g(X_s^{\eps, u^{\eps}},Y_s^{\eps, u^{\eps}})\right|^{2} \D s \int_{0}^{T}|u^{\eps}_{s}|^{2}\D s\\
&\leq \frac{\eps h^{2}(\eps)}{\delta^{2}} N K\EE \sup_{t\in[0,T]}\int_{0}^{t} \E^{-2\frac{\eps}{\delta^{2}}\kappa(t-s)} \left(1+\left|Y_s^{\eps, u^{\eps}}\right|^{2q_{g}}\right) \D s\\
&\leq \frac{\eps h^{2}(\eps)}{\delta^{2}} N K \sup_{t\in[0,T]}\int_{0}^{t} \E^{-2\frac{\eps}{\delta^{2}}\kappa(t-s)} \D s \left(1+\mathbb{E}\sup_{t\in[0,T]}\left|Y_t^{\eps, u^{\eps}}\right|^{2q_{g}}\right)\\
&\leq K(N,T,\kappa) h^{2}(\eps)\left(1+ \mathbb{E}\sup_{t\in[0,T]}\left|Y_t^{\eps, u^{\eps}}\right|^{2q_{g}}\right).
\end{align*}

Using now Lemma \ref{L:Mgrowth}, we have that for any $\nu\in(0,1-q_{g})$,
$$
\EE\left(\sup_{t\in[0,T]} |M^{\eps}_{t}|^{2}\right)
 \leq C \left(\frac{\sqrt{\eps}}{\delta}\right)^{2(1-\nu)}.
$$
Consequently, we obtain that for some appropriate constant $K<\infty$,
\begin{align*}
\EE\left(\sup_{t\in[0,T]} |\Lambda^{\eps}_{t}|^{2}\right)
 & \leq K\left\{1 + h^{2}(\eps)+
 \left(\frac{\sqrt{\eps}}{\delta}\right)^{2(1-\nu)}
 + h^{2}(\eps) \EE\left(\sup_{t\in[0,T]}\left|Y_t^{\eps, u^{\eps}}\right|^{2q_{g}} \right)\right\}
\end{align*}

Hence, recalling the original definition of $\Lambda^{\eps}_{t}=Y_t^{\eps, u^{\eps}}-Q^{\eps}_{t}-M^{\eps}_{t}$, we have for $\eps>0$ small enough such that $h(\eps)>1$ and for some constant $K<\infty$ that changes from line to line,
\begin{align*}
\EE\left(\sup_{t\in[0,T]}  | Y_t^{\eps, u^{\eps}}|^{2}\right)
 & \leq K\left(1 + h^{2}(\eps)+\left(\frac{\sqrt{\eps}}{\delta}\right)^{2(1-\nu)}
 + h^{2}(\eps)\EE\left(\sup_{t\in[0,T]}\left|Y_t^{\eps, u^{\eps}}\right|^{2q_{g}}\right)\right)\\
&\leq K\left(1+ h^{2}(\eps)+\left(\frac{\sqrt{\eps}}{\delta}\right)^{2(1-\nu)}+  h^{2}(\eps)\EE\left(\sup_{t\in[0,T]}\left|Y_t^{\eps, u^{\eps}}\right|^{2q_{g}}\right) \right)\\
&\leq K\left(1+ h^{2}(\eps)+\left(\frac{\sqrt{\eps}}{\delta}\right)^{2(1-\nu)}+  h(\eps)^{\frac{2}{1-q_{g}}}\right)
 + \frac{1}{2}\EE\left(\sup_{t\in[0,T]}\left|Y_t^{\eps, u^{\eps}}\right|^{2}\right),
\end{align*}
where in the last step we used Young's generalised inequality.
Hence, by rearranging terms we obtain
$$\EE\left(\sup_{t\in[0,T]}  | Y_t^{\eps, u^{\eps}}|^{2}\right)
\leq K\left(1+ \left(\frac{\sqrt{\eps}}{\delta}\right)^{2(1-\nu)}+h(\eps)^{\frac{2}{1-q_{g}}}\right),$$
completing the proof of the lemma.
\end{proof}

\begin{lemma}[Lemma B.3 in~\cite{MS}] \label{L:productBound}
Let Assumption~\ref{C:ergodic1} hold,
$N\in\mathbb{N}$, and $u^\eps = (u_1^\eps, u_2^\eps) \in \mathcal{A}$ such that
\begin{equation*}
    \sup_{\eps > 0} \int_0^T  \lvert u^\eps(s) \rvert^2  \D s < N
\end{equation*}
holds almost surely.
Let $A(x, y)$ and $B(x,y)$ be given functions and $K$, $\theta\in(0,1)$ such that
\begin{equation*}
    |A(x, y)| \le K (1 + \lvert y \rvert^\theta)
    \qquad\text{and} \qquad
    |B(x, y)| \le K (1 + \lvert y \rvert^{2\theta}).
\end{equation*}
Then for $\alpha \in \{1, 2\}$,
\begin{enumerate}[(i)]
\item for any $p\in(1,1/\theta]$, there exists $C>0$ such that
$$
\EE\left(\sup_{t\in[0,T]}\left\lvert \int_{0}^{T} A(X_s^{\eps, u^\eps}, Y_s^{\eps, u^\eps}) u_\alpha^\eps(s) \D s \right\rvert^{2p}
+ \sup_{t\in[0,T]}\left\lvert \int_{0}^{T} B(X_s^{\eps, u^\eps}, Y_s^{\eps, u^\eps})  \D s \right\rvert^{p}\right)\le C;
$$
\item for any $p\in(1,1/\theta]$, there exists $C>0$ such that for fixed $e > 0$ and for all $0 \le t_1 < t_1+e \le T$,
$$
\EE\left(\sup_{\substack{0 \le t_1 < t_2 \le T \\ \lvert t_2 - t_1 \rvert < e}}\left\lvert \int_{t_1}^{t_{2}} A(X_s^{\eps, u^\eps}, Y_s^{\eps, u^\eps}) u_\alpha^\eps(s) \D s \right\rvert^{2p}
+
\sup_{\substack{0 \le t_1 < t_2 \le T \\ \lvert t_2 - t_1 \rvert < e}}\left\lvert \int_{t_1}^{t_{2}} B(X_s^{\eps, u^\eps}, Y_s^{\eps, u^\eps})  \D s \right\rvert^{p}\right)
\le C \lvert e \rvert^{r/\theta-1};
$$
\item for all $\zeta > 0$,
\begin{equation*}
    \lim_{e\downarrow 0} \limsup_{\eps \downarrow 0} \mathbb{P} \left[ \sup_{\substack{0 \le t_1 < t_2 \le T \\ \lvert t_2 - t_1 \rvert < e}} \left\lvert \int_{t_1}^{t_2} A(X_s^{\eps, u^\eps}, Y_s^{\eps, u^\eps}) u_\alpha^\eps(s) \D s \right\rvert > \zeta \right] = 0
\end{equation*}
and
\begin{equation*}
    \lim_{e \downarrow 0} \limsup_{\eps \downarrow 0} \mathbb{P} \left[ \sup_{\substack{0 \le t_1 < t_2 \le T \\ \lvert t_2 - t_1 \rvert < e}} \left\lvert \int_{t_1}^{t_2} B(X_s^{\eps, u^\eps}, Y_s^{\eps, u^\eps})  \D s \right\rvert > \zeta \right] = 0.
\end{equation*}
\end{enumerate}
\end{lemma}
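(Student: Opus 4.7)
The strategy is standard: peel the control off by Cauchy--Schwarz, convert the growth in $y$ into an $L^{2\theta p}$-integral of $Y^{\eps,u^\eps}$, and then invoke Lemma~\ref{L:YIntegralgrowth} after a Jensen/Hölder step that is valid precisely because $p\le 1/\theta$. Throughout, I will use that almost surely $\int_{0}^{T}|u^{\eps}(s)|^{2}\D s < N$.

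For part~(i), I would bound the $A$-term pathwise by
\[
\sup_{t\in[0,T]}\left|\int_{0}^{t} A(X_{s}^{\eps,u^\eps},Y_{s}^{\eps,u^\eps}) u_{\alpha}^{\eps}(s)\D s\right|^{2p}
\le \left(\int_{0}^{T}|A|^{2}\D s\right)^{p}\left(\int_{0}^{T}|u_{\alpha}^{\eps}|^{2}\D s\right)^{p}
\le N^{p}\left(\int_{0}^{T}|A|^{2}\D s\right)^{p},
\]
using Cauchy--Schwarz in the first inequality. The growth hypothesis yields $|A|^{2}\le 2K^{2}(1+|Y_{s}^{\eps,u^\eps}|^{2\theta})$, so by Jensen's inequality applied to the probability measure $\D s/T$,
\[
\left(\int_{0}^{T}|Y_{s}^{\eps,u^\eps}|^{2\theta}\D s\right)^{p}\le T^{p-1}\int_{0}^{T}|Y_{s}^{\eps,u^\eps}|^{2\theta p}\D s.
\]
Since $p\le 1/\theta$, we have $2\theta p\le 2$, so $|Y_{s}^{\eps,u^\eps}|^{2\theta p}\le 1+|Y_{s}^{\eps,u^\eps}|^{2}$, and Lemma~\ref{L:YIntegralgrowth} closes the estimate. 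The $B$-term is handled identically, but with $B$ instead of $A^{2}$ and without the Cauchy--Schwarz step.

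For part~(ii), the same pathwise Cauchy--Schwarz bound, restricted to $[t_{1},t_{2}]$, gives
\[
\sup_{|t_2-t_1|<e}\left|\int_{t_{1}}^{t_{2}} A u_{\alpha}^{\eps}\D s\right|^{2p}
\le N^{p}\sup_{|t_2-t_1|<e}\left(\int_{t_{1}}^{t_{2}}|A|^{2}\D s\right)^{p}.
\]
The key additional step is Hölder's inequality with conjugate exponents $(1/\theta, 1/(1-\theta))$, which is legitimate since $\theta\in(0,1)$:
\[
\int_{t_1}^{t_2}|Y_{s}^{\eps,u^\eps}|^{2\theta}\D s \le e^{1-\theta}\left(\int_{t_1}^{t_2}|Y_{s}^{\eps,u^\eps}|^{2}\D s\right)^{\theta}
\le e^{1-\theta}\left(\int_{0}^{T}|Y_{s}^{\eps,u^\eps}|^{2}\D s\right)^{\theta}.
\]
Raising to the $p$-th power, taking expectations, and using Jensen once more (so $\theta p\le 1$) together with Lemma~\ref{L:YIntegralgrowth} yields a bound of order $e^{(1-\theta)p}$, i.e.\ a positive power of~$e$. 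The $B$-term oscillation is treated the same way.

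Part~(iii) then follows immediately from~(ii) by Markov's inequality:
\[
\PP\Bigl[\sup_{|t_2-t_1|<e}\bigl|\textstyle\int_{t_1}^{t_2} Au_{\alpha}^{\eps}\D s\bigr|>\zeta\Bigr]\le \zeta^{-2p}\,C\,e^{(1-\theta)p},
\]
which tends to $0$ as $e\downarrow 0$, uniformly in $\eps$. The main (mild) obstacle is precisely the exponent bookkeeping: one has to simultaneously keep $p>1$ (so that Jensen is useful) and $p\le 1/\theta$ (so that $2\theta p\le 2$, bringing the $Y$-integrand within reach of Lemma~\ref{L:YIntegralgrowth}). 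Everything else is a routine Cauchy--Schwarz/Hölder chain.
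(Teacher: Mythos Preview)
The paper does not actually prove this lemma: it is stated verbatim as Lemma~B.3 of~\cite{MS} and no argument is given. Your proposal is the standard route---Cauchy--Schwarz to strip off the control, the growth bound to reduce to $\int |Y^{\eps,u^\eps}|^{2\theta}\D s$, then H\"older/Jensen (using $p\le 1/\theta$) to land on Lemma~\ref{L:YIntegralgrowth}---and it is correct; this is precisely the argument in~\cite{MS}. The only point worth flagging is cosmetic: the exponent you obtain in~(ii) is $e^{(1-\theta)p}$, whereas the paper's statement has the (apparently mistyped) $|e|^{r/\theta-1}$ with $r$ undefined; your exponent is the right one and is all that is needed for~(iii).
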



\end{document}